\renewcommand{\cite}[1]{\citep{#1}}
\newlist{romanenum}{enumerate}{1}
\setlist[romanenum]{
    label=(\roman*)., 
    leftmargin=*    
}
\newcommand{\mX}{\mathcal{X}}
\newcommand{\mY}{\mathcal{Y}}
\newcommand{\bbE}{\mathbb{E}}
\newcommand{\bbI}{\mathbb{I}}
\newcommand{\bias}{\text{bias}}
\newcommand{\ebias}{\text{BIAS}}
\newcommand{\mae}{\text{MAE}}
\newcommand{\acc}{\text{acc}}
\newcommand{\eacc}{\text{ACC}}
\newcommand{\fid}{\text{fid}}
\newcommand{\efid}{\text{FID}}
\newcommand{\YSimplex}{\triangle_\mY}
\newcommand{\pref}{p_\textrm{ref}}
\newcommand{\pagg}{p_\textrm{agg}}
\newcommand{\pmarg}{p_\textrm{marg}}
\newcommand{\given}{\mid}
\newcommand\blfootnote[1]{
	\begingroup
	\renewcommand\thefootnote{}\footnote{#1}
	\addtocounter{footnote}{-1}
	\endgroup
}
\definecolor{DarkBlue}{rgb}{0.1,0.1,0.5}
\title{Addressing Discretization-Induced Bias in Demographic Prediction}
\author[1]{Evan Dong}
\author[2]{Aaron Schein}
\author[3]{Yixin Wang}
\author[4]{Nikhil Garg}
\affil[1]{Cornell University, zed5@cornell.edu}
\affil[2]{University of Chicago, schein@uchicago.edu}
\affil[3]{University of Michigan, yixinw@umich.edu}
\affil[4]{Cornell Tech, ngarg@cornell.edu}
\begin{document}

\maketitle

\begin{abstract}
    Racial and other demographic imputation is necessary for many applications, especially in auditing disparities and outreach targeting in political campaigns. The canonical approach is to construct continuous predictions -- e.g., based on name and geography -- and then to \textit{discretize} the predictions by selecting the most likely class (argmax). We study how this practice produces \textit{discretization bias}. In particular, we show that argmax labeling, as used by a prominent commercial voter file vendor to impute race/ethnicity, results in a substantial under-count of African-American voters, e.g., by 28.2\% points in North Carolina. This bias can have substantial implications in downstream tasks that use such labels.
    
    We then introduce a \textit{joint optimization} approach -- and a tractable \textit{data-driven thresholding} heuristic -- that can eliminate this bias, {with negligible individual-level accuracy loss}. Finally, we theoretically analyze discretization bias, show that calibrated continuous models are insufficient to eliminate it, and that an approach such as ours is necessary. Broadly, we warn researchers and practitioners against discretizing continuous demographic predictions without considering downstream consequences.\blfootnote{We thank David Rothschild, Eliza Hong, Erica Chiang, Eugene Kim, Gabriel Agostini, Sidhika Balachandar, Sophie Greenwood, members of the Cornell AI Policy and Practice Initiative, and anonymous reviewers of the ACM Conference on Fairness, Accountability, and Transparency (ACM FAccT) for fruitful conversations and feedback. A version of this paper has been accepted to the ACM FAccT 2024 conference and will appear as an extended abstract.}  
\end{abstract}

\section{Introduction}

Knowing demographic characteristics of individuals, particularly race and ethnicity, is necessary for many important applications, for example: auditing disparities such as in lending and policing \citep{bureau2014using,andrus2021we,chen2019fairness,pierson2020large,edwards2019risk,zhang2018assessing}, making more fair and accurate decisions in voterfile-based polling and turnout campaigns \citep{fraga2016redistricting,fraga2016candidates,fraga2018turnout,grumbach2020race,hersh2015hacking}, and decision-making more broadly \citep{ghosh2021fair,deluca2022validating,grumbach2020race,diamond2019effects}. 

However, individual-level demographic data is not always available, and so researchers and practitioners \textit{impute} (i.e., \textit{predict}) these characteristics using available individual-level data, including in all the above-cited examples. Crucially, both individual-level \textit{and} aggregate distributional prediction errors matter for many downstream tasks: in auditing the effect of voter ID laws on voting, for example, classifying a white \textit{individual} as Black affects disparity estimates, but so does counting \textit{overall} fewer Black potential voters in a geographic area. 

The standard approach is: for $N$ individuals, the user observes features $x_1, \dots, x_N$ (e.g., name, location) but not the desired labels $y_1, \dots, y_N$ (e.g., race/ethnicity). The user has access to a trained predictive model $q(y, x) \approx \Pr(y \mid x)$, for example, Bayesian Improved Surname Geocoding \citep{elliott2009using}, its variants \citep{fiscella2006use,voicu2018using,imai2022addressing,chintalapati2023predicting,greengard2024improved}, or other machine-learning based approaches \citep{jain2022importance, decter2022should,ghosh2021fair,lee2017name}. A downstream task then uses discrete \textit{decisions} $\hat{y_1},\dots,\hat{y}_N$ -- for example, to send targeted voter outreach information to individuals based on their predicted demographics, or to audit whether minoritized individuals are treated disparately.

A large literature thus pursues accurate probability models $q(y, x)$ (\citet{chin2023methods} surveys this literature; see also \citet{greengard2024improved}). Inaccurate models are known to affect downstream tasks \citep{decter2022comparing}; for example, \citet{baines2014fair,zhang2018assessing} observe that inaccurate models lead to an \textit{over-}estimate of disparities that rely on such labels. 

We, in contrast, ask: what is the right way for the user to produce individual discrete labels $\hat y_i$, given access to a model that produces continuous $q(y,x_i)$? What are the downstream consequences of different types of errors caused by the choices? A seemingly obvious choice is to assign for each point the argmax of its predictive distribution---i.e., $$\hat{y}_i \leftarrow \arg\max_y q(y, x_i)$$ for each $i$. Indeed, this approach is adopted by many users, especially practitioners \cite{TargetSmart_2022,fraga2018turnout,fraga2016candidates}. While this decision rule seems unassailable for producing a {single} label, it may lead to what was recently dubbed \textit{argmax bias} \cite{twitterargmax} in the algorithmic fairness literature, when the empirical distribution of decisions overrepresents the most likely label. Intuitively, if a group overall composes a small fraction of the population, even a Bayesian optimal predictive model may rarely assign plurality probability for that group, even for individuals of that group. More generally, as we show, discretization methods have different individual and aggregate error properties, and so one's discretization method should depend on the downstream task.

Our empirical context centers on administrative voter records data, (known as \textit{voterfile data}) in the United States, which is widely used across academia and industry \citep{ansolabehere2011gender,barber2022400,ghitza2020voter,fraga2016redistricting,fraga2018turnout}. Such data compiles the information collected by state voter registries and includes demographic information of registered voters, such as race or gender. Some state registries collect and make available demographic information that is self-reported by voters, however many do not. As a result, campaigns and academics frequently use {imputed} race/ethnicity in tasks such as survey targeting, get-out-the-vote messaging, and measuring disproportionate impacts of voter roll purges \cite{hersh2015hacking}. We analyze imputed labels in a widely-used commercial voter file from TargetSmart\footnote{We received a research license to this data from PredictWise, a campaign analytics firm.} -- which essentially uses argmax -- as well as the North Carolina voter registry directly.

We find that the algorithmically imputed race/ethnicity of registered voters is substantially skewed white: e.g., in North Carolina, the algorithmically labeled fraction of \textit{African-American} voters in the voter file is 28.2\% less than the fraction of registered voters who self-report as \textit{African-American}. This bias is due to both continuous model miscalibration (16.3\% points) and argmax bias (11.9\% points). This pattern extends to every state -- argmax discretization \textit{undercounts} \textit{voters of color in 48 out of 50 states} in comparison to the fraction implied by the continuous scores $q(y, x_i)$. Other discretization rules have their own error patterns. The commonly used \textit{threshold} rule (only classifying individuals for whom $q(y, x_i) \geq t$, e.g., for threshold $t = 80\%$) \citep{bureau2014using,adjaye2014using,chen2019fairness,diamond2019effects,pierson2020large} 

exacerbates this bias, especially geographic skews, while leaving many points unclassified. \textit{Randomizing} decisions ($i$ is labeled as $y$ with probability $q(y, x_i)$) correctly matches the predictive distribution, but is relatively inaccurate in individual labels. 

Next, we introduce \textit{optimization-based}  approaches to generate discrete labels, that construct an integer program that outputs a label for each data point $i$, while balancing multiple error objectives. We primarily consider two objectives: (a) individual data point-level \textit{accuracy} (labeling each data point according to its true class) and (b) overall population-level \textit{distributional fidelity}: where the marginal class distribution of decisions $\hat y_i$ should be close to some desired distribution (e.g., the marginal distribution implied by the model $q$, or, when available, external data on the true marginal distribution). For example, if two data points each have a probability 50\% for each of the two classes, it may {deterministically} assign them different labels so that the overall class balance of decisions is even. Empirically, we find that our approach can eliminate discretization-induced bias, \textit{even without using any information beyond model predictions $q(y, x)$}, with a negligible loss in individual-level accuracy. We further develop an efficient heuristic that can achieve similar performance, that can be interpreted as a \textit{data-driven thresholding} approach, informed by the optimization solutions on a small subset of the data.   

Finally, we theoretically characterize \textit{decision-making} (discretization) bias, as distinct from \textit{predictive modeling bias}: how does amplification of the most likely class depend on the properties of the discretization procedure and the continuous scores, respectively? In \Cref{thm:informationargmax}, we show that argmax bias emerges even when the continuous predictive model is itself \textit{unbiased} (calibrated), and is tightly connected to \textit{predictive uncertainty}. In \Cref{thm:pareto}, we analyze \textit{decision-making} approaches to reduce bias: how should discrete decisions $\hat y_i$ be made from a Bayes optimal model of continuous probabilities $q (y, x_i)$? We show that a joint optimization approach such as ours is \textit{necessary} for optimal discretization in the presence of a distributional objective and predictive uncertainty.

Putting things together, we caution against the use of existing discrete labels, as commonly distributed in commercial voter files, for sensitive applications where such bias would affect results. We further answer the question: what should a researcher or practitioner do to prevent discretization-induced bias?

\begin{enumerate}[label=(\arabic*),leftmargin=2em]
\item When possible, use the \textit{continuous} scores instead of single discrete labels, as in \cite{chen2019fairness,mccartan2023estimating,deluca2022validating,kallus2022assessing,fraga2023reversion,bureau2014using}. For example, some auditing tasks can be conducted with sufficiently calibrated scores. However, for some tasks, such as individual-level outreach decisions,  individual-level discrete decisions might either be required or be easier to integrate into practitioner analysis pipelines.

    \item Improve the continuous model $q$ when possible -- e.g., by acquiring more data or changing the modeling approach. This is the predominant approach pursued in the demographic prediction literature \cite{chin2023methods}. However, this approach is {insufficient}. As we make precise in \Cref{thm:informationargmax}, improving continuous predictive model accuracy reduces the bias; {however}, even unbiased Bayes optimal predictors induce argmax bias, unless predictions have zero error (can \textit{perfectly} identify the true label for each data point). 

    \item When discretizing, consider the downstream desiderata -- e.g., is accuracy among the labeled set most important, or label representativeness? -- and evaluate how label error may affect results. For example, \citet{fraga2016candidates,fraga2018turnout} count voters by imputed race and account for variance induced by the imputation process; they further discuss that bias as we measure here would lead to conservative estimates of differences in voter turnout induced by changes such as redistricting, though do not empirically quantify the potential effect on the results.
    
    \item Finally, the optimization-based approaches developed here can flexibly discretize labels that have the error properties that best match the downstream desiderata. For example, we illustrate approaches that maximize accuracy while matching group distributions both overall and per geographic subarea. 
    
\end{enumerate}

\subsection{Methods and data summary}
\label{sec:methodssummary}
We briefly summarize our primary methods and data, with additional details in the model and methods sections. Consider a trained model (such as BISG) that outputs a vector of label probabilities $q(x_i)$ for each data point $x_i$. A (potentially randomized) \textit{decision (discretization) rule} $D: \YSimplex^N \to \mY^N$ is a rule that, given a set of probability vectors $\{q(x_i)\}_{i = 1}^N$ associated with $N$ data points, assigns each data point $i$ a label $\hat y_i \in \mY$.

\subsubsection{Decision desiderata} There are two high-level {desiderata}: (1) \textbf{decision accuracy}, i.e., making the correct decision \textit{for each data point}; (2) making sure that the \textit{distribution} of assigned labels $\{\hat y_i\}$ matches some desired distribution, a notion that we will capture both via (a) the \textbf{bias} for each class $y$ and (b) full \textbf{distributional fidelity}. For example, we may want to assign a correct demographic label for each potential voter (for more effective personalized get-out-the-vote campaigns) but also have the overall label distribution match the demographic makeup of the state so that we're not under-counting minoritized groups when conducting overall analyses. 

\textbf{(1)} To measure of \textbf{decision accuracy}, we use (one minus the) 0-1 loss, i.e., given true labels $y_i$ and decisions $\hat y_i$: $\acc(y_{1:N}, \hat y_{1:N}) = \frac1N\sum_i^N\mathbb{I}\left[\hat y_i = y_i\right].$ 

\textbf{(2a)} To measure \textbf{bias}, we use the distance between the marginal distribution of labels and a \textit{reference} distribution $p_{\textrm{ref}}$. Let $\hat p_{\textrm{marg}}(y)$ denote the fraction of datapoints with the label $y$:
\[
\hat p_{\textrm{marg}}(y, \hat y_{1:N}) = \frac{1}{N}\sum_{i=1}^N \mathbf{1}[\hat{y}_{i} = y].
\]
Then, {bias} is how much a class $y$ is amplified by labels relative to the reference:
\begin{align*}
	\bias(y,  \hat y_{1:N}, \pref) \triangleq \hat p_{\textrm{marg}}(y, \hat y_{1:N}) - \pref(y).
\end{align*}

\textbf{(2b)} As a summary statistic, we will further consider \textbf{distributional fidelity}, the negative sum of the absolute value of bias across classes (equivalently, the negative of the $\ell_1$ distance between the label and reference distributions).
\[\fid(p_{\textrm{ref}}, \hat y_{1:N}) \triangleq  -{\sum_{y \in \mY} \Big|\bias(y, \hat y_{1:N}, \pref)}\Big|.\]
\vspace{.25em}

\paragraph{Aggregate posterior reference distribution} What is an appropriate reference distribution $\pref$? Of course, if we knew the \textit{true} distribution of labels (e.g., the true fraction of each group in the voter file), we could measure and optimize fidelity with respect to it. However, this information is not often known. In this paper, we will thus primarily compare to the \textbf{aggregate posterior}: for a given set $\{x_i\}$, what would the decision distribution be if we could make \textit{continuous} decisions corresponding to the continuous classifier $q$,
\begin{align*}
	p^q_{\textrm{agg}}(y, \{x_i\}) = \frac1N\sum_i q(y, x_i).
\end{align*}
When $q$ is Bayes optimal, this choice counts the ``correct'' number of decisions for each class if one was not forced to make discrete decisions, and approaches the true distribution $\Pr(y)$ as $N \to \infty$. 
This approach thus (1) isolates bias due to the discretization process as opposed to miscalibration in the continuous classifier $q$; (2) can be optimized for and empirically measured, given predictions from $q$; i.e., this approach requires no more information than the argmax rule. 

Notably, one can also calculate conditional aggregate posteriors -- for example, the fraction of labels for each group \textit{within each county}, if we observe such geographic information for each data point.

\subsubsection{Overview of decision rules}

\begin{figure}[tbh!]
	\centering
	\begin{subfigure}{.45\textwidth}
		\centering
		\includegraphics[width=1.16\textwidth]{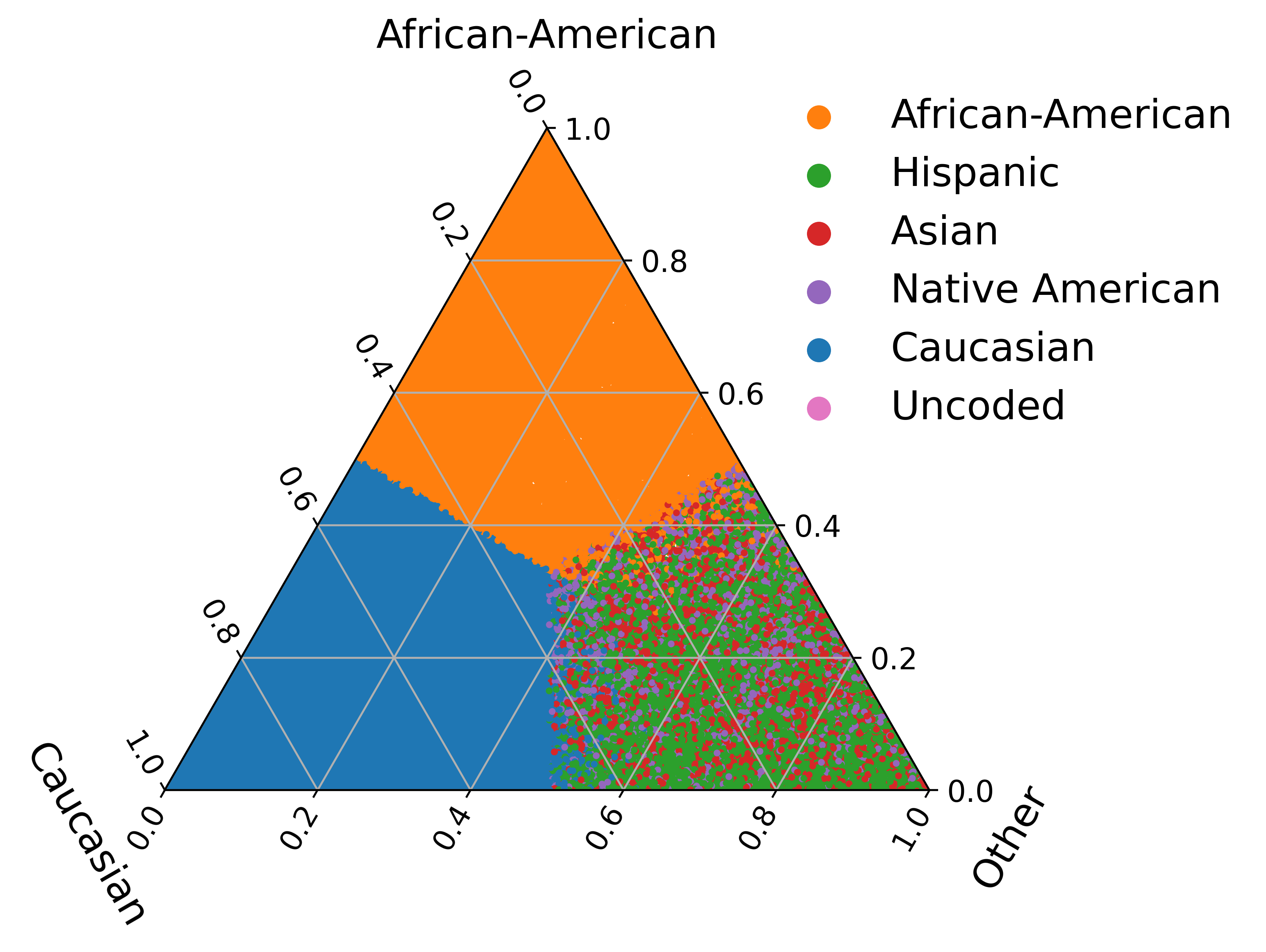}
		\caption{Argmax}
		\label{fig:simplex-argmax}
	\end{subfigure}
	\hfill
	\begin{subfigure}{.45\textwidth}
		\centering
		\includegraphics[width=\textwidth]{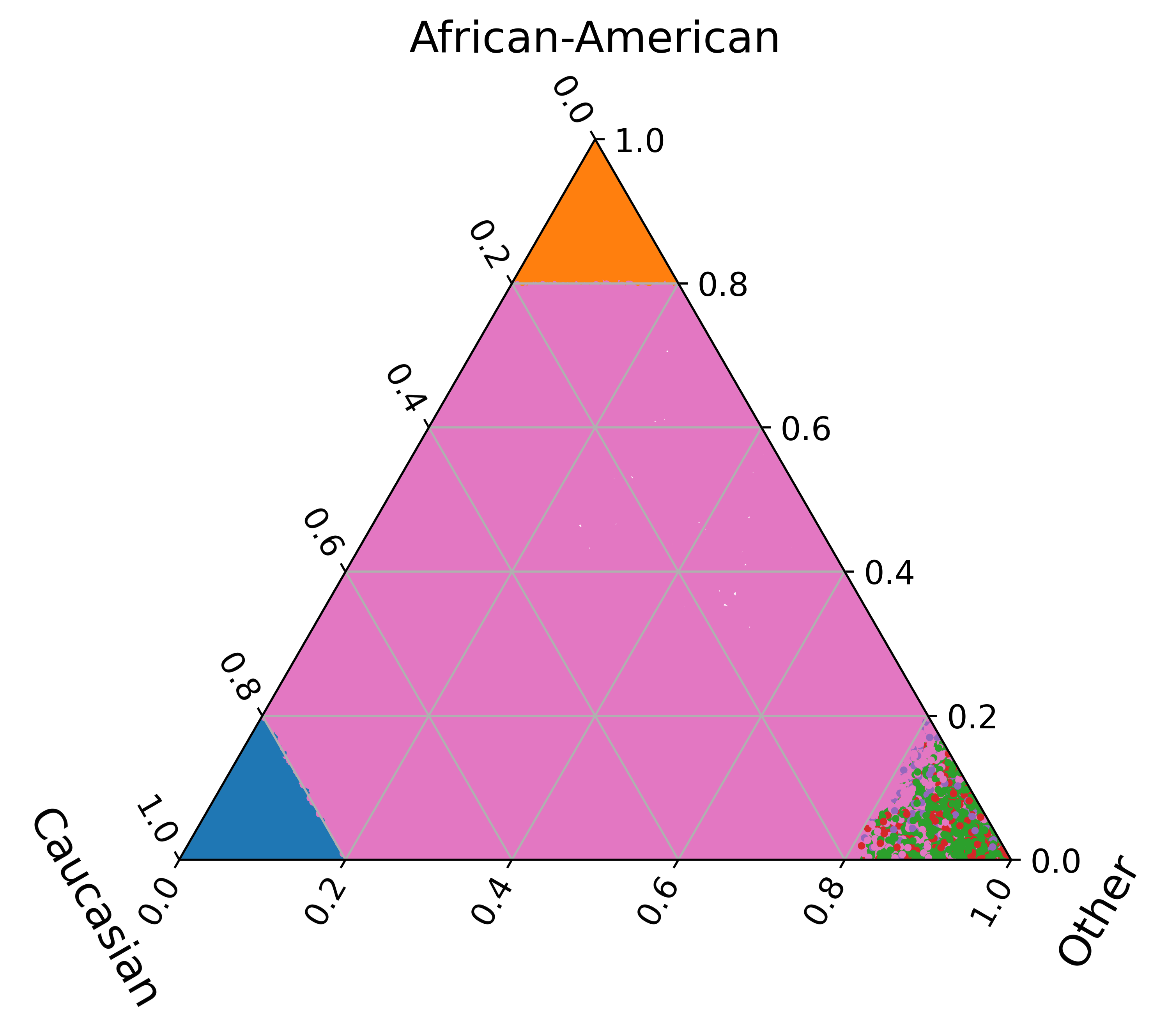}
		\caption{Threshold at probability $0.8$
		}
		\label{fig:simplex-threshold}
	\end{subfigure}    
	
	\begin{subfigure}{.45\textwidth}
		\centering
		\includegraphics[width=\textwidth]{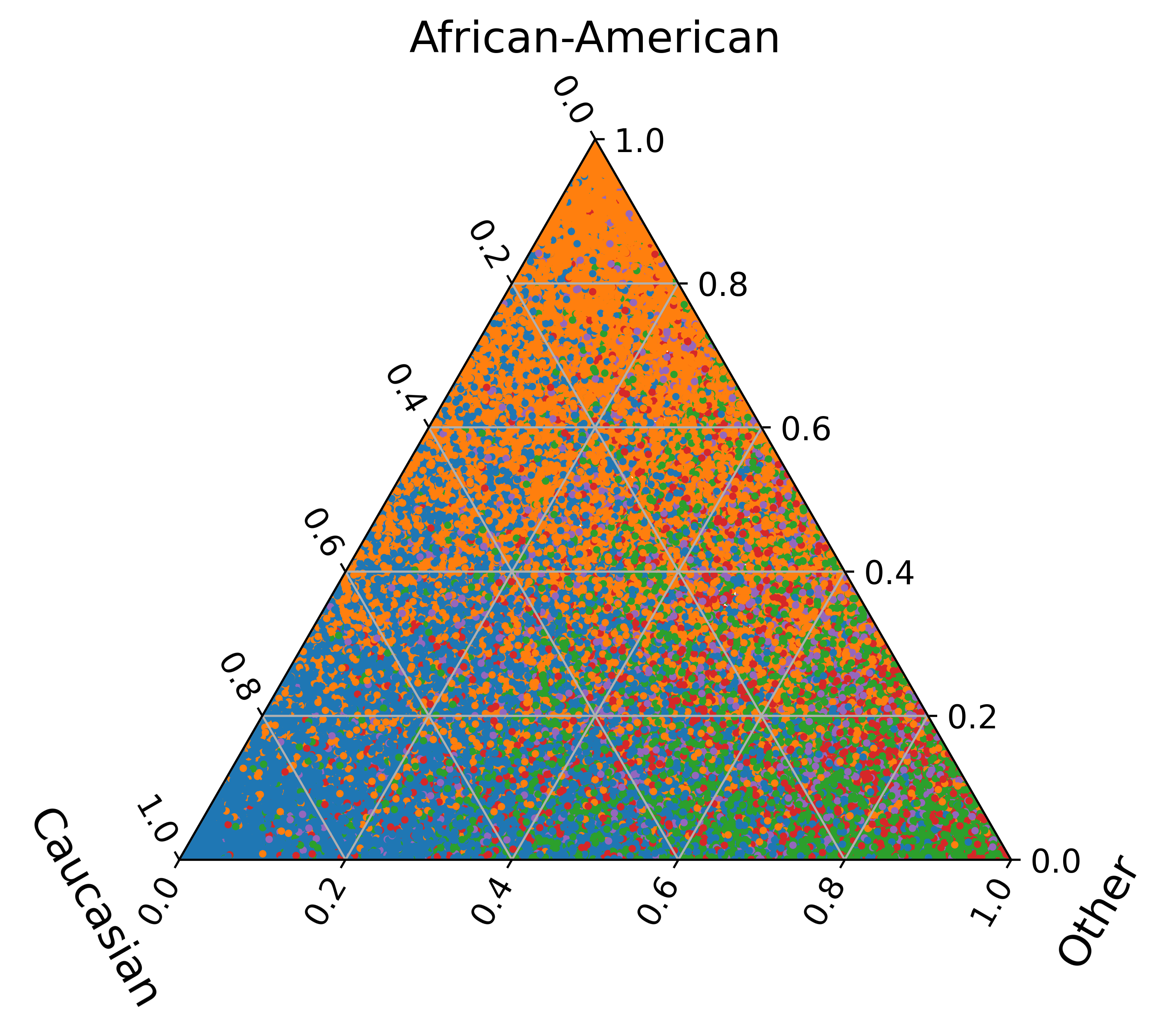}
		\caption{Thompson Sampling
		}
		\label{fig:simplex-thompson}
	\end{subfigure}
	\hfill
	\begin{subfigure}{.45\textwidth}
		\centering
		\includegraphics[width=\textwidth]{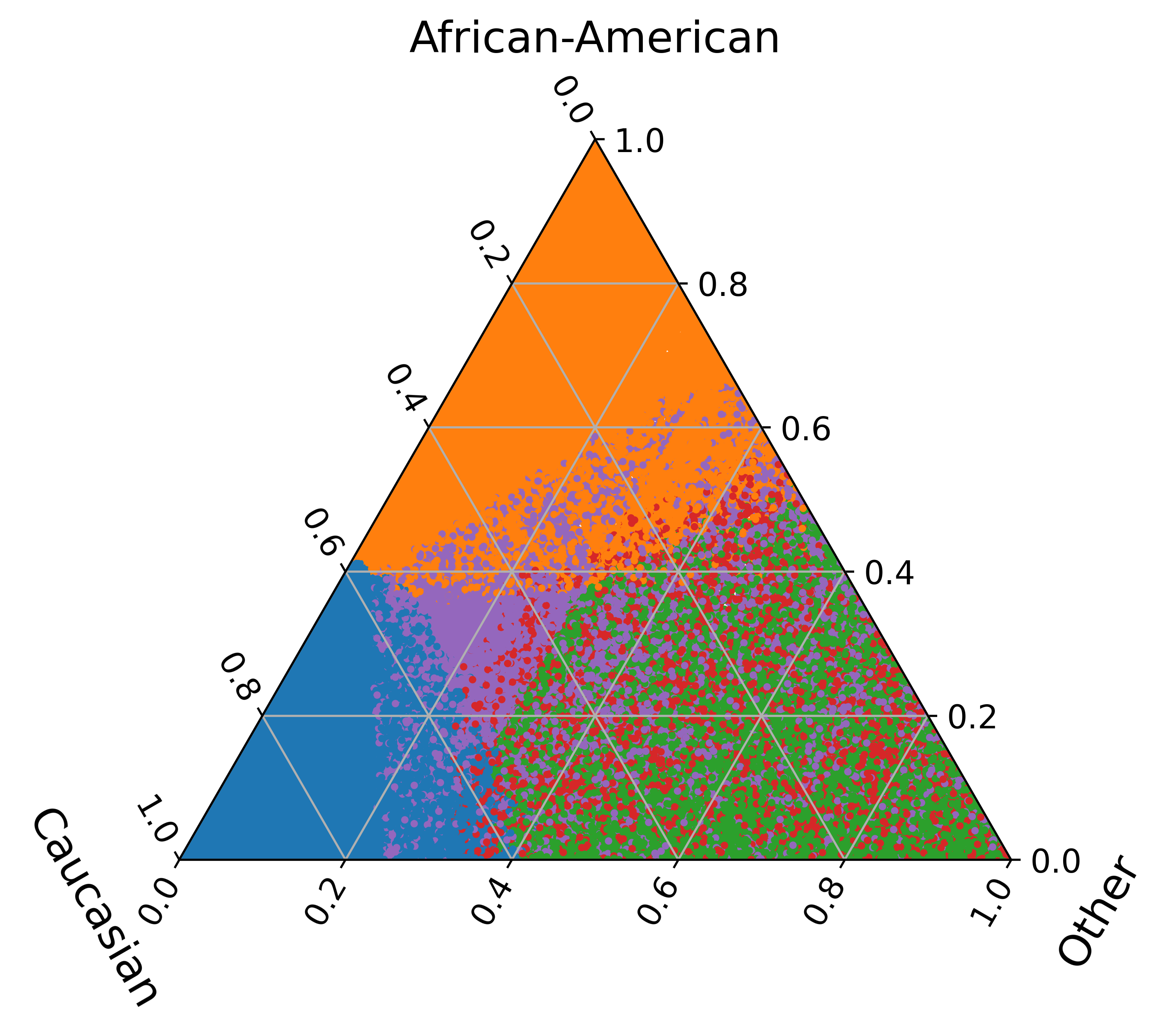}
		\caption{Aggregate Posterior matching
		}
		\label{fig:simplex-matching}
	\end{subfigure}    
	
	\caption{Comparison of different discretization methods. Each subfigure shows a 3-dimensional probability simplex, where individual points are colored according to the label assigned by the corresponding method. For example, in (a), all the blue points are assigned the \textit{Caucasian} label, which has the highest class probability according to the continuous model $q$ for that data point. We use a sample of points from the voter file, and \textit{Hispanic}, \textit{Asian}, and \textit{Native American} probabilities are aggregated into \textit{Other}. Our approach in \Cref{fig:simplex-matching} (posterior matching) matches the class distribution while maintaining individual data point level accuracy.}
	\label{fig:simplex-main}
\end{figure}

Several rules are used in academia and in practice:

\begin{description}
\item [Argmax] rule assigns, for each data point, the most likely\footnote{Suppose ties are broken according to some consistent ordering over the classes. In the event of ties, we will use $\arg\max_y q(y, x)$ to denote the \textit{single} class that is in the argmax set and is first in the consistent ordering among classes in the argmax set.} class: $D_{\textrm{argmax}}(\{q(x_i\}) \triangleq \{\hat y_i = \arg\max_y q(y, x_i)\}$.

\item [Threshold at $t$] rule assigns a label only if the argmax class has a probability of at least $t$, i.e., $\hat y_i = \arg\max_y q(y, x_i)$ if $\max_y q(y, x_i) \geq t$, otherwise $i$ is left uncoded. 

\item [Thompson sampling] for each data point $i$, \textit{samples} a class $y$ based on the probabilities $q(y, x_i)$, i.e., $\hat y_i = y$ with probability $q(y, x_i)$, and so in expectation will have labels matching the aggregate posterior. 

\end{description}

These rules are all \textit{independent}: the assigned label for a data point $i$ depends on $q(x_i)$ but not on other data points or their probability vectors $\{q(x_j)\}_{j \neq i}$.

In contrast, our proposed approach also includes {joint} decision optimization rules, which depend on the entire set of data points, as a solution to discretization bias.

\begin{description}
	\item [Integer optimization] rules directly optimize given objectives. We consider the rule that corresponds to solving the following optimization problem, for a given $\gamma$, data points $\{x_i\}$, and reference $p_{\textrm{ref}}$:
	\begin{equation}
		D^\gamma(\{q(x_i)\}, p_{\textrm{ref}}) = {\arg\max}_{\{\hat y_i\}} \gamma \left(\frac{1}{N}\sum^N_{i=1} q(\hat y_i, x_i)\right) + (1-\gamma) \fid(p_{\textrm{ref}},\hat y_{1:N}). \label{eq:optdecisionrules}		
	\end{equation}
	The rule balances, parameterized by $\gamma \in [0, 1]$, row-level scores (accuracy as measured by $q$) and distributional fidelity.
	
	\item [Aggregate posterior matching] A special case of the joint optimization decision rules defined in \Cref{eq:optdecisionrules} is as $\gamma \to 0$, leading to a rule that maximizes accuracy subject to the constraint that distributional fidelity be as high as possible. We call such rules \textit{matching} solutions, e.g., \textit{aggregate posterior matching}. For our results, we use aggregate posteriors calculated at multiple geographic levels: nationally, throughout a state, or for each County within the state.
	
	\item [Data-driven thresholding heuristic] One downside of the optimization based approaches is that they may be computationally expensive for large datasets (requiring an integer constrained variable for each data point). Thus, we develop and evaluate the following heuristic: solve the integer optimization for a tractable batch of data. Then, on that batch, train a machine learning model with features being the scores $q(y, x_i)$ and the labels being the optimization outputs $\hat y_i$. Finally, apply this model to the rest of the dataset. Once trained, this approach is as simple as existing thresholding or argmax approaches, and so we refer to it a \textit{data-driven thresholding heuristic} (with vector thresholds implicitly defined by the machine learning model). In this work, we train this approach to approximate the aggregate posterior matching solution and integer programs.

\end{description}

\Cref{fig:simplex-main} illustrates, for several decision rules, labels as a function of the class probabilities. Intuitively, like Thompson sampling, aggregate posterior matching assigns a proportional number of data points to each class, which may require individual points to not be labeled by its most likely class. However, its labels are more in line with argmax labeling than is Thompson sampling. 

\subsubsection{Data and code} We evaluate a widely used commercial data file that provides both predicted continuous race/ethnic probabilities and a single discrete generated label for each individual. The voter file provides continuous probabilities for people along $K=5$ predicted race/ethnicity categories: \textit{African-American}, \textit{Asian}, \textit{Caucasian}, \textit{Hispanic}, and \textit{Native American}.\footnote{These categories are imperfect, do not disambiguate race from ethnicity, are socially constructed, and there is large uncaptured granular variation; to the extent that demographic-based analyses are important, we support the collection of more granular, self-reported data \cite{movva2023coarse,hanna2020towards, read2021disaggregating, kauh2021critical, wang2022towards}. We note that the voter file data treats \textit{Hispanic} as mutually exclusive with racial categories, and furthermore that racial self-identification as Native American differ from legal tribal citizenship.} The labels also include an \textit{Uncoded} category. Crucially, according to both the data dictionary and verified by us, the discrete labels are derived using the \textit{argmax} rule, except for the \textit{Uncoded} category.\footnote{The data dictionary states that an individual is marked as \textit{Uncoded} when the model is not confident enough to make a prediction. From our analyses, it seems that the process of an individual being \textit{Uncoded} cannot be simply described or reverse engineered as a threshold based rule; some uncoded rows have individual class probabilities of over 90 percent. We believe that validation from external data sources may play a role in an individual being \textit{Uncoded}.} 

In our main text analysis, we relabel these \textit{Uncoded} points via the argmax rule.\footnote{We replicate all results in \Cref{sec:dropuncoded} when we instead exclude these voters. As \textit{Uncoded} data points are disproportionately predicted in probability to be voters of color, excluding these makes the population \textit{less} diverse, \textit{increasing} measured bias; results are qualitatively identical.} 

We analyze the universe of available predictions and state voter files ($N=\num{261547234}$) across all 50 states and the District of Columbia. Furthermore, some states (such as North Carolina) additionally make available each voter's \textit{self-identified} race/ethnicity, as filled in on the voter registration form. Our data includes this information where available alongside the algorithmically generated predictions and discrete labels; in some of our analyses, we use self-report data as \textit{ground truth labels}\footnote{Though we refer to these self-report labels as ground truth, we note that prior work has found that individuals may not be consistent in their reporting over time \cite{fraga2016candidates}.} to measure true accuracy and to calculate a ground truth reference distribution $\pref(y)$. For these analyses in the main text, we filter our dataset of model predictions to include North Carolina only. This analysis contains $N=\num{6374636}$ individuals, after excluding the $\num{2312356}$ individuals in NC without self-reported race and the $\num{138541}$ voters who self-report race as \textit{Other}.\footnote{Of course, excluding such individuals from consequential analyses could introduce further bias, speaking to the fraught nature of demographic imputation and coarse racial/ethnic data.} In the Appendix, we replicate our primary analyses using fully public data and predictive modeling methods from \citet{greengard2024improved}, as well as for other states in the commercial voter file. Our code is available at \href{https://github.com/evan-dong/demographic-prediction-argmax-bias}{https://github.com/evan-dong/demographic-prediction-argmax-bias}. 

\section{Empirical Results}
\label{sec:empirics}

\begin{figure}[tb]
	\centering
	\begin{subfigure}{.3\textwidth}
		\centering
		\includegraphics[width=.95\linewidth]{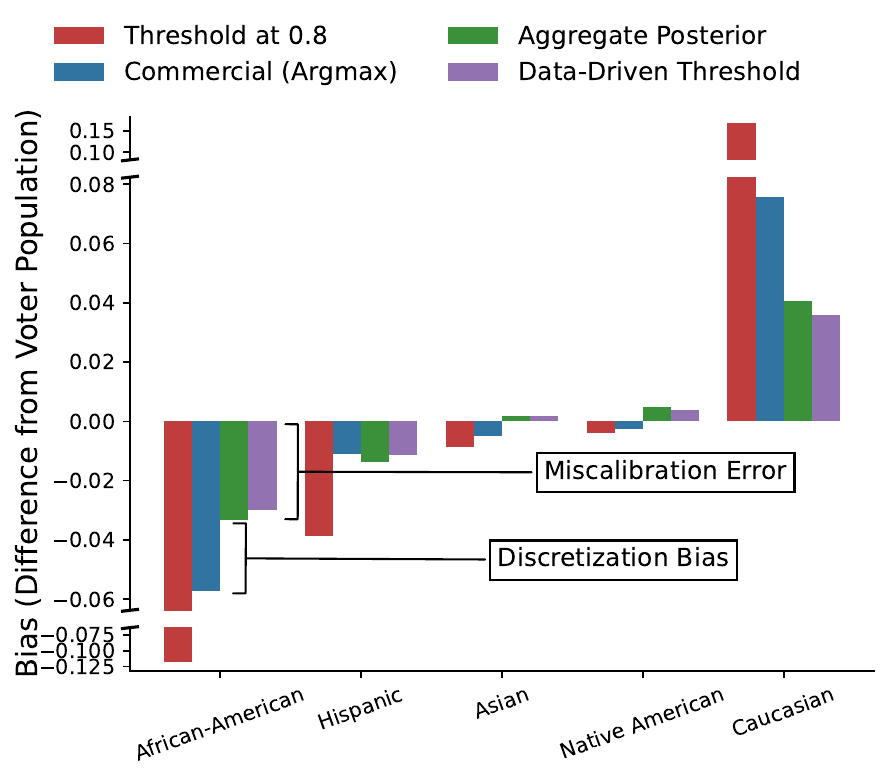}
		\caption{Differences in predicted and self-reported counts in NC.}
		\label{fig:votercounts}
	\end{subfigure}
	\hfill
	\begin{subfigure}{.3\textwidth}
		\centering
		\includegraphics[width=.95\linewidth]{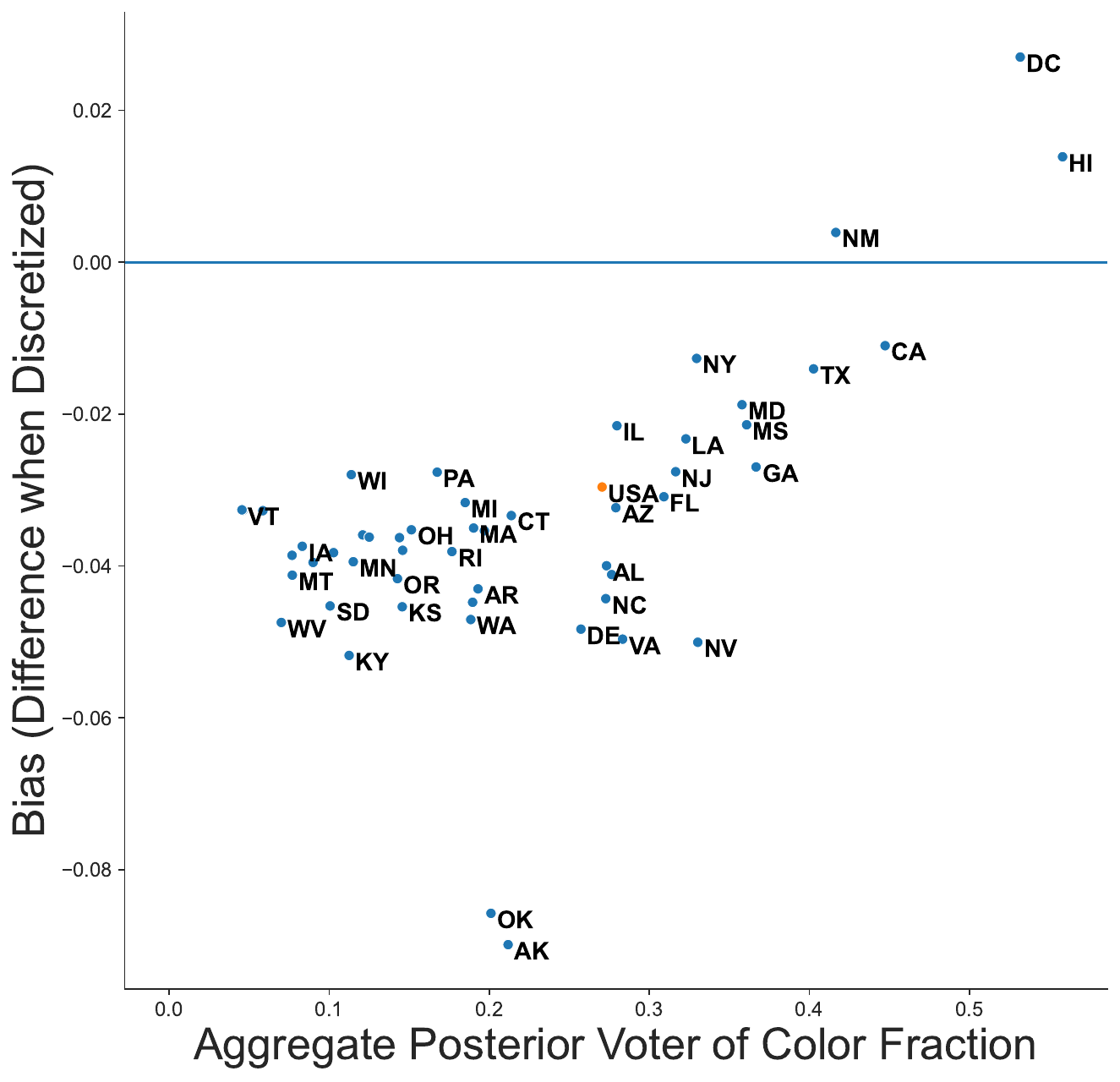}
		\caption{Per-state discretization bias.}
		\label{fig:pocrep}
	\end{subfigure}
	\hfill
\begin{subfigure}{.3\textwidth}
	\centering
	\includegraphics[width=.95\linewidth]{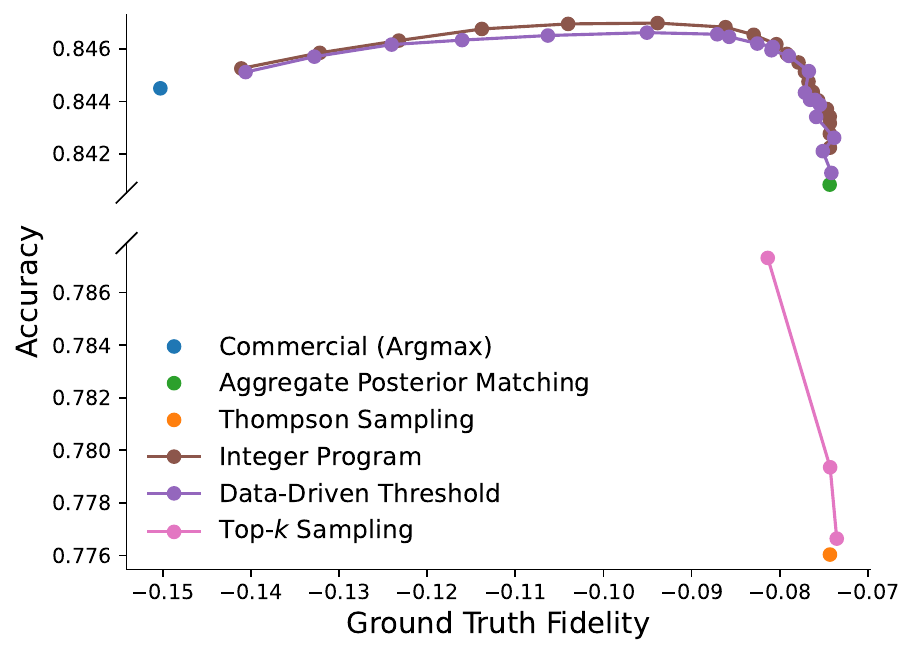}
	\caption{Accuracy versus Fidelity of rules in NC.}
	\label{fig:pareto-NC}
\end{subfigure}
	\caption{Bias (undercounting of voters of color) in the voter file. (a) In North Carolina where ground truth self-reported data is available, the difference in counts for each group between each discretization method and the ground truth. The argmax method substantially undercounts African Americans in particular, with threshold methods further magnifying such bias. The bar marked ``Aggregate Posterior'' corresponds to the bias of both Thompson Sampling and Aggregate  method; as these methods directly reflect proportions from the continuous model, its undercounting is due to the model's miscalibration. The additional bias of argmax is thus the bias caused by discretizing the model scores. (b) In all states plus Washington, DC, the discretization bias (difference between the fraction of voters of color in the discrete labels and the aggregation posterior fraction). Points below the horizontal line indicate a comparative underrepresentation of voters of color compared to the aggregate posterior -- only in DC, Hawaii, and New Mexico does the discretization lead to an increase of the count of voters of color. Note that in DC and Hawaii, Caucasian is not the most common class: African Americans (DC) and Asian (HI) are, respectively, and these classes are over-represented by argmax. New Mexico (NM) is the one exception where the argmax decision rule under-represents the most common class compared to the aggregate posterior. In (b), some similarly clustered states are left unlabeled for visual clarity, and the overall effect in full voter file is marked in orange. In (c), we plot the performance of different decision rules according to our two metrics. Notice that our optimization-based rules Pareto-dominate the sampling approach, and that the data-driven threshold approximates the full curve of integer programs quite closely. }
\label{fig:voterbias}
\end{figure}

\begin{table}[tb]
	\begin{center}
		\begin{tabular}{|c|c|c|c|}
			\hline
			& \text{Accuracy with} & \text{Fidelity to} & \text{Fidelity to} \\
			& \text{Ground Truth} & \text{Ground Truth} & \text{Aggregate Posterior} \\
			\hline\hline
            Threshold at 0.8 ($67.9\%$ not dropped) & \textbf{0.928} & -0.352 & -0.293 \\ 
            Integer Program, $\gamma = 0.9$ & 0.846 & -0.079 & -0.019 \\ 
            Commercial (Argmax) & 0.844 & -0.150 & -0.091 \\ 
            Aggregate Posterior Matching & 0.841 & \textbf{-0.074} & \textbf{-0.000} \\ 
            Data-Driven Threshold Matching Heuristic& 0.841 & -0.074 & -0.002 \\ 
            \text{County-Conditional Aggregate Posterior Matching} & 0.834 & -0.074 & -0.000 \\   
            Thompson Sampling & 0.776 & -0.074 & -0.000 \\ 
            \hline
        \end{tabular}
		\caption{In North Carolina, the performance of different decision-making discretization rules. Fidelity is compared both to the true marginal distribution (i.e., the self-reported voter population) and the aggregate posterior; differences between the two reflect probabilistic score miscalibration. Both sampling-based approaches are \textit{strictly dominated} by joint decision-making rules, which achieve equal or better fidelity with substantially better accuracy. Furthermore, both matching approaches substantially improve on distributional fidelity (to either ground truth or aggregate posterior) than the argmax rule, with negligible loss in accuracy. Note that the aggregate posterior matching approach in particular uses no more data than the argmax approach. Metrics for the threshold rule are calculated only on labeled points, dropping $32.1\%$ of the dataset.}
		\label{tab:table-NC} 
	\end{center}
\end{table}

\begin{figure}[!ht]
	\centering
    \begin{subfigure}{.45\textwidth}
        \centering
        \includegraphics[width=1.25\linewidth]{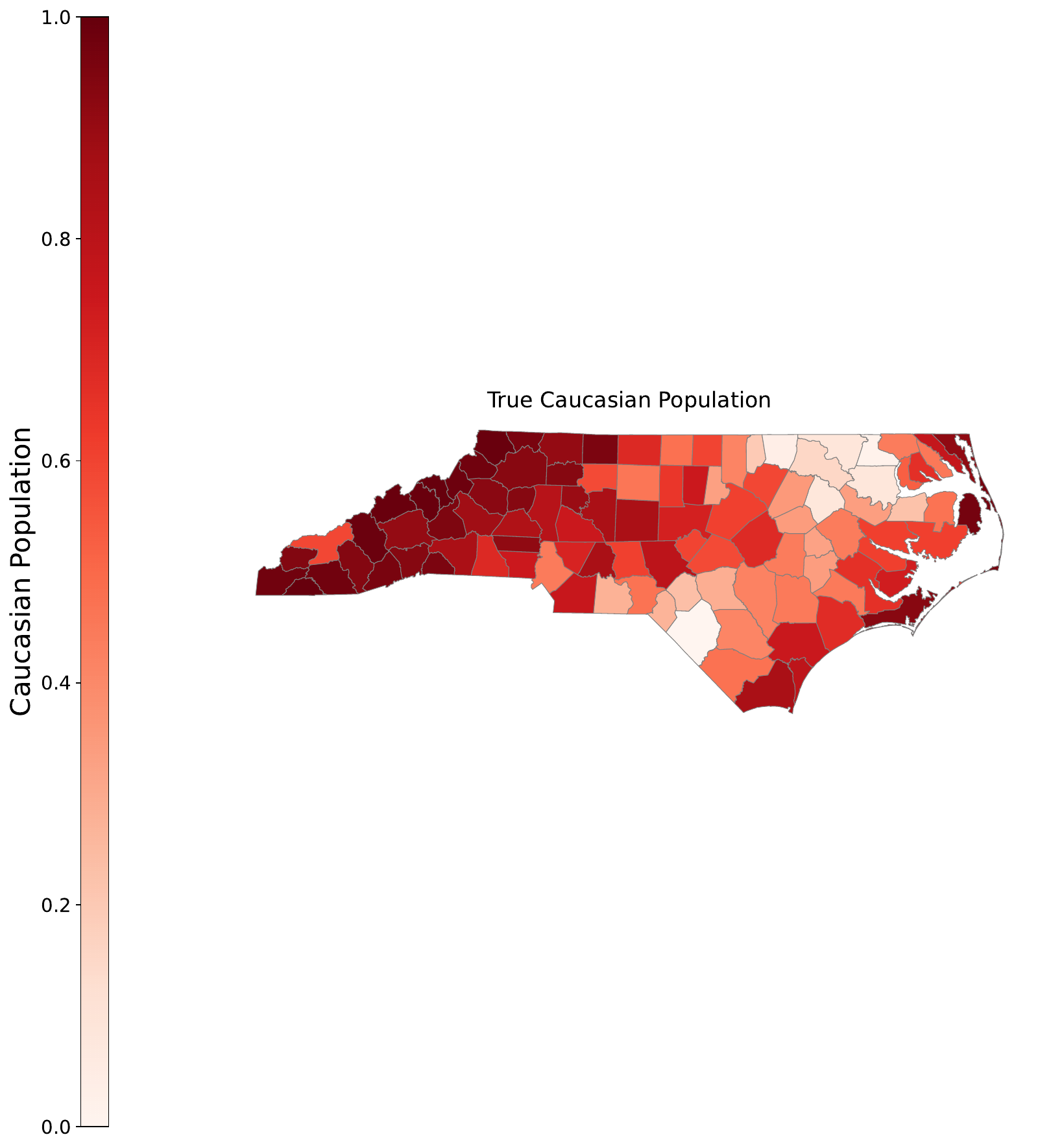}
        \caption{Voter file population.}
        \label{fig:popmap}
    \end{subfigure}	
    \hfill    
    \begin{subfigure}{.45\textwidth}
        \centering
    \includegraphics[width=.85\linewidth]{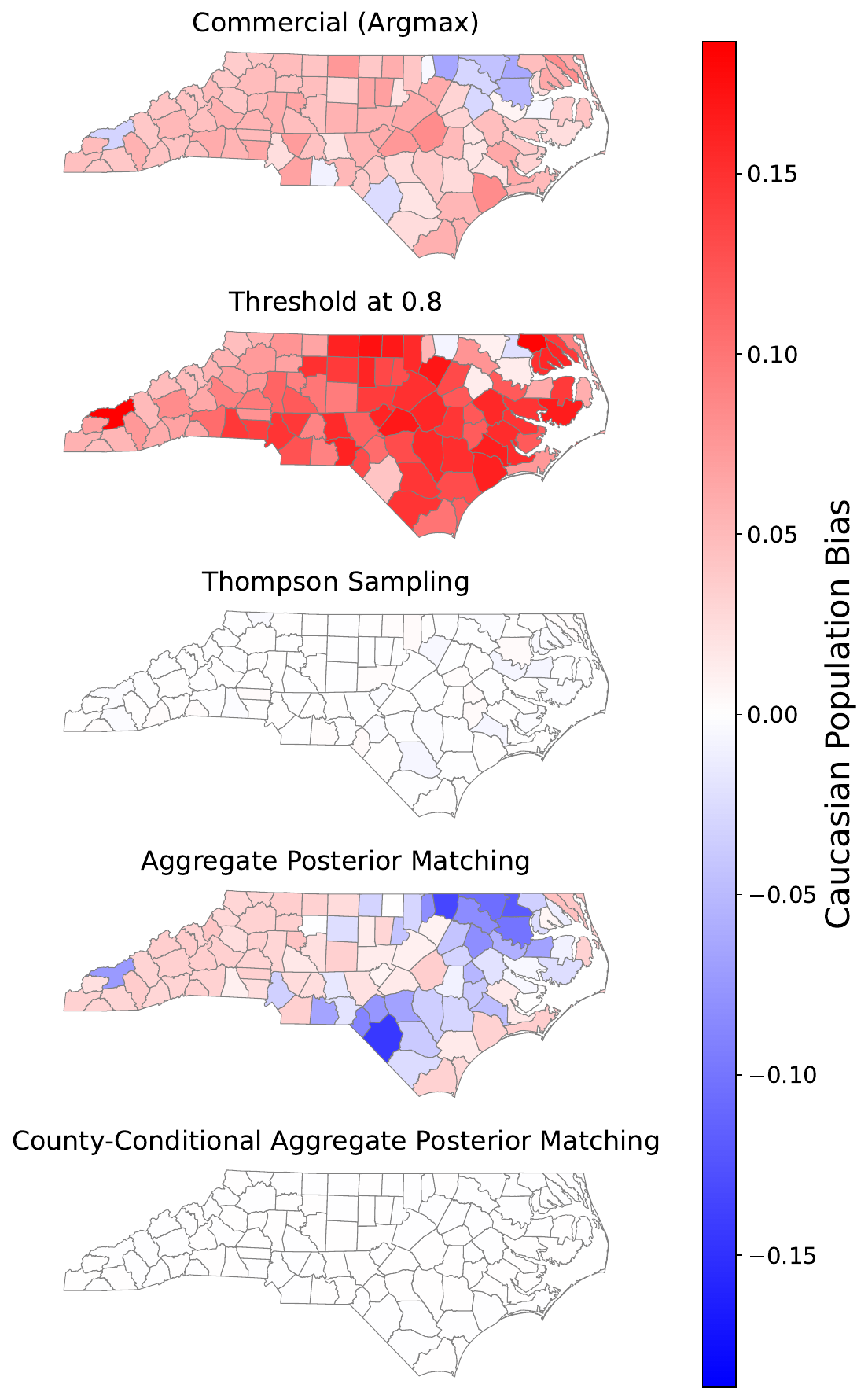}
	\caption{Bias compared to per-county aggregate posterior. }
	\label{fig:biasmap}
    \end{subfigure}
	\caption{Per-county Caucasian bias in North Carolina, under different discretization methods -- the underrepresentation of voters of color (i.e., bias overrepresenting the white population) when compared to the aggregate posterior. The most commonly used methods of thresholding and argmax further cause \textit{geographic} skews -- in many parts of the state, very few rows are classified as non-Caucasian, and the bias is largest in counties with an already skewed population. In contrast, Thompson sampling and County-conditional aggregate posterior matching have no bias when compared to the aggregate posterior.}
	\label{fig:maps}
\end{figure}

\paragraph{Discretization bias and Voter of Color Representation}
Argmax labeling substantially undercounts voters of color. \Cref{fig:votercounts} shows -- for North Carolina voters for whom self-identified labels are available -- the bias (relative count) of each decision rule compared to this ground truth. \textit{Every group except} \textit{Caucasian} is under-counted by the argmax rule, and this under-counting is substantial. The predicted population differs by over four hundred thousand \textit{African-American} voters out of a self-reported population count of $\num{1447516}$; i.e., the discrete labels count over 28\% fewer African Americans than self-reported figures. More precisely, compared to ground truth population distribution for each group, the labeled distribution has fewer \textit{African-American} ($-28.2\%$), \textit{Asian} ($-20.1\%$), \textit{Hispanic} ($-14.9\%$), and \textit{Native American} ($-32.8\%$) voters while inflating the \textit{Caucasian} population by $+10.1\%$. Furthermore, a large source of this undercounting is due to the \textit{discretization} rule (argmax) in addition to miscalibration in the continuous classifier. As the predictions from Thompson Sampling and Aggregate Posterior Matching avoid this discretization bias, our framework allows a comparison of the two sources: the continuous classifier's miscalibration causes $16.3\%$ of the undercounting, with the remaining $11.9\%$ caused by argmax discretization. (The classifier calibration curves using NC data are given in Appendix \Cref{fig:calibration-nc}). 

These patterns extend to the other states. 
Since we do not have \textit{self-identified ground truth} labels in every state, \Cref{fig:pocrep} compares the fraction of voters labeled as a person of color\footnote{While defining the boundaries of race and whiteness is always contentious, given the data at hand and these commercially used labels, our operational definition of voters of color includes \textit{Hispanic} voters of any race.} (any label except \textit{Caucasian}) by the vendor, versus the aggregate posterior fraction according to the same vendor's continuous probabilities. In \textit{nearly every state}, the fraction of people labeled as a person of color is \textit{less than} the aggregate posterior -- i.e., the argmax discretization rule comparatively undercounts people of color everywhere except New Mexico, Hawai'i, and DC. Notice the general trend in voter of color representation: the larger the white majority population, the larger the discretization bias.

\paragraph{Mitigating bias with optimization-based decision rules} 
These biases can be mitigated via changing the discretization rule, with negligible loss in individual-level accuracy. \Cref{fig:votercounts} shows that matching to the aggregate posterior (which does not require any more information than argmax) substantially reduces bias, and \Cref{fig:pareto-NC} shows accuracy versus fidelity to the ground truth for each rule. \Cref{tab:table-NC} lists, for each decision rule, the accuracy with respect to the ground truth as well as distributional fidelity to both the ground truth distribution and the aggregate posterior (additional rules are shown in \ref{sec:methodsdetails}).

Remarkably, our joint optimization-based rules have similar accuracy as the argmax rule (0.36\% more 0-1 loss), with \textit{substantially less} bias (from 15\% to 7.4\% with respect to ground truth). This is true for the matching rules (that exactly match the distribution implied by the continuous models), the data-driven thresholding heuristic, and integer optimization rules that balance accuracy and distributional fidelity. Furthermore, the sampling-based approaches are highly suboptimal: Thompson sampling is substantially less accurate, with comparable distributional fidelity and bias to the joint rules.

Two aspects of the results are especially striking. (1) Our optimization-based decision rules outperform standard rules even though they are not given the {true} marginal distribution over classes, i.e., when it uses just the information given by the continuous classifier (and so can be implemented in every state, even when true population information is not available, just like the argmax rule). (2) Even the extreme fidelity-prioritizing choice of $\gamma \to 0$ optimized by the matching approaches induces negligible accuracy loss, and so it seems appropriate for any downstream application (and notably, it is the opposite choice than the implicit $\gamma = 1$ choice made by the current argmax approach).  

Together, these results show that commercial voter files undercount voters of color, and this undercounting can be mitigated simply by changing how vendors convert continuous probabilities to discrete labels. 

\paragraph{Geographic skews} 
Finally, we consider sub-state bias. \Cref{fig:maps} shows, for each county in North Carolina, the bias (in terms of white voters compared to the aggregate posterior) for each method, as well as the true fraction of white voters in the voter file. 

Argmax and Threshold rules in particular exacerbate geographic differences: majority-white areas are made even \textit{more} so through the imputation process. Intuitively, this occurs both because of conditional geographic miscalibration (models such as BISG use geography, and so a Black voter in a majority-white area is given a higher white probability than someone with the same name in a majority-Black area) and the discretization process (as discussed above, argmax and thresholding approaches amplify the majority class). Such geographic skews may be especially important in discrimination auditing settings, as they may lead to poor identification of groups in areas where they are especially in the minority and may face the most discrimination. 

As with overall bias, such geography conditional bias can be mitigated with our optimization-based discretization rules, with minimal loss in accuracy. For example, an optimization approach that matches the aggregate posterior \textit{for each county} can exactly match each county's demographic distribution according to the model, with minimal loss in overall accuracy. (We note, however, that matching overall distributions without matching to each county can further polarize labels, where most non-white voters are drawn from counties where such groups live.)

Finally, \Cref{fig:biasmap} also illustrates one benefit of Thompson sampling over all other discretization approaches: due to the sampling process, it naturally balances per-sub-group distributions without explicit optimization. This unbiased property may especially be important in some some auditing or polling settings, despite the individual level inaccuracy. 

\paragraph{Result robustness and replication} The appendix contains replications using public data and models from \citet{greengard2024improved}, who produce models from public voter file data that are calibrated for Caucasian and African Americans -- i.e., the aggregate posterior is correct in terms of the fraction of Caucasian and African American voters. As we theoretically analyze below, we find that even calibrated models lead to substantial discretization bias using standard approaches. We also show robustness of our results to alternative analysis choices, discretization methods, and subsets of the data. All results qualitatively replicate. 

\section{Theoretical characterization of argmax bias and joint decision-making}

Motivated by the empirical results, we now characterize \textit{decision-making} bias, as distinct from \textit{predictive modeling bias}: how does amplification of the most likely class depend on the properties of the discretization procedure and the continuous scores, respectively?

In \Cref{thm:informationargmax}, we show that discretization bias emerges even when the continuous predictive model is itself \textit{unbiased} (calibrated), and is tightly connected to \textit{predictive uncertainty}. 

As intuition, consider the no information regime: if the features provide no additional information over the overall class probability, then even for the Bayes optimal classifier we have $q(y, x_i) = \Pr(y)$ for all $i$, and so all points get assigned label $\hat y_i = z$, the argmax class. On the other extreme of full information, Bayes optimal classifiers induce no argmax bias amplification, as they can identify the correct class for each data point. In the middle, we show that bias amplification is bounded by the mean average error of the continuous classifier. 

In \Cref{thm:pareto}, we analyze \textit{decision-making} approaches to reduce bias: how should discrete decisions $\hat y_i$ be made from a Bayes optimal model of continuous probabilities $q (y, x_i)$? We show that independent decision-making is \textit{necessarily} suboptimal in the presence of a distributional objective -- any such policy, even those designed to meet distributional constraints (e.g., Thompson  sampling), is Pareto dominated in terms of expected accuracy and distributional fidelity. 

Proofs are in \Cref{sec:proofs}.

\subsection{Formal definitions}

In the model, there is a decision-making (classification) setting with discrete classes $y \in \mY$, where $| \mY | = K \geq 2$, with the \textit{prior} class distribution  $\Pr(y)$. There are $N$ data points at the decision (test) time. For each data point $i$, we observe features $x_i \in \mX$, where the data and the unobserved true label $y_i$ are drawn from the joint distribution $(X, Y) \sim F_{XY}$, where $F_X, F_Y$ are both non-constant. We distinguish between \textit{continuous predictions} $q(y, x_i) \in [0, 1]$ and \textit{discrete decisions} $\hat y_i \in \mY$. In our notation, we often omit distribution $F$. 
We also denote a set of data points as $x_{1:N}$.

\subsubsection{Continuous classifier and discretization rules} 

Suppose we have a machine learning classifier $q$ (e.g., trained on some other dataset) that makes continuous probability predictions for each class, given the data. Let $q(y, x)$ denote the probability placed on class $y$ given data $x$. For example, the Bayes optimal classifier is
\[
q^*(y, x) \triangleq \Pr( Y = y | X = x).
\]
We will refer to $p$ as the \textit{prior} and $q$ as our learned \textit{posterior}. Let the simplex of possible probability vectors over the support $\mY$ be $\YSimplex$, and let $q(x) \in \YSimplex$ be the vector of probabilities $(q(y, x))_y$.

Continuous classifiers are evaluated via their predictive performance. The expected \textbf{Mean Absolute Error (\mae)} of the classifier $q$ is:\footnote{Note that, though related, mean average error of the continuous predictor $q(y, x)$ is not equivalent to expected zero-one loss of the discrete decisions. The latter, our measure of decision accuracy, also depends on how the continuous predictor is discretized.} 
\begin{align*}
\mae(q) =\bbE_{F_{XY}}\left[\frac1N\sum_i ( 1 - q(y_i, x_i) )\right] = \int_{\mX} \sum_{y \in \mY} \left( 1 - q(y, x) \right) \Pr(y \given x) f(x) dx.
\end{align*}

Decision (discretization) rules are as defined in \Cref{sec:methodssummary}. Our theoretical results distinguish between \textit{independent} (e.g., argmax, threhsolding, Thompson sampling) and \textit{joint} (e.g., our optimization approach) rules. Formally, with independent rules, there exists a (potentially randomized) function $d$ such that $D_d (\{q(x_i)\})= \{\hat y_i =  d(q(y, x_i))\}$.

\subsubsection{Formal metrics and Pareto optimal decision-making}
For our theoretical results, we consider the expectations of our accuracy, fidelity, and bias metrics, calculated over the joint distribution of the data $F_{XY}$ over datasets of size $N$ (and any randomness in the decision rule). The \textit{expected} accuracy of a decision rule $D$ together with continuous predictor $q$ is thus
\begin{align}
	\eacc_N(D, q) = \bbE_{F_{XY}}\left[ \acc(y_{1:N}, D(\{q(x_i)\})) \right].
\end{align}
Analogously $\efid_N(D, q, p_{\textrm{ref}})$ is the expectation of fidelity and $\ebias_N(y, D, q, p_{\textrm{ref}})$ is the expectation of bias. 

For a given data distribution $F$, continuous classifier $q$, reference distribution $\pref$, and dataset size, we want to make effective decisions. Given the multiple desiderata, we thus want Pareto optimal decision-making, i.e., to adopt a rule $D$ that maximizes, for some $\gamma \in (0, 1]$,
\begin{align}
	O^\gamma_N(D, q, \pref) = \gamma \eacc_N(D, q) + (1 - \gamma) \efid_N(D, q, \pref).
	\label{eq:paretoexpects}\end{align}
The weight $\gamma$ is task-dependent, set by a practitioner's expertise for whether accuracy or distributional fidelity is more important. 

When clear from context, we omit $q$, $\pref$, and $N$ from the arguments for $\ebias$, $O^\gamma$, $\eacc$, and $\efid$. 

\subsection{Theoretical Results: Argmax bias and continuous classifier}

We first study the relationship between argmax bias and the properties of the continuous classifier. We show that discretization bias can occur even with unbiased (calibrated) continuous classifiers $q$, as long as it is uncertain (has non-zero error). 

The following result holds for all calibrated classifiers. A given classifier $q$ is \textbf{calibrated} when its continuous predictions are correct on average, 
$\Pr(Y=y|q(y, x) = c) = c$.\footnote{For all $c$ of non-zero measure: $\forall c \in \left\{c : \left|\int_{(x, y)} \bbI[q(y, x) = c] f(x, y) d(y, x)\right| > 0 \right\}$. Further note that calibration implies Bayesian consistency, i.e., the continuous model cumulatively assigns the proper mass to each class $y$: $\bbE_{X}[q(y, x)] = \Pr(y)$.} Note that the Bayes optimal classifier is calibrated.

\begin{restatable}{theorem}{information} 
\label{thm:informationargmax}
Argmax bias depends on predictive uncertainty, i.e., how much information features $x$ provide about the true class label. Consider calibrated classifier $q$ and the argmax decision rule $D_{\textrm{argmax}}$. Let $N>K$ and consider a reference distribution of either the aggregate posterior or the prior. Then, 

\begin{romanenum}
\item Suppose $x$ provides no information and $q(y, x) = \Pr(y)$ for all $x, y$. Then, amplification bias is maximized and all data points are classified as a plurality class: 
\[
    \ebias(y, D_{\textrm{argmax}}) = 
    \begin{cases}
        1 - \Pr(y), & \text{if } y = \arg\max_w p(w), \\
        -\Pr(y), & \text{otherwise}.
    \end{cases}
\]

\item Suppose the classifier is perfect, i.e., $\forall (x, y) \sim F_{XY}$, we have $q(z, x) =    
\begin{cases}
        1, & \text{if } z = y, \\
        0, & \text{otherwise}.
    \end{cases}$ 
Then, there is no amplification bias, $\ebias(y, D_{\textrm{argmax}}) = 0.$

\item More generally, in the intermediate regime, argmax bias is upper bounded by the predictive error of the 
classifier $q$:
\begin{align}
    \ebias(y, D_{\textrm{argmax}}) & \leq \mae(q). \label{eqn:biasinequality}
\end{align}
The bound is tight: there exist $F_{XY}$ and $q$  such that \Cref{eqn:biasinequality} holds with equality for the plurality class. 
\end{romanenum}
\end{restatable}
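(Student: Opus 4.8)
The plan is to reduce all three parts to a single clean expression for the expected bias and then analyze it regime by regime. First I would use linearity of expectation to write $\bbE[\hat p_{\textrm{marg}}(y)] = \Pr(X \in A_y)$, where $A_y \triangleq \{x : \arg\max_w q(w,x) = y\}$ is the argmax region of class $y$ (here I use only that the data points are identically distributed, so each contributes $\Pr(X \in A_y)$; this makes the identity exact for every $N$). For the reference term, calibration gives Bayesian consistency $\bbE_X[q(y,X)] = \Pr(y)$, so the expected reference equals $\Pr(y)$ whether $\pref$ is the prior or the aggregate posterior --- which is exactly why the statement can treat the two references uniformly. Combining, $\ebias(y, D_{\textrm{argmax}}) = \Pr(X \in A_y) - \Pr(y)$, and substituting $\Pr(y) = \bbE_X[q(y,X)]$ and splitting the integral over $A_y$ and its complement yields the working identity
\[
\ebias(y, D_{\textrm{argmax}}) = \int_{A_y}\bigl(1 - q(y,x)\bigr) f(x)\,dx \; - \int_{A_y^c} q(y,x)\, f(x)\,dx .
\]

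Parts (i) and (ii) are then immediate substitutions. In the no-information regime $q(y,x)=\Pr(y)$, the argmax is the constant plurality class $z$, so $A_z = \mX$ and $A_y = \emptyset$ for $y \ne z$; plugging in gives $1 - \Pr(z)$ for $y=z$ and $-\Pr(y)$ otherwise, which is the claimed (maximal) amplification. For a perfect classifier, calibration at $c=1$ forces the true label to be a deterministic function of $x$ on the support, so the argmax recovers the true label for $f$-almost every $x$; then $A_y$ agrees with $\{x : Y = y\}$ up to null sets, $\Pr(X \in A_y) = \Pr(y)$, and the bias vanishes.

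For the bound in (iii), I would first drop the non-negative second integral to get $\ebias(y) \le \int_{A_y}(1-q(y,x)) f(x)\,dx$. The crux is a pointwise comparison on $A_y$: since $q(y,x) = \max_w q(w,x)$ there, $\sum_w q(w,x)\Pr(w \given x) \le q(y,x)\sum_w \Pr(w \given x) = q(y,x)$, hence $1 - q(y,x) \le \sum_w (1 - q(w,x))\Pr(w \given x)$. Integrating this over $A_y$ and then enlarging the domain to all of $\mX$ (the integrand is non-negative) produces exactly $\mae(q)$, establishing \Cref{eqn:biasinequality}. I expect this pointwise step to be the main obstacle: it is the one place where the structure of the argmax region, rather than calibration alone, does the work, and it is easy to mis-state without the observation that $q(y,\cdot)$ is the simplex maximum on $A_y$.

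Finally, for tightness I would exhibit a small explicit instance with $K=2$. Partition $\mX$ into three regions carrying $f$-mass $\alpha, \beta, \gamma$ with $\alpha+\beta+\gamma=1$ and $\beta > \gamma$: on the first, $q=(\tfrac12,\tfrac12)$ with the true label split evenly and ties broken to class $1$; on the second, $q=(1,0)$ with true label $1$; on the third, $q=(0,1)$ with true label $2$. A direct check on each level set shows this $q$ is calibrated, and $\Pr(1)=\alpha/2+\beta > \alpha/2+\gamma=\Pr(2)$ makes class $1$ the plurality; computing both sides gives $\ebias(1) = \alpha/2 = \mae(q)$, so every inequality above is met with equality. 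The only care needed is to verify calibration on each level set and to confirm class $1$ is indeed the plurality, both of which are routine.
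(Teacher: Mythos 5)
Your proposal is correct, and for the heart of the theorem---the inequality \Cref{eqn:biasinequality}---it takes a genuinely different and substantially shorter route than the paper. Parts (i) and (ii) are the same immediate substitutions in both, and your tightness instance is the $K=2$ specialization of the paper's own worst-case example ($q(z,x)\in\{1/K,1\}$ on the argmax region of $z$, and $0$ elsewhere); since the theorem fixes a general $K\ge 2$, you should remark that the construction generalizes verbatim (uniform $1/K$ on the tie region, point masses elsewhere), which is exactly the paper's version. The real divergence is the upper bound. The paper proves it globally: it first exhibits the extremal example and then shows that any calibrated $q$ can be deformed into it by moving probability mass (the sets $S_{b\to a}$, $S_{b\to c}$, $S_{d\to c}$, $S_{d\to e}$) so as to preserve consistency and $\mae$ while weakly increasing bias; this requires a continuity argument to select those sets, repeated use of calibration to evaluate $\mae$ piecewise, and separate treatment of $K=2$ and general $K$. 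You prove it locally: from the exact identity $\ebias(y,D_{\textrm{argmax}}) = \int_{A_y}(1-q(y,x))f(x)\,dx - \int_{A_y^c}q(y,x)f(x)\,dx$ you drop the nonnegative subtracted term, and on $A_y$ use that $q(y,\cdot)$ is the maximum coordinate to get the pointwise bound $1-q(y,x) \le \sum_w \bigl(1-q(w,x)\bigr)\Pr(w\given x)$, then integrate and enlarge the domain to reach $\mae(q)$. Besides brevity, your argument is strictly more general: it uses only Bayesian consistency $\bbE_X[q(y,X)]=\Pr(y)$ (and, when the reference is the aggregate posterior, no hypothesis on $q$ at all), whereas the paper's transformations lean on full calibration throughout. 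What the paper's longer argument buys in exchange is structural insight into the extremal instances---it shows how an arbitrary instance degrades to the worst case and where the slack lives---which your one-line pointwise inequality does not expose.
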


The first two parts directly follow from the definitions. If the features provide no information about $Y$, all decisions are identically the most common class; if the classifier is perfect, then each point is labeled according to its true class. The proof for the intermediate regime is more involved: we first construct an example in which bias equals $\mae$,\footnote{For an argmax class $z$, we have $q(z, x) \in \{\frac1K, 1\}$ when $D_{\textrm{argmax}}(q(x)) = z$ and $q(z, x) = 0$ otherwise.} and then show that any setting can be transformed into this example without violating the inequality.   

\begin{figure}[!h]
\centering
\begin{subfigure}{.45\textwidth}
    \centering
    \includegraphics[width=.7\linewidth]{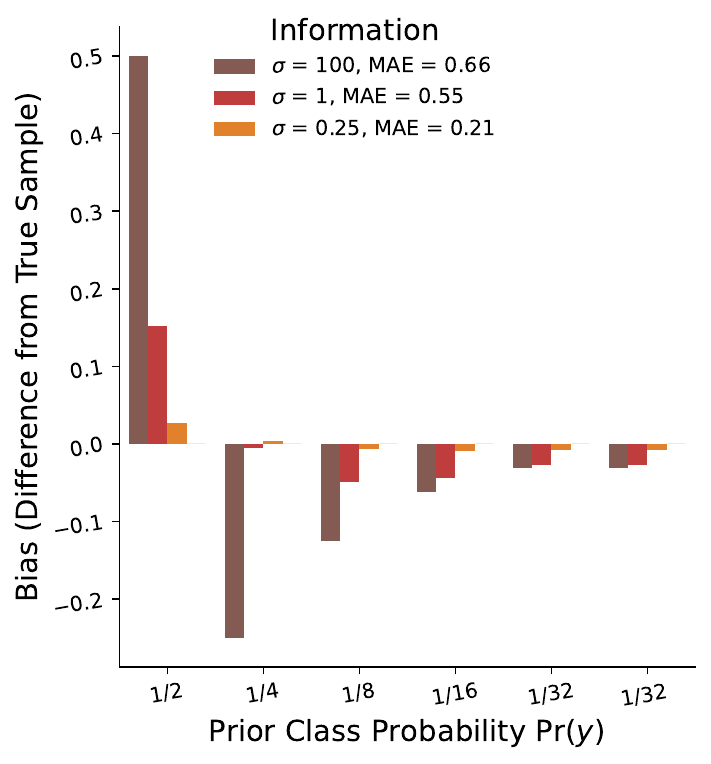}
    \caption{The argmax bias for each class.}
    \label{fig:info-bias}
\end{subfigure}
\hfill
\begin{subfigure}{.45\textwidth}
    \centering
    \includegraphics[width=.95\linewidth]{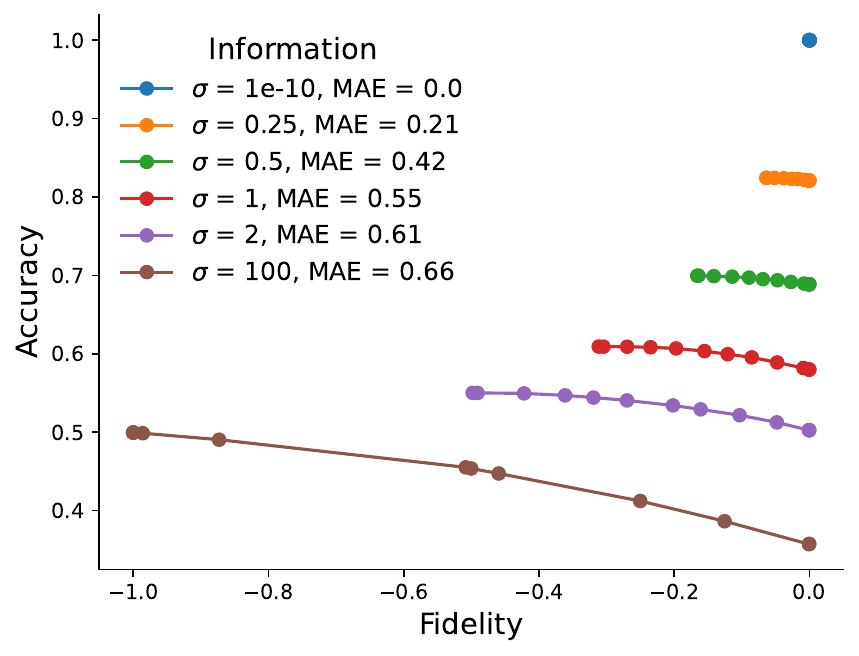}
    \caption{The Pareto curve of optimally-performing decision rules.}
    \label{fig:info-pareto}
\end{subfigure}
\caption{In simulation, discretized performance and argmax bias of the Bayes optimal classifier as a function of model accuracy, labeled by generating parameter $p$ and $\mae$. The results reflect \Cref{thm:informationargmax}: as classifier accuracy increases ($\mae$ decreases), argmax bias decreases, and both accuracy and distribution fidelity of optimal rules increase. Simulation details are described in \Cref{sec:simulationsetup}.}
\label{fig:info}
\end{figure}

We illustrate the main ideas of \Cref{thm:informationargmax} on a simulated dataset: argmax bias (and more generally, the entire Pareto curve) depends on classifier accuracy, the $\mae$.\footnote{Our data simulation process is described in greater detail in \Cref{sec:simulationsetup}; in short, we generate datasets with varying correlations between $X$ and $Y$ and calculate the Bayes optimal posterior $q^*$ in closed form.} 
\Cref{fig:info-bias} shows $\ebias(y, D_{\textrm{argmax}})$ for each class $y$ (labeled with its class probability $\Pr(y)$) as information changes. With little information, the argmax rule (completely) amplifies the highest probability class, and the bias decreases as classifier informativeness increases. Notice that in the lowest-accuracy setting, minority classes are missing from the output distribution entirely. \Cref{fig:info-pareto} shows the Pareto curves (with the argmax rule on one extreme of each Pareto curve). As information and therefore classifier accuracy increases, optimal decision-making improves both decision accuracy and distributional fidelity -- as the uncertainty of $y$ given $x$, as measured by the expected $\mae$, decreases, the Pareto curve shifts upwards.  

This result clarifies the relationship between bias induced by the continuous model versus the decision rule. Unlike the bias amplification observed in works such as \citet{leino2019feature}, argmax bias occurs even for a calibrated classifier, one that places appropriate mass on each class, due to the threshold nature of argmax discrete decisions. We note that the observation that perfect accuracy eliminates bias amplification is in the work of \citet{wang2021directional}, who further state that accuracy and bias are not always correlated; \Cref{thm:informationargmax}(iii) shows that bias is upper bounded by calibrated classifier error and that the bound is tight. 

Overall, \Cref{thm:informationargmax} suggests that improving model performance -- calibrating its predictions, and improving accuracy -- is aligned with mitigating argmax bias. However, it is insufficient: any less-than-perfect continuous model, even if unbiased by itself, can induce biased decisions. Empirically in the Appendix, we observe this in the voter file setting using the data and models of \citet{greengard2024improved}, who produce approximately calibrated continuous predictive models that nevertheless undercount voters of color when discretized.

\subsection{Results: Individual discretization versus joint optimization}

Above, we study how the continuous classifier's properties induce discretization bias, when using the argmax rule. We now study discrete decision-making: given an imperfect but Bayes optimal classifier, how can we make unbiased decisions? We show that joint decision-making, as in the program in \Cref{eq:optdecisionrules}, is necessary for Pareto optimality. 

For the result, we define the notion of \textbf{non-trivial} reference distributions $\mathcal P^q_N$: a reference distribution $p^q : \{x_i\} \to \YSimplex^N$ is in $\mathcal P_N$ if it assigns mass at least $\frac1N$ to the most likely class in the dataset, i.e., at least 1 data point would be discretized to the most likely class $\arg\max_y \sum_iq(y, x_i)$.\footnote{This technical condition eliminates the optimality of independent decision rules that trivially allocate every decision to some arbitrary fixed class, under a reference distribution that requires every point to be allocated to that class regardless of the continuous predictions $q$.} 

\begin{restatable}{theorem}{jointpareto} \label{thm:pareto}
Consider Bayes optimal $q$, and $N > K$. For all $F$: 
\begin{romanenum}
\item For every $\gamma$, $\pref$ there exists a joint decision-making rule that maximizes $O^\gamma(D, q, \pref)$ in \Cref{eq:paretoexpects}.
\item The argmax decision rule $D_{\textrm{argmax}}$ maximizes $O^1(D, q, \pref)$, i.e., is accuracy maximizing.
\item $D_{\textrm{argmax}}$ is the \textit{only} Pareto optimal independent decision rule for non-trivial $\pref \in \mathcal{P}$. No independent decision rule $D$ maximizes objective $O^\gamma$ for any $\gamma$, unless $\gamma = 1$ and $D$ agrees with $D_{\textrm{argmax}}$ with probability 1. 
\end{romanenum}
\end{restatable}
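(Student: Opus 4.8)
The plan is to reduce everything to a pointwise optimization over datasets. For any decision rule $D$ and features $x_{1:N}$, condition on $x_{1:N}$: since $q$ is Bayes optimal, $\bbE[\bbI[\hat y_i = y_i] \given x_{1:N}] = q(\hat y_i, x_i)$, while the fidelity term depends only on the decisions. Hence for any (possibly randomized) rule,
\[ O^\gamma(D, q, \pref) = \bbE_{x_{1:N}}\Big[ \bbE_{\text{rule}}\big[ \Phi_\gamma(\hat y_{1:N}; x_{1:N}) \big] \Big], \quad \Phi_\gamma(\hat y_{1:N}; x_{1:N}) \triangleq \tfrac{\gamma}{N}\sum_i q(\hat y_i, x_i) + (1-\gamma)\fid(\pref, \hat y_{1:N}), \]
which is exactly the objective of the program in \Cref{eq:optdecisionrules}, so the problem decouples across datasets. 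For (i), on each $x_{1:N}$ the inner objective is a maximum over the finite set $\mY^N$, which is attained; selecting a measurable (e.g.\ lexicographically least) maximizer per dataset gives a joint rule $D^\star$ with $O^\gamma(D^\star) = \bbE_{x_{1:N}}[\max_{\hat y} \Phi_\gamma] \ge O^\gamma(D)$ for all $D$, establishing existence. For (ii), set $\gamma=1$: then $\Phi_1 = \tfrac1N\sum_i q(\hat y_i, x_i)$ is separable and maximized term-by-term by $\hat y_i = \arg\max_y q(y, x_i)$, i.e.\ by $D_{\textrm{argmax}}$, which therefore maximizes $O^1 = \eacc$.

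For (iii) I would first show $D_{\textrm{argmax}}$ is Pareto optimal and then that it is the only independent rule that is. Argmax uniquely maximizes $\eacc$ over all rules by (ii); assuming the pushforward $q(X)$ is atomless (so the argmax is a.s.\ unique), any rule attaining this maximum labels a.e.\ point by its argmax class, hence induces the same marginal $\pmarg$ and thus the same $\efid$ as $D_{\textrm{argmax}}$, so nothing weakly dominates argmax while strictly raising fidelity. Conversely, take any independent $D_d$ that does not agree with $D_{\textrm{argmax}}$ almost surely. Independence fails in two ways: if $D_d$ randomizes on a positive-measure set of vectors, then on datasets with two or more such points the realized $\pmarg$ is non-degenerate, and by convexity of $|\cdot|$ (Jensen) its expected $\ell_1$ deviation from $\pref$ strictly exceeds that of a rule realizing the same expected counts deterministically; and because $\hat y_i = d(q(x_i))$ ignores the other points, it does not allocate labels in the accuracy-optimal way given its own class counts. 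I would thus construct a joint rule that reproduces $D_d$'s expected per-class marginals but realizes them deterministically on each dataset and assigns labels by the accuracy-maximizing transportation to those counts; this weakly improves both $\eacc$ and $\efid$, strictly in at least one coordinate whenever $D_d \neq D_{\textrm{argmax}}$. Hence argmax is the unique Pareto-optimal independent rule.

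Finally, since any maximizer of $O^\gamma$ for $\gamma \in (0,1)$ is Pareto optimal, the only independent candidate is $D_{\textrm{argmax}}$, so it remains to show argmax itself fails to maximize $O^\gamma$ for every $\gamma < 1$. I would use a single-point swap: since $D_{\textrm{argmax}}$ over-counts the plurality class relative to a non-trivial $\pref$ (\Cref{thm:informationargmax}), there is an over-represented class $z$ and under-represented class $w$; moving a point with argmax $z$ to $w$ changes $O^\gamma$ by $\tfrac1N[\,2(1-\gamma) - \gamma\,(q(z,x_i) - q(w,x_i))\,]$, which is positive once the margin $q(z,x_i) - q(w,x_i)$ is small enough.

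The hard part — and where the ``imperfect Bayes optimal classifier'' hypothesis must do real work — is guaranteeing, for \emph{every} $\gamma < 1$, a positive-measure set of datasets containing such a low-margin pivotal point. Mere residual uncertainty ($\max_y q(y,x)<1$) is insufficient: a classifier whose top-two margin is bounded away from $0$ leaves $D_{\textrm{argmax}}$ optimal for $\gamma$ near $1$, a genuine gap in the literal ``for any $\gamma$'' claim. The argument therefore needs $q(X)$ to place mass on arbitrarily small positive margins (e.g.\ an atomless component accumulating at margin $0$) — the same regularity that removes ties above, and which the continuous demographic scores supply. Granting it, for each $\gamma<1$ the swap strictly improves $O^\gamma$ on a positive-measure event while the joint optimum of (i) never does worse, so no independent rule, argmax included, maximizes $O^\gamma$ unless $\gamma=1$, completing (iii). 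I expect pinning down this regularity condition, rather than the swap calculus or the Jensen/transportation domination, to be the main obstacle.
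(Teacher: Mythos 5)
Parts (i) and (ii) of your proposal are correct and essentially the paper's own argument: condition on $x_{1:N}$, use Bayes optimality to replace indicators by $q(\hat y_i, x_i)$, observe that the per-dataset objective is exactly that of \Cref{eq:optdecisionrules} (so a per-dataset maximizer exists), and note that the $\gamma = 1$ objective is separable and maximized termwise by argmax. The genuine gap is in your domination argument for part (iii). Your key claim --- that an independent rule ``does not allocate labels in the accuracy-optimal way given its own class counts'' --- is false precisely for the rules the theorem most needs to rule out: deterministic monotone threshold rules. Take $K=2$ and $D_p$ with $\hat y = 1$ iff $q(1,x) \geq p$, for $p < 1/2$. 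On every dataset, the points $D_p$ labels $1$ are exactly those with the largest values of $q(1,\cdot)$, so the accuracy-maximizing transportation to $D_p$'s own counts reproduces $D_p$ itself; and since $D_p$ is deterministic, the Jensen/derandomization step is vacuous. Your construction therefore returns $D_p$ unchanged and certifies no improvement at all (the same failure occurs for constant rules). A telltale symptom: your domination argument never invokes the non-triviality hypothesis $\pref \in \mathcal{P}_N$, but that hypothesis is indispensable --- under a point-mass reference on a class $z$, the constant rule ``always $z$'' \emph{is} Pareto optimal --- so no argument that ignores it can establish the claim. (A smaller issue: atomlessness of the pushforward of $q(X)$ does not give a.s.\ uniqueness of the argmax, since ties can carry positive mass without atoms.)

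The paper closes this gap with a different, dataset-level idea that your write-up is missing: if $D$ disagrees with $D_{\textrm{argmax}}$ on a set $X \subseteq \mX$ with $F(X) > 0$, on which argmax assigns some fixed class $y$, then with positive probability $F(X)^N$ the \emph{entire} sample lies in $X$. On that event $D$ allocates zero mass to $y$, while $y$ is the most likely class of that dataset, so non-triviality forces $\pref(y, \{x_i\}) \geq \frac1N$; switching a single decision from an over-allocated class to $y$ then strictly improves fidelity and weakly (strictly, absent ties) improves expected accuracy, so the rule modified on this positive-probability event strictly dominates $D$. This argument covers threshold, constant, and randomized rules uniformly. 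Finally, your diagnosis of the last sentence of (iii) is astute but misplaced as the ``main obstacle'': you are right that the literal claim that even $D_{\textrm{argmax}}$ fails to maximize $O^\gamma$ for every $\gamma < 1$ requires $q(X)$ to put mass on arbitrarily small top-two margins (with margins bounded away from $0$, argmax does maximize $O^\gamma$ for $\gamma$ near $1$), but the paper never proves that claim; its proof establishes only the Pareto-domination statement, from which it follows that no independent rule other than (a.s.) argmax can maximize $O^\gamma$ for $\gamma \in (0,1)$. The piece actually missing from your proposal is the threshold-rule domination above, not the small-margin regularity.
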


\noindent Parts $(i)$ and $(ii)$ follow directly from the classifier being Bayes optimal---maximizing the objective for each given dataset can be formulated as an integer program, and the argmax rule maximizes expected accuracy for each data point. To prove part (iii), we show that any other independent rule, with positive probability, disagrees with argmax on every point in the dataset -- and so we can improve both accuracy and fidelity by switching the decision for at least one point.

\Cref{fig:pareto-NC} visualizes \Cref{thm:pareto} on the voter file, showing the performance of the various decision rules that all use the same data and model $q$. Independent decision rules are substantially Pareto suboptimal compared to the optimal integer program -- except the argmax rule, which optimizes accuracy at substantial cost of distributional fidelity. Note that in practice, without the Bayes optimal predictor, other decision rules may outperform and strictly dominate argmax, as seen in the peak of the integer program curve in \Cref{fig:pareto-NC}. \Cref{fig:pareto-simulation} shows results fully congruent with \Cref{thm:pareto} on simulated data, in the appendix.

We note that the joint decision-making approaches do not necessarily use \textit{more information} than the argmax or other independent rules. For example, optimizing fidelity to the aggregate posterior simply requires the scores $q(y, x)$---the performance gain comes from jointly making decisions across points. On the other hand, we note that \Cref{thm:pareto}(ii) does not preclude other rules having higher accuracy than the argmax rule when the continuous model is not Bayes optimal, as shown in \Cref{fig:pareto-NC}. More generally, it is possible that, by incorporating domain knowledge such as the expected class distribution when available, that joint decision-rules can correct for model errors.\footnote{For example, in \citet{balachandar2024domain}, external information aids in estimating unbiased continuous predictive models when outcome data is missing.} 

\section{Discussion}

Individual-level demographic imputation is a key algorithmic task across important applications. While much of the literature focuses on improving continuous predictive models, we show the importance of the next stage of discretizing continuous probabilities into individual labels. In particular, we show that the standard approaches of argmax and thresholding substantially skew the distribution of labels, especially as they might correlate with geography, decreasing the fraction of labeled minority groups. We further show that while increasing predictive modeling accuracy reduces (worst-case) bias, it is generally insufficient to eliminate such bias, even with perfectly calibrated models. We develop optimization-based discretization approaches that can match desired class distributions, with negligible loss in accuracy. More generally, we caution against the use of standard discretization approaches, without considering the error properties that one's downstream task requires.

Finally, while our empirical analysis focuses on demographic imputation in voter files, similar considerations appear in many other demographic imputation and algorithmic fairness tasks. For example, matching to the reference distribution could be interpreted as \textit{group fairness} in algorithmic fairness settings, if each label $y$ corresponds to a group, or \textit{individual fairness} in recommendation settings, if each label corresponds to an individual producer or item. Our optimization approach is thus related to the approaches of \citet{10.1145/3460231.3478857} and \citet{zhao-etal-2017-men}, and our work provides a characterization of such approaches, especially supporting joint optimization based approaches over sampling in many applications. We discuss the connection to this literature further in \Cref{sec:related-appendix}.

\paragraph{Limitations and extensions }While we empirically evaluate our optimization-based approach to balance individual level accuracy and distributional fidelity, the approach is flexible to a range of objectives, such as precision, group specific false positives, and the count of individuals allocated to each class. Indeed, we emphasize that the choice of discretization method should be informed by the downstream task for which the labels will be used, and other considerations such as computational scalability and ethical considerations. For example, in some auditing tasks, it may be important to have high accuracy among the classified set (historically used to motivate thresholding approaches), along with geographic representation. In other settings, having the labels reflect the distribution even of \textit{unknown} groups (e.g., that a proportional number of Black individuals who drive a specific type of car are classified as Black) may be more important than individual-level accuracy, suggesting the use of Thompson sampling. In some settings, it is possible to use continuous labels directly or to use self-reported demographic data -- we recommend doing so when possible.

One limitation of our approach -- indeed of much of the demographic imputation literature -- is the use of coarse racial/ethnic categories, and we advocate for methods that better deal with the full complexity of identity, while respecting privacy considerations. However, the use of more granular categories would exacerbate argmax bias (since there would be more, smaller probability classes), increasing the importance of discretization approaches beyond argmax or thresholding.

\section{Methodological Details} \label{sec:methodsdetails}

\begin{description}
\item [Argmax]
The commercial dataset's presupplied discrete labels did not appear to break ties in any totally consistent order. As a result, when labeling \textit{Uncoded} data points (approximately $3.23\%$ of the dataset) with argmax, we broke ties in the following order:  \textit{white} $\succ$ \textit{African American} $\succ$ \textit{Hispanic} $\succ$ \textit{Asian}$\succ$ \textit{Native American}, which corresponds to both overall population size and is the least inconsistent with the provided labels. Tiebreaking has a negligible effect on results; there were a total of $\num{390}$ such data points.

\item [Integer optimization]
We solved integer programs using cvxpy \citep{diamond2016cvxpy}, an open-source Python package, and Gurobi \citep{gurobi}, used with a free academic license. In both simulation and empirical analyses, we calculated $\gamma$ values between $0.8$ and $0.99$. \Cref{fig:pareto-NC} uses increments of $0.01$.

\textit{Computational considerations in joint rules} We note that solving the optimization problem jointly over the full dataset may be computationally expensive; \citet{10.1145/3460231.3478857} introduce a Lagrangian-based approach. In our empirical application, we solve this optimization problem in evenly sized \textit{batches} of approximately $\num{10000}$. When using the aggregate posterior conditioned upon geographical location, these batches are grouped by county instead of completely random selection.

\item [Aggregate posterior matching]
Matching to an exact distribution (i.e., the fidelity-maximizing one) becomes an assignment problem, solvable in polynomial time as a maximum weight matching problem on a bipartite graph, as in \citet{10.1145/3539597.3570402}. We use a bipartite max weight algorithm in Python to solve this problem, rounding the distribution to the closest integer number of labels to the fidelity-maximizing distribution. 

\item [Data-driven thresholding heuristic]
We explored several different machine learning models to approximate aggregate posterior matching. Logistic regression, random forests, and support vector machines all achieve similar performance given one $N\approx\num{10000}$ training batch, with $\approx99\%$ of predicted labels being identical to those of the matching solution. Our reported results use a linear SVM, approximating both the aggregate posterior matching solution (\Cref{tab:table-NC} and \Cref{fig:pareto-NC}) and integer programs (\Cref{fig:pareto-NC}).

\end{description}

We further evaluate two rules in the Pareto curve diagram \Cref{fig:pareto-NC} and the appendix.

\begin{description}

\item [Top-\textit{k} sampling]
Top-\textit{k} sampling is used in many modern applications, including large language models \citep{holtzman2020curious}, and is equivalent to Thompson sampling between just the (renormalized) probabilities of the top $k$ classes, with the remaining classes set to probability 0. 

\item [True marginal population matching]
In some settings, \textit{true} aggregate population-level information may be available or well-estimated (e.g. through the census or surveys), or some other reference distribution may be chosen given normative motivations. We perform matching using the true makeup of self-reported race to demonstrate the effectiveness of our approach in utilizing additional information.  

\end{description}

\textit{Synthetic data generation.} \label{sec:simulationsetup}
Our synthetic data setting for \Cref{fig:info,fig:pareto-simulation} is designed to illustrate our main ideas. We use $K=6$ classes (i.e., $\mY = \{1 \dots 6\})$, with the prior frequencies being sequential negative powers of 2, with the final two classes having equal frequency (i.e., $\{1/2, 1/4, 1/8, 1/16, 1/32, 1/32\})$. 

We simulate the features $x$ as a $K$ dimensional vector; each dimension is drawn from a Normal distribution with variance $\sigma^2$, where the dimension corresponding to the true $y$ has mean $1$ and the remaining dimensions have mean $0$. Thus the variance $\sigma^2$ characterizes the informativeness of the data (lower $\sigma$ means lower predictor $\mae$ of the Bayes optimal classifier $q^*(y, x) = \Pr(y | x)$). {We note that this choice makes it easy to calculate the Bayes optimal classifier and to sample data. Data are generated by first sampling $y_{1:5000}$ from the prior, and then sampling $x_{1:5000}$ with the $\Pr(x|y)$ probabilities above.} We draw $N=5000$ points for each dataset. We calculate the Bayesian optimal posterior from the generating parameters and average results across 100 datasets. 

\newpage
\FloatBarrier
\bibliography{bib}

\clearpage
\appendix

\section{Extended Related Work} \label{sec:related-appendix}

While we focus on demographic imputation as our primary application, the observation that algorithmic systems tend to \textit{amplify} the most likely class has reappeared with different names across computational fields several times, with several proposed solutions. To this literature, we contribute an analysis of how different discretization approaches have different error properties, suggesting the use of optimization-based approaches tuned to one's specific desiderata. 

\subsection{Argmax bias, calibration, and bias amplification across fields}  

Our work generally connects to work within algorithmic fairness and related fields. For example, if each class $y$ represents a (demographic) group, then distributional fidelity could represent a group fairness notion, with $p_{\textrm{ref}}(y) = \frac1K$ for all $y$, or corresponding to the class probabilities $p_{\textrm{ref}}(y) = \Pr(y)$.

\citet{twitterargmax} term ``argmax bias'' and study whether it affects image cropping algorithms used by Twitter (prioritizing male faces in image previews). In image labeling tasks, \citet{zhao-etal-2017-men} find that standard algorithms tend to over-label an agent as a \textit{woman} when they are \textit{cooking} in an image, with similar gender ``bias amplification'' occurring for other verbs such as \textit{shopping} and \textit{coaching}.  For example, \citet{zhao-etal-2017-men} find that when the fraction of training set images that are of women cooking $\Pr(\text{women})$ is greater than that of men cooking, the fraction of times an image labeler labels a person who is cooking as a woman is even \textit{greater}, $\Pr(\hat y_i = \text{women}) > \Pr(\text{women})$. 
Others then refine and expand on such bias \cite{wang2021directional}. Similarly, \citet{stanovsky-etal-2019-evaluating} find that standard machine translation algorithms amplify gender stereotypes (e.g., assign a gender-neutral \textit{doctor} as \textit{man} when translating to a language with gendered nouns). In recommendation systems, \citet{steck2018calibrated} proposes the notion of ``calibration,'' to counter the phenomenon that a user who prefers horror movies 70\% of the time and comedy movies 30\% of the time would {only} be recommended horror movies (this same notion of calibration is also present in algorithmic fairness settings). In biological engineering, deterministic binning of measured fluorescence systematically biases population estimates \cite{trippe2022randomized}. The fundamental phenomenon across settings is that absent sufficient context-specific information for a specific data point, learning systems that optimize for accuracy will default to the globally most likely outcome. Many papers study related biases in the various contexts \cite{peng2023reconciling,jacobs2021measurement,guo2021stereotyping,garg2021standardized,costanza2022audits,Birhane_2022_WACV,hall2022systematic,mansoury2020feedback,pmlr-v202-zhao23a,pmlr-v202-taori23a,jia2020mitigating,jagadeesan2023improved,jagadeesan2023supplyside,garg2018word}; see also Arvind Narayanan's summary in a Twitter thread \citep{narayanan2021tweet} on argmax bias. 

Given this widespread identification of the problem (and more generally, a focus on the algorithmic fairness notion of \textit{group fairness}), there has been a large literature studying its causes and proposing a range of solutions.

\textit{The role of the continuous model $q$.} The predominant focus in algorithmic fairness, like in demographic imputation, has been on the continuous predictive model $q (y, x) \approx \Pr(Y | X)$ and its inputs \cite{black2023toward}. The predictive model itself can be biased, i.e., for the argmax class $z$, we have that $q(z, x) > \Pr( Y = z | X = x)$. \citet{leino2019feature} show that bias amplification in the continuous modeling stage can be caused by inductive biases of stochastic gradient descent; they propose {feature selection} pre-processing as a solution. More generally, a large literature in fair machine learning aims to change continuous predictions $q(y, x)$ \cite{caton2020fairness}, such as by constrained optimization approaches during training (in-processing) or resampling training data (pre-processing). \citet{chen2018my} argue that ``unfairness induced by inadequate samples sizes or unmeasured predictive variables should be addressed through data collection, rather than by constraining the model,'' and \citet{cai2020fair} and \citet{10.1145/3306618.3314277} propose active information acquisition approaches to counter unfairness. 

\textit{The role of discrete decision-making.} Bias amplification can also occur in the discrete decision-making stage, as the term \textit{argmax bias} \cite{twitterargmax} suggests -- as discussed above, even if the predictive model is optimal, i.e., $q (y, x) = \Pr(Y | X)$, argmax discretization can amplify the most common class. Recognizing this cause, another set of technical approaches is to instead intervene at the \textit{decision point}, when going from a continuous score $q(y | x)$ to a discrete label or decision $\hat y$. \footnote{We note that there are also non-technical solutions. One especially relevant approach highlighted in the context of argmax bias is to delay algorithmic decisions or delegate to a human: if the problem is caused by discretizing continuous probabilities $\Pr(Y | X)$ into discrete decisions $\hat y$, then it can be avoided by not making discrete decisions. In the Twitter image cropping case, for example, a proposed solution was to allow users to manually crop images for previews \cite{twitterargmax}. As \citet{narayanan2021tweet} summarizes in the twitter thread: 
\begin{quote}
``\textit{I don't know any effective technical fixes for the argmax issue. But a really good approach for handling model uncertainty is to show the user the possible outputs and ask them to choose. But this goes against the mantra of frictionless design and so it's very rarely deployed.}''
\end{quote}
When possible, we believe that such an approach is effective and appropriate, as emphasized by \citet{twitterargmax}. However, we also believe that technical interventions at the point of decision-making under uncertainty are often necessary, and so should be improved.} 

We distinguish these technical solutions into two classes, \textit{independent} and \textit{joint}.

\begin{description}[leftmargin=.3em]
  
    \item [Independent decision-making] The first approach class is to change the mapping from $q(y, x_i)$ to $\hat y_i$, \textit{independently} for each data point $i$. By far the most commonly proposed technical fix in the algorithmic fairness community at the decision point is to \textit{sample} decisions -- for each data point $x$, probabilistically assign it a decision $y$ as a function of the continuous score $q(y, x)$. This approach can eliminate argmax bias -- each class occurs in the correct proportions -- but there may be substantial inaccuracy: a class with a posterior probability of only 10\% would be the assigned decision for about 10\% of the data points. Sampling is proposed prominently in the Twitter image cropping work that termed argmax bias \cite{twitterargmax}, and is also proposed as a solution to such biases by \citet{narayanan2021tweet}, in natural language processing \cite{hendricks2018women}, biological engineering \cite{trippe2022randomized}, in admission settings with heterogeneous information \cite{liu2021test}, and algorithmic fairness generally \cite{caton2020fairness}. 
    
    Despite sampling's popularity, there is a sense it is inadequate. As \citet {10.1145/3306618.3314277} highlight when proposing an information acquisition approach in algorithmic fairness, 
    \begin{quote}
    ``\textit{Recent work has proposed optimal post-processing methods that randomize classification decisions for a fraction of individuals, to achieve fairness measures related to parity in errors and calibration. These methods, however, have raised concern due to the information inefficiency, intra-group unfairness, and Pareto sub-optimality'' 
    }    
    \end{quote}

Another independent approach involves setting fixed thresholds and only discretizing a data point if the class probability is sufficiently high; if a data point does not meet any threshold, it is excluded
\citep{adjaye2014using,bureau2014using,zhang2018assessing,chen2019fairness}. This approach is often used for demographic imputation; \citet{adjaye2014using} pick different thresholds per group in a health care context. 

\item [Joint decision-making] An alternative approach is to \textit{jointly} make decisions across the entire dataset. \citet{10.1145/3460231.3478857} (for calibrated recommendations) and \citet{zhao-etal-2017-men} (to reduce bias amplification in NLP) formulate similar integer optimization problems that jointly take scores $\{q (y,x_i)\}$ and output labels that balance maximizing individual-level accuracy and matching a class distribution. \citet{10.1145/3539597.3570402} formulate the extreme case of prioritizing matching the class distribution as an efficient max flow problem. As opposed to sampling approaches (and independent approaches more generally), joint decision-making is rarely even mentioned as a solution to argmax bias or in algorithmic fairness broadly -- for example, it does not appear in a table of six solutions by \citet{twitterargmax}, or as a solution approach in a survey of fair machine learning \cite{caton2020fairness}, though sampling and thresholding approaches are extensively detailed. 
\end{description}

\vspace{.25em}
\textit{Our contribution to this literature.} What does our work contribute to this extensive literature? (1) We analyze decision-making bias as distinct from and in relation to predictive modeling biases, requiring decision-making interventions: decision-making bias amplification occurs even with Bayes optimal, unbiased predictive modeling unless predictions are perfect ($\Pr(Y = y \given X = x) = 1$ for the true class $y$, for each $x$). Thus, improving continuous predictions such as via active information acquisition is useful -- improving accuracy reduces (worst case) bias when using the argmax decision rule -- but it is generally insufficient. (2) Within the set of decision-making interventions, our work suggests a substantial departure from the algorithmic fairness status quo away from sampling based approach, and we suggest that optimization based approaches are relatively underused. More generally, we urge distinguishing between \textit{predictive modeling} and \textit{decision-making} biases in both auditing and bias mitigation, and urge more intentional choice of the discretization procedure: discretization is ultimately a highly context dependent decision, for which tradeoffs such as importance of accuracy, bias, computational scalability, other metrics, and non-quantitative concerns should be considered. 

\subsection{Demographic imputation}

Individual-level race and ethnicity data are used across applications, including public health, algorithmic fairness, political science, criminal justice, and economics -- especially for disparity auditing \cite{pierson2020large,chin2023methods,ghosh2021fair,chen2019fairness,deluca2022validating,grumbach2020race,diamond2019effects,garg2018word}. However, collecting ground truth self-reported data may face practical or legal barriers \citep{bureau2014using}. As a result, several methods have been developed to impute race/ethnicity using proxy variables, often name and geographic location \citep{fiscella2006use,elliott2009using,voicu2018using,imai2022addressing,chintalapati2023predicting}. 
For example, Bayesian Improved Surname Geocoding (BISG) \citep{elliott2009using}, a widely used method, was developed to estimate and assess racial disparities. Notably, the US Consumer Financial Protection Bureau uses BISG as a proxy to detect discriminatory lending practices, as creditors are generally prohibited from collecting race and ethnicity information \citep{bureau2014using,andrus2021we,chen2019fairness}. By and large, these demographic prediction methods are probabilistic classifiers that first produce continuous predictions; then, these continuous scores are discretized, either through an argmax rule or approximately so (as we verify in our commercial voter file), with probability thresholds \citep{bureau2014using,adjaye2014using,chen2019fairness,diamond2019effects}.   

Our empirical case study is in the context of voter registration data files, which form a backbone of public opinion and political science research for both academic and electoral ends; see \citet{ghitza2020voter} for an overview. Since not every US state makes public individual-level self-reported race/ethnicity information, commercially sold datasets come with predicted demographic data for each registered voter, generated through proprietary models \citep{ansolabehere2012validation}; these imputed demographic data are often provided both as continuous probabilities and discrete single labels \citep{ghitza2020voter}.

Many academic analyses (and from our experience, political campaign practitioners) use a voter file's discrete race and ethnicity labels \citep{fraga2016candidates,fraga2016redistricting,hersh2016primacy}. Similarly, many academic papers beyond voter file analyses use discrete labels, instead of continuous probability scores \citep{bureau2014using,adjaye2014using,diamond2019effects,zhang2018assessing,grumbach2020race,ghosh2021fair,pierson2020large}. We note that discretization is not strictly necessary in all applications; \citet{chen2019fairness} and \citet{mccartan2023estimating} present methods that utilize probabilistic outputs for disparity estimation, and \citet{deluca2022validating} sum probabilities directly when estimating the overall makeup of a population.  

\textit{Our contribution to this literature.} Our empirical results in \Cref{sec:empirics} show that the discretization process (typically argmax or closely related threshold rules) leads to substantial bias in a commonly used commercial voter file, leading to a severe undercounting of voters of color -- on top of the undercounting caused by continuous model miscalibration. We further find that a joint decision-making approach can eliminate discretization bias with negligible loss in individual-level accuracy. We thus caution against the use of existing discrete labels for sensitive applications where such bias would affect results.   Instead, we recommend that (a) data vendors produce discretized labels using a joint decision-making approach, as we do; (b) when possible, consumers of voter files directly use the \textit{continuous} scores instead of single discrete labels \cite{chen2019fairness,mccartan2023estimating,deluca2022validating}, or reconsider the use of imputed demographic data altogether. 

\section{Theoretical analysis}
\label{sec:proofs}

\subsection{Proofs}

Recall that a classifier $q$ is \textbf{calibrated} when its continuous predictions are correct on average, 
$\Pr(Y=y|q(y, x) = c) = c$.\footnote{For all $c$ of non-zero measure: $\forall c \in \left\{c : \left|\int_{(x, y)} \bbI[q(y, x) = c] f(x, y) d(y, x)\right| > 0 \right\}$.} Calibration implies Bayesian consistency, i.e., the continuous model cumulatively assigns the proper mass to each class $y$: $\bbE_{X}[q(y, x)] = \Pr(y)$.

\information*
\begin{proof}

\noindent\textbf{Proof for Part (i)}. The result immediately follows by definition of argmax. For the argmax class $z$, all points are classified to it and so $\hat p_{\textrm{marg}}(\cdot, z) = 1$, and so $\bias(z, \cdot, \cdot) = 1 - \pref(z)$ for any $\{x_i\}$. 

~\\\noindent\textbf{Proof for Part (ii)}. Similarly, the result immediately follows from the definition of the aggregate posterior or the prior. Due to perfect prediction, we have that $\bbE\left[\hat p_{\textrm{marg}}(\cdot, y)\right] = \Pr(y)$, for all $y$. With a consistent classifier (implied by our notion of calibration), we have that 
\begin{align*}
    \bbE[\pagg(y, \{x_i\})] &\triangleq \bbE\left[\frac1N \sum_i q(y, x_i)\right]
    = \Pr(y).
\end{align*}
~\\\noindent\textbf{Proof for Part (iii)}. The intermediate regime proof is more involved. For any continuous classifier $q$, denote the sets $S_a, S_b, S_c, S_d, S_e \subseteq \mX$ as
\[x \in   
\begin{cases}
        S_a & \text{if } q(z, x) = 1, \\
        S_b & \text{if } q(z, x) \in (\frac{1}{K}, 1)  \\
        S_c & \text{if } q(z, x) = \frac{1}{K} , \\
        S_d & \text{if } q(z, x) \in (0, \frac{1}{K})  \\
        S_e & \text{if } q(z, x) = 0 
    \end{cases} 
\] 
with probability mass $\ell \triangleq \int_{x \in X_\ell} f(x) dx$. Since the sets partition the entire probability space, we have $a+b+c+d+e=1$. 

For any class $y$, its amplification bias $\bias(y)$ is maximized when the most data points have $y$ has its decision, and so we suppose the argmax decision rule breaks ties in favor of $z$, i.e., the decision rule outputs $z$ when $q(z, x) \geq q(y', x)$ for all $y' \in \mY$. 

~\\\noindent \textbf{We start with an example such that \Cref{eqn:biasinequality} holds with equality. } Consider $q$ such that $b, d = 0$. Further suppose that for $x \in S_c$, we further have that $q(y, x) = \frac1K$, for all $y$. For $x \in S_e$, assume that there exists another class $y' \neq z$ such that $q(y', x) = 1$. 

Recall that, by assumption, $q$ is consistent, i.e.,
\begin{align}
     p(z) &= \bbE_{X}[q(z, x)] =  \int_{x \in S_a} 1 f(x) dx + \int_{x \in S_c} \frac1K f(x) dx + \int_{x \in S_e} 0 f(x) dx \\
     &= a + \frac{c}{K}.
\end{align} 

The key step is realizing that the \textit{calibration} requirement on $q$ gives us a bound for \mae, and relates bias to classifier accuracy. Recall that calibration means that 
\begin{align}
    & \bbE_{XY}[Y=y|q(y, x) = \delta] = \delta
\end{align}
which implies that $\bbE_{XY}[Y=z|x \in S_a] = 1$, and $\bbE_{XY}[Y=z|x \in S_e] = 0$, and $\bbE_{XY}[Y=z|x \in S_c] = \frac{1}{K}$. 

Equivalently, that $\int_{x\in S_a} f(x, z) = a$ and $\int_{x\in S_a} \sum_{y \neq z}f(x, y) = 0$. Similarly, $\int_{x\in S_e} f(x, z) = 0$, and $\int_{x\in S_c} f(x, z) = \frac{c}{K}$. 

Then, we have
\begin{align}
    \mae =& \int_{x}\sum_{y \in Y} f(x,y) |1-q(y, x)| dx\\
    =& \int_{x}\left[\left[f(x,z) |1-q(z, x)| \right] + \left[\sum_{y \neq z} f(x,y) |1-q(y, x)| \right]\right] dx\\
    =& \int_{x \in S_a}\left[\left[f(x,z) 0 \right] + \left[\sum_{y \neq z} f(x,y) 1 \right]\right] & \text{defns of } S_\ell\\
    &+ \int_{x \in S_e}\left[\left[f(x,z) 1 \right] + \left[\sum_{y \notin \{z, y'\}} f(x,y) 1 \right] + \left[ f(x,y') 0 \right]\right] \\
    &+ \int_{x \in S_c}\left[\left[f(x,z) \frac{K-1}{K} \right] + \left[\sum_{y \neq z} f(x,y) \frac{K-1}{K} \right]\right] \\
    =& 0 + 0 + c\left[\frac{K-1}{K}\right] & \text{calibration implications} \label{eqnpart:usecalibration}
\end{align}

Note that argmax (breaking ties in favor of $z$) assigns label $z$ for all $x \in S_a, S_c$. Thus, we have
\begin{equation}
p_\mathrm{argmax}(z) = a + c.    
\end{equation}
Putting things together, we have
\begin{align*}
    \bias(z) &= p_\mathrm{argmax}(z) - p(z) \\
    &= a + c - (a + \frac{c}{K})  = c \frac{K-1}{K}\\
     &= \mae. 
\end{align*}

\noindent \textbf{Proof that the above is also an upper bound in \Cref{eqn:biasinequality}, that that is the worst case example. }
Now we prove that this bound is the worst case for any calibrated (and thus consistent) $q_0$. The framework for the proof is, for any given setting (joint distribution $F_0$, calibrated and consistent $q_0$) and fixed class $z$, to transform the setting into the worst-case example above, where each transformation: (a) maintains the original marginal distribution $\Pr(y)$ of $F$; (b) either maintains or increases the mass of points $x$ assigned to class $z$ (and thus either maintaining or increasing \bias); and (c) does not increase \mae. Then, since the worst-case example satisfies the inequality, we have shown that the original example does as well. 

Our proof below is notationally simpler with continuous $x \in \mX$, such that there is no point mass ($F(\{x\}) = 0$, for all $x$). Thus, for discrete $\mX$, we first transform it by adding a continuous feature dimension, where the corresponding feature $x_{\text{new dimension}} \in [0, 1]$ is distributed uniform at random, independent of the other features and $y$. 

Recall that we assume that ties are broken consistently, with some ordering over the classes $y \in \mY$. Thus, we can consider the class $z$ such that any ties with any other class are broken in its favor: this class has the worst-case bias; for example for $F_0, q_0$ such that another class exhibits more bias, we can relabel the points (and so $\mae$ is fixed) such that $z$ has the largest bias, at least as large as the original example. Thus, proving the result for $z$ is sufficient.

We start with the case for $K = 2$, i.e., binary decision-making, and then show how the proof extends generally. The idea is that for any $x \in S_b$ (i.e., $q_0(z, x) \in (\frac{1}{K}, 1)$), we can transform the example such that $x$ becomes in either $S_a$ or $S_c$ while maintaining the original class distribution $\Pr(y)$ and the fraction of points discretized to $z$, and preserving $\mae$. Similarly, for any $x \in S_d$, we can move $x$ to either $S_c$ or $S_e$ while maintaining consistency, increasing bias, and preserving \mae. 

Step 1: Suppose mass $b$ of $S_b$ is positive, $b > 0$. Then, we move a set $S_{b \to a}$ to $S_a$ and a set $S_{b \to c}$ to $S_c$, respectively, maintaining consistency, where $S_{b \to a}$, $S_{b \to c}$ are a partition of $S_b$. In the binary example, we know that all $x$ are still discretized to $z$, since $q(z, x) > 1/2$ (and will remain so), and so $p_\mathrm{argmax}$ is fixed. Recall that consistency requires $p(z) = \bbE_{X}[q(z, x)] =  \int_{x \in \mX} q(z, x) f(x) dx$. We can choose sets $S_{b \to a}$, $S_{b \to c}$ to maintain consistency and the class distribution of $z$ by satisfying the following:
\begin{align}
     \int_{x \in S_b} q_0(z, x) f(x) dx &= \int_{x \in S_{b \to a}} 1 f(x) dx + \int_{x \in S_{b \to c}} \frac1K f(x) dx\\
    \iff \int_{x \in S_{b \to a}} q_0(z, x) f(x) dx + \int_{x \in S_{b \to c}} q_0(z, x) f(x) dx &= \int_{x \in S_{b \to a}} 1 f(x) dx + \int_{x \in S_{b \to c}} \frac1K f(x) dx \\
     \iff  \int_{x \in S_{b \to c}} (q_0(z, x) - \frac1K) f(x) dx &= \int_{x \in S_{b \to a}} (1 - q_0(z, x)) f(x) dx \label{eqn:maintainconsistency}
\end{align} 
Note that such sets exist because $F_0$ is a continuous distribution in $\mX$, and so expanding the set $S_{b \to c}$ monotonically and continuously increases the left-hand side, while monotonically and continuously decreasing the right-hand side. 

Next, we claim that this transformation preserves \mae. As we used in \Cref{eqnpart:usecalibration}, calibration implies that the transformed error $\mae_1$ within the subset $S_{b \to a}$ is $0$, and $\mae_1$ within the subset $S_{b \to c}$ is $F(S_{b \to c})\frac{K-1}{K}$. Recall that $\mae = \sum_{y\in\mY} \int_{x \in \mX} f(x,y) (1-q(y, x)) dx$. Then, 
\begin{align}
    \mae_1 &- \mae_0\\
    &= \sum_{y\in\mY} \left[\left[\int_{S_{b \to c}} f(x,y) \left[(1 - \frac1K) - (1-q_0(y, x))\right] dx\right] + \left[\int_{S_{b \to a}} f(x,y) \left[(1 - 1) - (1-q_0(y, x))\right] dx\right]\right]  \\       
    &= \sum_{y\in\mY} \left[\left[\int_{S_{b \to c}} f(x,y) \left[q_0(y, x) - \frac1K\right] dx\right] - \left[\int_{S_{b \to a}} f(x,y) \left[1-q_0(y, x)\right] dx\right]\right]  \label{eqnpart:provemaeconstant2part}\\       
    &= 0    
\end{align}
Where the last line follows from \Cref{eqn:maintainconsistency} (the sets were chosen to maintain consistency, and this also maintains \mae). Note that \Cref{eqn:maintainconsistency} only has this condition for the focal class $z$, whereas \Cref{eqnpart:provemaeconstant2part} sums over classes. Note that, in the $K=2$ example, maintaining consistency (and class distribution) of one class automatically maintains it for the other class, and that the sets $\{x: q(z, x) \in (1/K, 1)\}$ and  $\{x: q(y, x) \in (0, 1/K)\}$, $y \neq z$ are equivalent. 

Thus $\mae_1 = \mae_0$ and $\bias_1 = \bias_0$.

~\\Step 2: Similarly, now suppose mass $d$ of $S_d$ is positive, $d > 0$. Then, we move a set $S_{d \to c}$ to $S_c$ and a set $S_{d \to e}$ to $S_e$, respectively, maintaining consistency and $\mae$ while \textit{increasing} bias (in $S_d$, points are not discretized to $z$, but they are in $S_{d \to c}$ under the transformed example). 

As in Step 1, choose the sets to maintain consistency and the class distribution of $z$ by:
\begin{align*}
    \int_{x \in S_{d \to e}} (q_1(z, x) - 0) f(x) dx &= \int_{x \in S_{d \to c}} (1/K - q_1(z, x)) f(x) dx
\end{align*}
And then we have
\begin{align}
    \mae_2 &- \mae_1\\
    &= \sum_{y\in\mY} \left[\left[\int_{S_{d \to e}} f(x,y) \left[(1 - 0) - (1-q_1(y, x))\right] dx\right] + \left[\int_{S_{d \to c}} f(x,y) \left[(1 - \frac{1}{K}) - (1-q_1(y, x))\right] dx\right]\right]  \\       
    &= \sum_{y\in\mY} \left[\left[\int_{S_{d \to e}} f(x,y) \left[q_1(y, x)\right] dx\right] - \left[\int_{S_{d \to c}} f(x,y) \left[\frac{1}{K} - q_1(y, x)\right] dx\right]\right]  \\       
    &= 0,
\end{align}
and so $\bias_2 \geq \bias_1$ and $\mae_2 = \mae_1$.

Thus, we have
\begin{align*}
    \bias_0 &= \bias_1 \leq \bias_2 \\
    &= \mae_2 & \text{Equivalent to example for worst case where showed equality} \\
    &= \mae_1 = \mae_0.
\end{align*}
~\\\noindent Finally, we must show that the above argument holds for $K \geq 2$. For the $K=2$ case, the notation is simpler because transforming $q(z, x)$ also symmetrically transforms $q(y, x)$ for the class $y \neq z$, and that the sets $\{x: q(z, x) \in (1/K, 1)\}$ and  $\{x: q(y, x) \in (0, 1/K)\}$ are equivalent. Thus, if $x \in S_{b \to c}$, then, simultaneously, we were moving $q(y, x)$ from less than $1/2$ to exactly $1/2$. 

This symmetry no longer holds for $z$ and a \textit{fixed} $y \neq z$ because these factors depend on other classes as well (one can increase $q(z, x)$ while holding fixed $q(y, x)$). However, we show that we can still construct a sequence of transformations as follows.

While the mass of set $S_b$ is more than zero, note that for each $x \in S_b$, there exists \textit{at least one} class $y' \neq z$ such that $q(y', x) \in (0, 1/K)$. In the general case, then, we can construct sets $S_{b \to a}$, $S_{b \to c}$ where we change $q(z, x)$ as before while, simultaneously, moving $q(y', x)$ toward $0$ or $1/K$, to maintain consistency and $\sum_y q(y, x) = 1$ for all $x$. Note that the corresponding transformations all maintain $\mae$ and $\bias(z)$, while potentially increasing $\bias(y')$. Similarly, while the mass of set $S_d$ is non-zero, there exists  \textit{at least one} class $y' \neq z$ such that $q(y', x) \in (1/K, 1)$, and one can make the parallel transformations. 

Thus, for any $K$, these transformations maintain $\mae$ without decreasing $\bias$, and transform the setting to the worst-case example $F, q$, proving the result.
\end{proof}

\jointpareto*
\begin{proof} 
\noindent\textbf{Proof for Part (i)} The result immediately holds for joint decision-making rules, by construction. Consider the multi-objective optimization decision rules defined in \eqref{eq:optdecisionrules}: 

\begin{equation*}
D^\gamma(\{q(x_i)\}, p_{\textrm{ref}}) = {\arg\max}_{\{\hat y_i\}} \gamma \left(\frac{1}{N}\sum^N_{i=1} q(\hat y_i, x_i)\right) + (1-\gamma) \fid(p_{\textrm{ref}},\hat y_{1:N}).
\end{equation*}

Now, note that the objective can be decomposed using the tower property, as follows:
\begin{align*}
O^\gamma_N(D, q, \pref) &= \gamma \eacc_N(D, q) + (1 - \gamma) \efid_N(D, q, \pref) \\
&=  \bbE_{F_{\{X\}}} \left[ \bbE_{F_{\{Y\} \given \{X\}}} \left[\gamma \acc(y_i, D(\{q(x_i)\}))  + (1 - \gamma) \fid(\pref, D(\{q(x_i)\})) \given \{X_i\} = \{x_i\} \right]\right] \\
&=  \bbE_{F_{\{X\}}} \left[ \bbE_{F_{\{Y\} \given \{X\}}} \left[\gamma \frac1N\sum_i^N\mathbb{I}\left[D(\{q(x_i)\}) = y_i\right]  + (1 - \gamma) \fid(\pref, D(\{q(x_i)\})) \given \{X_i\} = \{x_i\} \right]\right] \\
&=  \bbE_{F_{\{X\}}} \left[ \bbE_{F_{\{Y\} \given \{X\}}} \left[\gamma \frac1N\sum_i^N q(D(\{q(x_j)\})_i, x_i)  + (1 - \gamma) \fid(\pref, D(\{q(x_i)\})) \given \{X_i\} = \{x_i\} \right]\right]
\end{align*}

Where the last line follows from, given Bayes optimal $q$, we have for all $\hat y_i$ 
\begin{align*}
    q(\hat y_i, x_i) &=  \bbE_{F_{Y \given X}}[ Y = \hat y_i | X = x_i].
\end{align*}
Thus, the optimization decision rule maximizes the inner expectation in the final line, for each given dataset $\{x_i\}$. Thus, the corresponding rule maximizes $O^\gamma_N(D, q, \pref)$, as desired. 

~\\\noindent\textbf{Proof for Part (ii)}

The result follows directly from the classifier being Bayes optimal -- for each row, you cannot do better than picking the most likely class for that row. 

Recall that $D(\{q(x_i)\})$ refers to the set of decisions $\hat y_{1:N}$ with dataset $x_{1:N}$, where the decisions can be jointly made across data points. Note that, for the Bayes optimal classifier $q^*$, we have
\begin{align*}
    \bbE_{F_{Y \given X}}\left[\mathbb{I}\left[  Y = y\right] \given X = x \right] &= q^*(y, x)
\end{align*}
Thus, we have:
\begin{align}
O_{N}^1(D, q^*, \cdot) &= \eacc_N(D, q^*) \\
&= \bbE_F\left[ \acc(y_{1:N}, D(\{q^*(x_i)\})) \right] \\
&= \bbE_F\left[ \frac1N \sum_i \mathbb{I}\left[ \hat y_i = y\right] \right] & \text{let } \hat y_i \triangleq D(\{q^*(x_j)\}_{j=1}^N)_i\\
&= \bbE_{F_{\{X\}}}\left[\bbE_{F_{\{Y\} | \{X\}}}\left[ \frac1N \sum_i \mathbb{I}\left[ Y_i = \hat y_i\right] \given \{X_i\} = \{x_i\} \right] \right] & \text{tower property} \\
&= \bbE_{F_{\{X\}}}\left[\frac1N \sum_i \bbE_{F_{Y | X}}\left[ \mathbb{I}\left[ Y = \hat y_i\right] \given X = x_i \right] \right] & \text{independent sampling of datapoints} \\
&= \bbE_{F_{\{X\}}}\left[\frac1N \sum_i q^*(\hat y_i, x_i) \right] \\
&\leq \bbE_{F_{\{X\}}}\left[\frac1N \sum_iq^*(\arg\max_y q^*(y, x_i), x_i) \right] \\
&\triangleq O_{N}^1(D_{\textrm{argmax}}, q^*, \cdot).
\end{align}
~\\\noindent\textbf{Proof for Part (iii)}. The proof for Part (iii) is more involved. Note that both components of the objective are non-negative, and from Part (i) we have examples of joint decision rules that are optimal for every data sample $\{x_i\}$. Thus, to prove that a rule is not Pareto optimal for a given $F$, it is sufficient to show there exists a dataset sample $\{x_i\} \sim F_{X}$ with positive probability for which the rule is not optimal for that sample. Since we are considering independent rules and a fixed $F, q$, we will denote $D(q(x))$ as the decision for datapoint $x$ for notational simplicity. 

\textbf{Binary setting, K = 2}. For clarity of exposition, we first prove the result in detail for a binary classification setting where $y \in \{0, 1\}$, and then extend the proof for all $K$. 

Consider independent decision rule $D$ that is not identical to the argmax rule. 

    \textbf{Case 1. Non-monotonic or random rules. } If a rule, with positive probability, is such that $q(1, x_1) > q(1, x_2)$ but $D(q(x_2)) = 1$ and $D(q(x_1)) = 0$, then it is suboptimal with positive probability. There exists some $D'$ where $D'(x) = D(x)$ for all $x$ except that $D(x_1) = 1$ and $D(x_2) = 0$. Then the marginal distribution $\pmarg$ remains the same and the expected accuracy of $D'$ is greater than that of $D$.

    \textbf{Case 2. Threshold rules} Thus, it is sufficient to consider rules with a threshold $p$, where $D_p(q(x)) = 1 \text{ if and only if } q(1, x) \geq p$, and $0$ otherwise. Note that $p = 0.5$ corresponds to the argmax decision rule. We now prove that for any other $p \neq 0.5$, with positive probability there exists a test sample such that the rule is suboptimal. Without loss of generality, assume $p < .5$, and a similar proof holds for $p > .5$. 

    By assumption, we have that $\exists \delta > 0$ such that $D_p$ disagrees with the argmax rule with positive probability $\delta$, i.e., $\Pr( q(1, x) \in (p, .5)) = \delta$. Then, with probability equal to $\delta^N > 0$, we have that the \textit{entire} data sample $x_{1:N}$ is such that $q(1, x_i) \in (p, .5)$, and so $D_p$ (which has $\hat y_i = 1, \forall i$) disagrees with the argmax rule on \textit{every} datapoint (which has $\hat y_i = 0, \forall i$). By the assumption of non-trivial reference distribution $\pref \in \mathcal P_N$, we have 
\begin{align*}
    p_{\textrm{ref}}(0, \{x_i\}) \geq \frac1N
\end{align*}
and so we can improve both accuracy and distributional fidelity by switching a decision on at least one point from $\hat y_i = 1$ to $\hat y_i = 0$. 
    
~\\\noindent\textbf{Extending argument for all $K \geq 2$.} Now, we show that the argument holds for any $K$, for any $F, q$ -- either an independent decision rule $D$ is equivalent to the argmax rule (with probability 1, they make the same decisions), or it is not on the Pareto curve. The argument is similar to the above, and is as follows. 

Consider a class $y$ such that there exists a set $X \subseteq \mX$ with $F(X) > 0$ such that, for $x \in X$, the argmax rule discretizes $x$ to $y$ but the rule $D$ does not: $D_{\textrm{argmax}}(q(x)) = y$, but $D(q(x)) \neq y$. Such a class $y$ and set $X$ exists since $D$ is not equivalent to $D_{\textrm{argmax}}$. 

Now, consider a dataset sample such that, $x_i \in X$ for all $i$. This happens with positive probability $F(X)^N$. For this sample, $D$ disagrees with $D_{\textrm{argmax}}$ on \textit{every} data point $x_i$, and note that $D_{\textrm{argmax}}(q(x_i)) = y$ for all $x_i$. By the assumption of non-trivial reference distribution $\pref \in \mathcal P_N$, we have 
\begin{align*}
    p_{\textrm{ref}}(y, \{x_i\}) \geq \frac1N,
\end{align*}
i.e., matching the reference distribution would allocate at least one data point to $y$. Thus, we can improve both accuracy and fidelity to the reference distribution by switching the decision of at least one point to $y$ (in particular, finding a class $y'$ that is over-allocated compared to the reference distribution). 
\end{proof}

\subsection{Simulating the Pareto Curve of Methods}

\Cref{fig:pareto-simulation} demonstrates \Cref{thm:pareto} with a simulated, Bayes optimal predictor. Data is generated as described in \Cref{sec:methodsdetails}, using $\sigma=0.5$ $(\mae = 0.42)$.
    \begin{figure}[!h]
    \centering
    \includegraphics[width=0.45\linewidth]{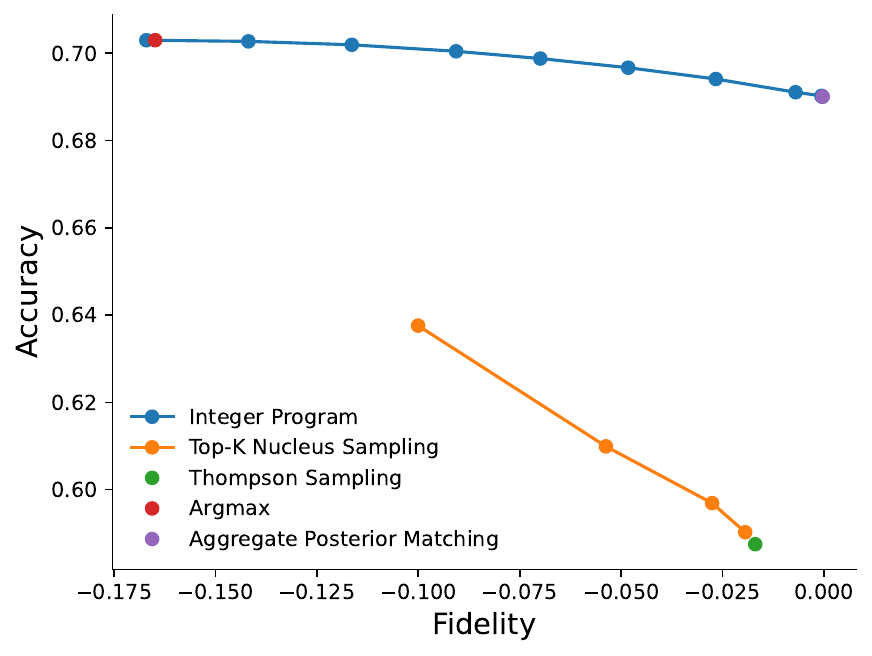}
    \caption{In simulation, the performance of various commonly used decision rules. The results illustrate \Cref{thm:pareto}: the argmax rule maximizes accuracy, but is the only Pareto optimal independent rule. Simulation details and decision rules are described in \Cref{sec:simulationsetup}.}
    \label{fig:pareto-simulation}
    \end{figure}

\subsection{Simulating Worst-Case Bias}

We now draw data according to the worst-case example discussed in the proof of \Cref{thm:informationargmax}. \Cref{fig:worstcasebias} replicates \Cref{fig:info-bias} from the main text, with this alternative synthetic data approach. Note that the bias of the most common class is equal to $\mae$, as we prove in \Cref{thm:informationargmax}. Class $1$ corresponds to the plurality class $z$, while the other classes all have equal probability. In the language of our proof, we have $e = 0$ and $a = 1-c$, and we vary $c$ in the plot.

\label{sec:worstcasesimulation}
\begin{figure}[!h]
    \centering
    \includegraphics[width=0.5\linewidth]{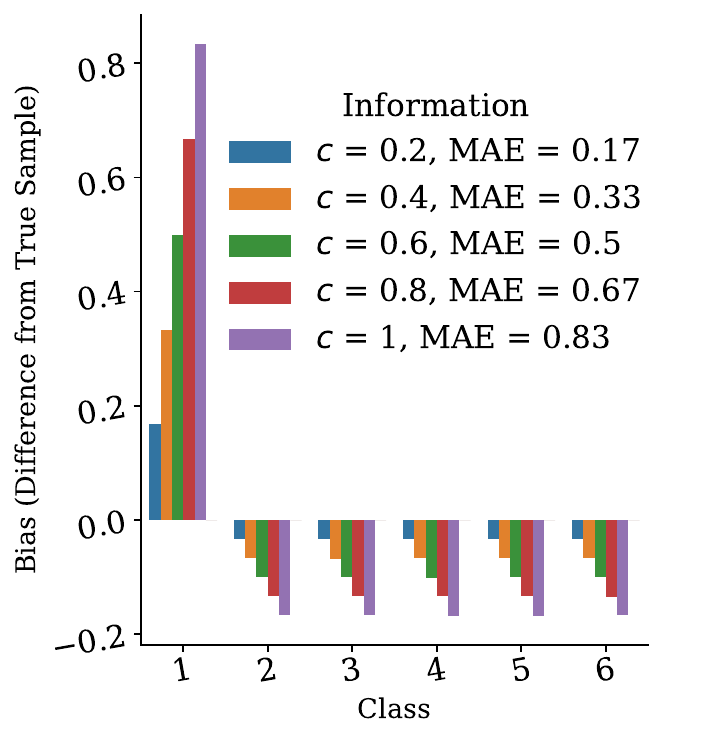}    
    \caption{Simulated bias for each class for the worst-case example in which bias of the most common class is equal to the $\mae$.}
\label{fig:worstcasebias}
\end{figure}

\FloatBarrier
\section{Additional empirical analyses}

\subsection{Model calibration} \label{sec:calibration}

\Cref{fig:calibration-nc} shows the calibration curve of the voter file's continuous scores, assessed in North Carolina for individuals for whom self-reported ground truth data is available. 

\begin{figure}[!h]
    \centering
    \includegraphics[width=0.5\linewidth]{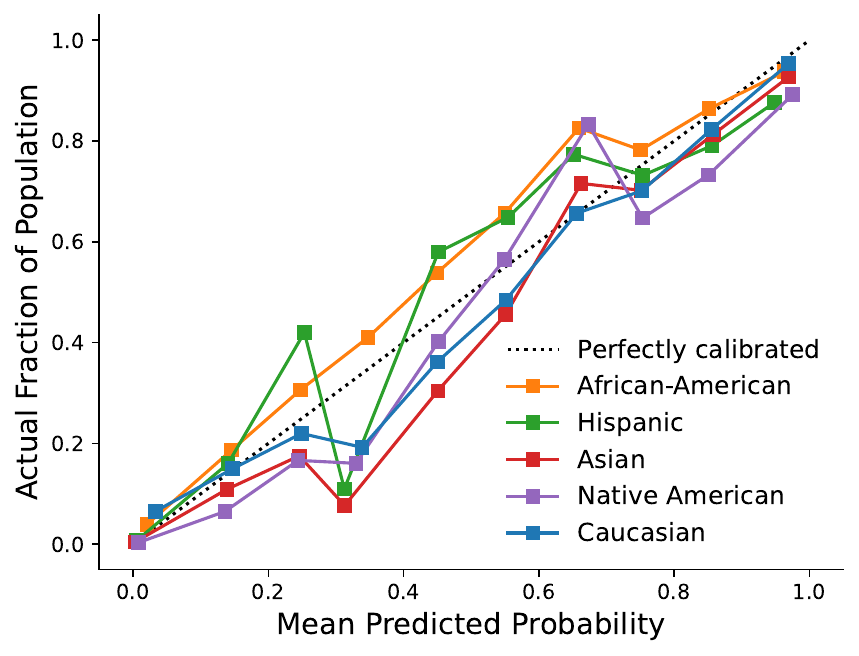}    
\caption{Calibration plot for the continuous scores in the voter file, assessed using North Carolina individuals for whom self-reported ground truth data is available. Mathematically, the $x$ axis corresponds to (binned) $q(y, x)$, and the $y$ axis corresponds to $\bbE[Y = y | q(y, x) = c]$. Thus, a perfectly calibrated classifier would be on the $y = x$ line for each group. Note that \textit{African-American} is fully above the line, meaning that for any predicted fraction $c$, a \textit{higher} proportion of the population is actually \textit{African American}. Thus, the classifier is under-predicting the group, leading to even the aggregate posterior under-counting the group in \Cref{fig:votercounts}. Conversely, \textit{Caucasian} is primarily below the line, indicating that the continuous model is over-predicting the group.}
\label{fig:calibration-nc}
\end{figure}

\subsection{Extended Analysis} 

As described in \Cref{sec:methodsdetails}, we empirically examine and plot additional discretization methods, such as Top-\textit{k} sampling. Results are in \Cref{fig:votercounts-extended,tab:table-extended}. These methods are also included in our replication analyses in \Cref{sec:public,sec:states,sec:dropuncoded}.

\begin{figure}[!h]
	\centering
	\includegraphics[width=0.5\linewidth]{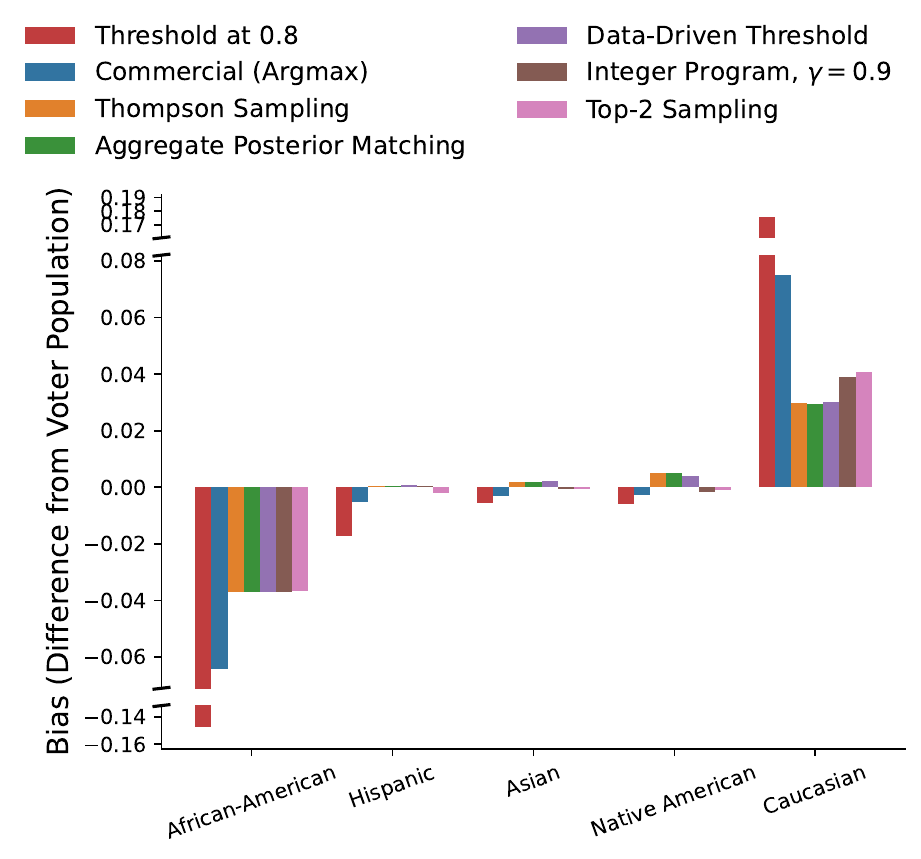}
	\caption{Extension of \Cref{fig:votercounts} with additional decision-making rules plotted.}
	\label{fig:votercounts-extended}
\end{figure}

\begin{table}[tb]
	\begin{center}
		\begin{tabular}{|c|c|c|c|}
			\hline
			& \text{Accuracy with} & \text{Fidelity to} & \text{Fidelity to} \\
			& \text{Ground Truth} & \text{Ground Truth} & \text{Aggregate Posterior} \\
			\hline\hline
			Threshold at 0.8 (Among $67.9\%$ Not Dropped) & \textbf{0.928} & -0.352 & -0.293 \\ 
			Integer Program, $\gamma = 0.9$ & 0.846 & -0.079 & -0.019 \\         Commercial (Argmax) & 0.844 & -0.150 & -0.091 \\ 
			Aggregate Posterior Matching & 0.841 & -0.074 & \textbf{-0.000} \\ 
			Data-Driven Threshold Matching Heuristic & 0.841 & -0.074 & -0.002 \\         
			\text{County-Conditional Aggregate Posterior Matching} & 0.834 & -0.074 & -0.000 \\
			Thompson Sampling & 0.776 & -0.074 & -0.000 \\ 
			Top-2  Sampling & 0.787 & -0.081 & -0.023 \\ 
			Ground Truth Marginal Matching & 0.841 & -0.000 & -0.074 \\ 
			\text{County-Conditional Ground Truth Marginal Matching} & 0.840 & \textbf{-0.000} & -0.074 \\ \hline
		\end{tabular} 
		\caption{Extension of \Cref{tab:table-NC} with additional decision-making rules plotted. Note that access to the ground truth distribution makes bias minimization trivial, while achieving accuracy comparable to the aggregate posterior.}
		\label{tab:table-extended} 
	\end{center}
\end{table}

\subsection{Replication on Public Data} \label{sec:public}

We replicate our analysis and results from publicly available data in North Carolina on imputation models developed by \citet{greengard2024improved}, whose data and code are publicly available for replication \citep{DVN/QIM4UF_2023}. Their data in North Carolina ($N=\num{4233012}$) uses $K = 6$ race/ethnicity categories: 
(non-Hispanic) \textit{Black}, \textit{American
Indian and Alaskan Native (AIAN)}, \textit{Asian and Pacific Islander (API)}, \textit{White}, and \textit{Hispanic}, with \textit{Hispanic} encompassing voters of any racial background.

We examine two sets of models and predicted probabilities used by \citet{greengard2024improved}: Bayesian Improved Surname Geocoding (BISG) (the standard race prediction algorithm), as well their novel contribution, a raking-based improved model. Their implementation of BISG uses voter file information. 

\Cref{fig:calibration-public} shows calibration plots for the two models provided; BISG in particular is calibrated for both White and Black groups. In \Cref{fig:votercounts-public}, we observe that argmax bias persists, even when a model may be well-calibrated. Notably, even when aggregate probabilities may overrepresent \textit{Black} voters and underrepresent \textit{White}, argmax discretization \textit{still} leads to undercounting for the former and overcounting for the latter. However, because the Raking predictive model comparatively under-predicts \textit{White} individuals, all discretization approaches that use the aggregate posterior under-count the \textit{White} class. \Cref{fig:pareto-public} shows that, with more calibrated models, argmax is accuracy-maximizing over joint rules, more closely resembling the Bayes optimal setting in simulation \Cref{fig:pareto-simulation}. Lastly, \Cref{fig:maps-appendix} replicates \Cref{fig:maps}.

\begin{figure}[!h]
    \centering
    \begin{subfigure}{.45\textwidth}
    \centering
    \includegraphics[width=\linewidth]{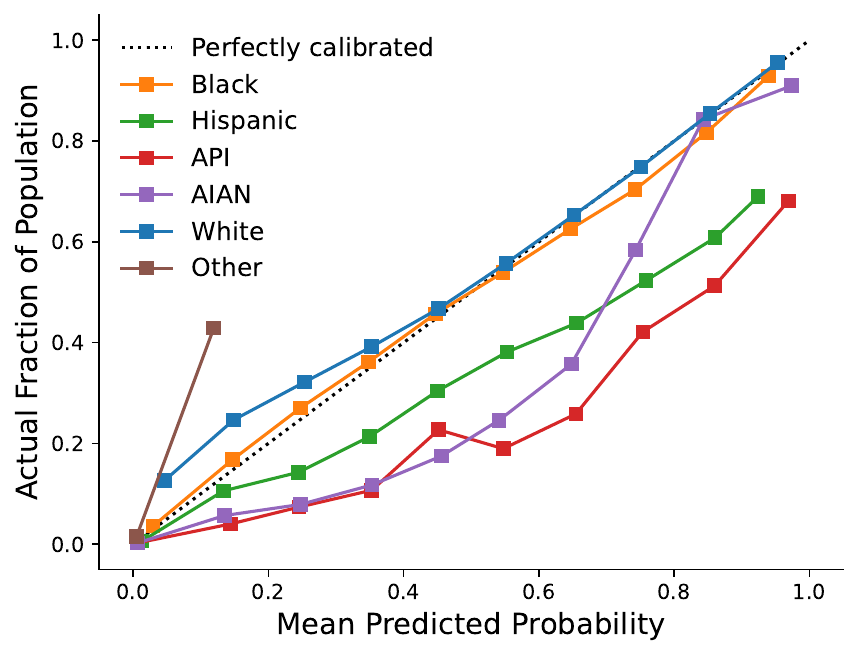}
    \caption{BISG.}
    \label{fig:calibration-bisg}    
    \end{subfigure}
    \hfill
    \begin{subfigure}{.45\textwidth}
    \centering
    \includegraphics[width=\linewidth]{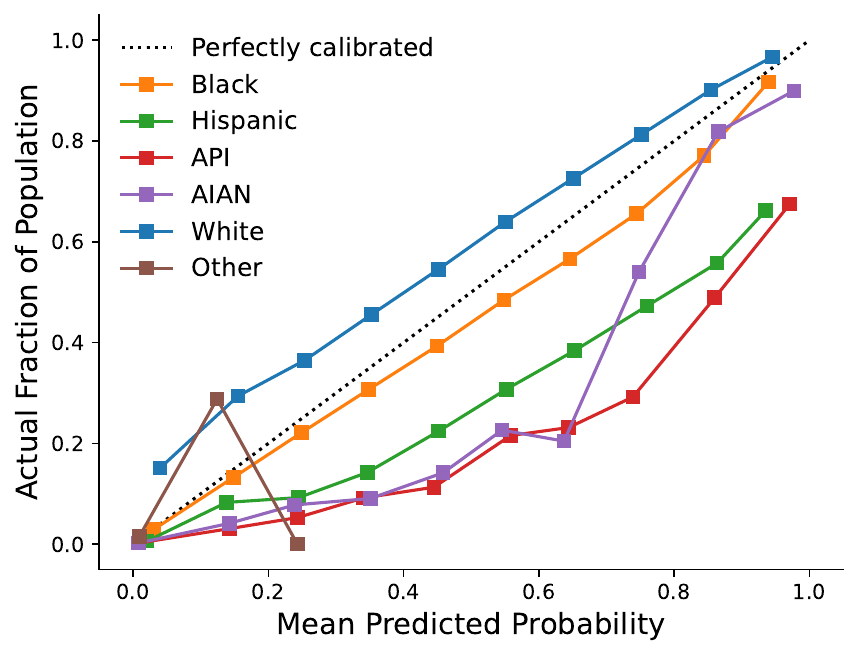}
    \caption{Raking.}
    \label{fig:calibration-raking}    
    \end{subfigure}
    \caption{Replicating \Cref{fig:calibration-nc} with the data and predicted probabilities in \citet{greengard2024improved}.}
    \label{fig:calibration-public}
\end{figure}

\begin{figure}[!h]
    \centering
    \begin{subfigure}{.45\textwidth}
    \centering
    \includegraphics[width=\linewidth]{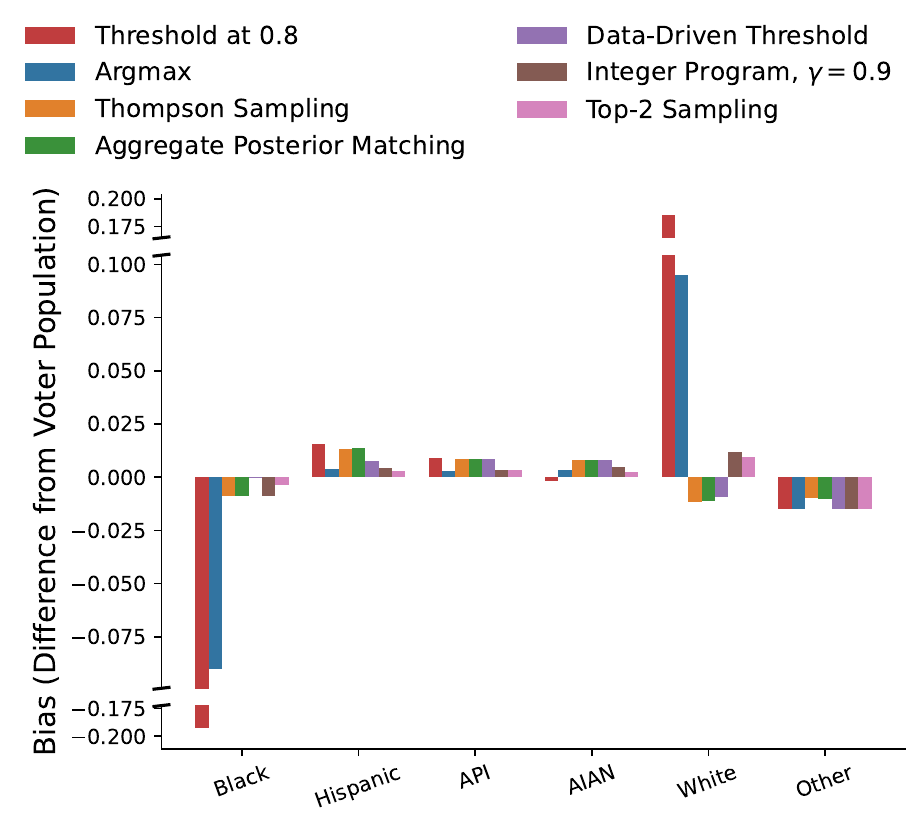}
    \caption{BISG.}
    \label{fig:votercounts-bisg}
    \end{subfigure}
    \hfill
    \begin{subfigure}{.45\textwidth}
    \centering
    \includegraphics[width=\linewidth]{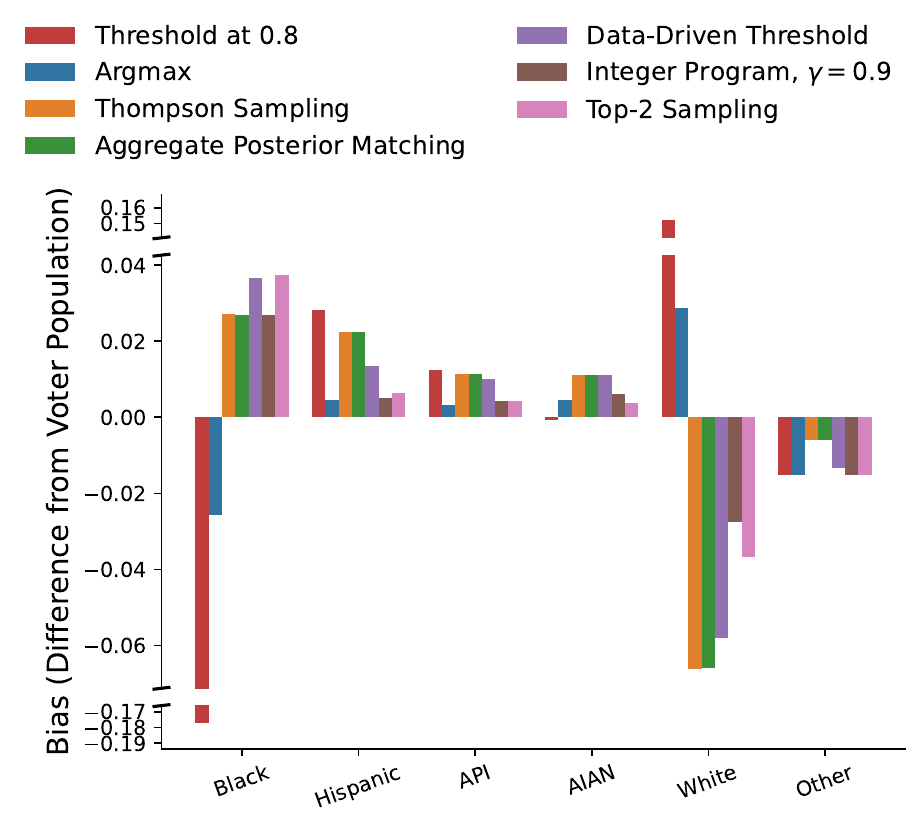}
    \caption{Raking.}
    \label{fig:votercounts-raking}
    \end{subfigure}
    \caption{Replicating \Cref{fig:votercounts} with the data and predicted probabilities in \citet{greengard2024improved}.}
    \label{fig:votercounts-public}
\end{figure}

\begin{figure}[!h]
    \centering
    \begin{subfigure}{.45\textwidth}
    \centering
    \includegraphics[width=\linewidth]{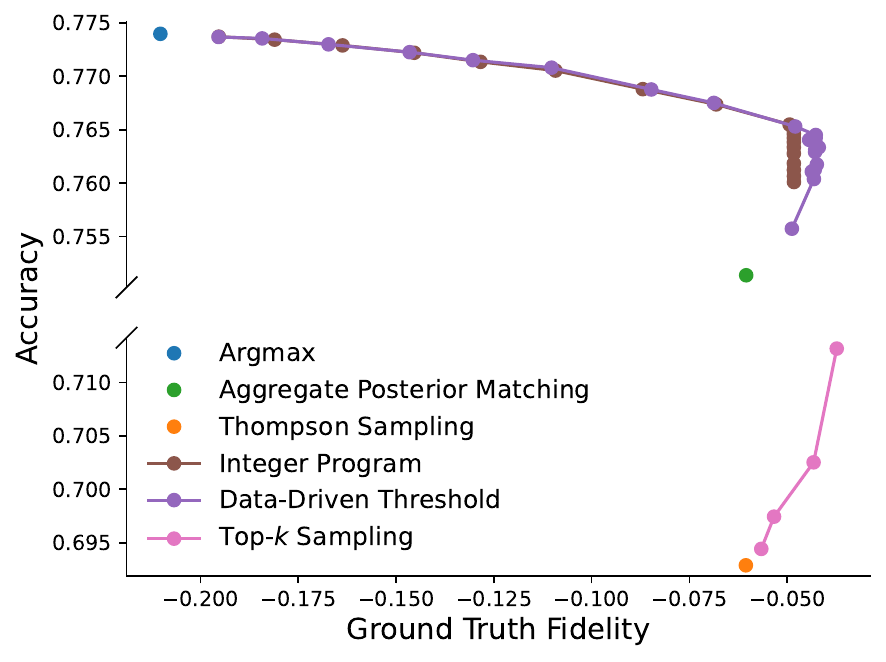}
    \caption{BISG.}
    \label{fig:pareto-bisg}
    \end{subfigure}
    \hfill
    \begin{subfigure}{.45\textwidth}
    \centering
    \includegraphics[width=\linewidth]{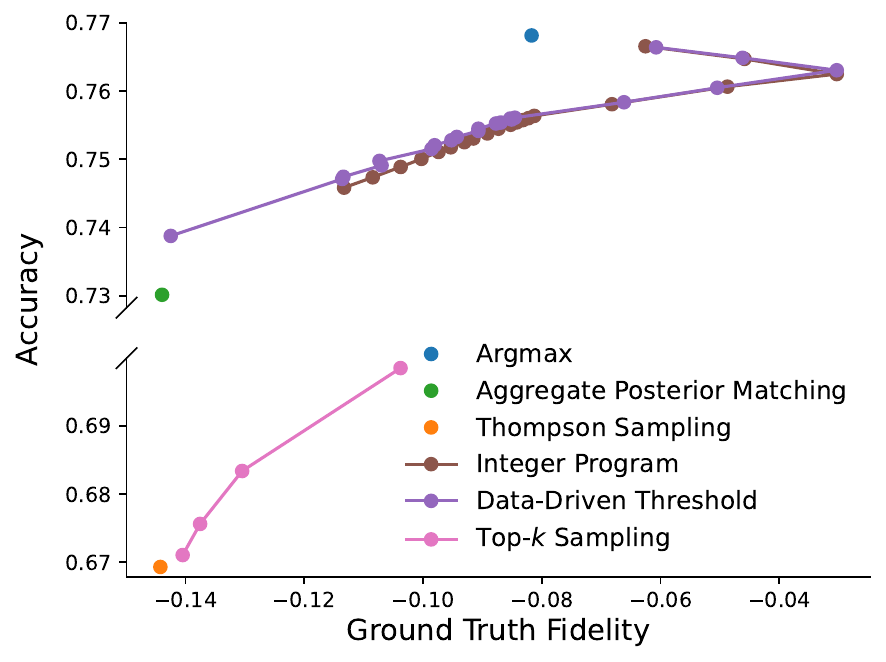}
    \caption{Raking.}
    \label{fig:pareto-raking}
    \end{subfigure}
    \caption{Replicating \Cref{fig:pareto-NC} with the data and predicted probabilities in \citet{greengard2024improved}. Note that in \Cref{fig:pareto-raking}, the aggregate posterior is further from the ground truth distribution than the argmax solution, as the predictive model underrepresents the white majority (see \Cref{fig:calibration-raking}), and so moving methods that increase fidelity to the Aggregate Posterior actually decrease fidelity to the Ground Truth.}
    \label{fig:pareto-public}
\end{figure}

\begin{figure}[!h]
	\centering
    \begin{subfigure}{.3\textwidth}
        \centering
        \includegraphics[width=1.25\linewidth]{images/popmap_ap.pdf}
        \caption{White population.}
        \label{fig:popmap-public}
    \end{subfigure}	
    \hfill    
    \begin{subfigure}{.3\textwidth}
        \centering
    \includegraphics[width=.85\linewidth]{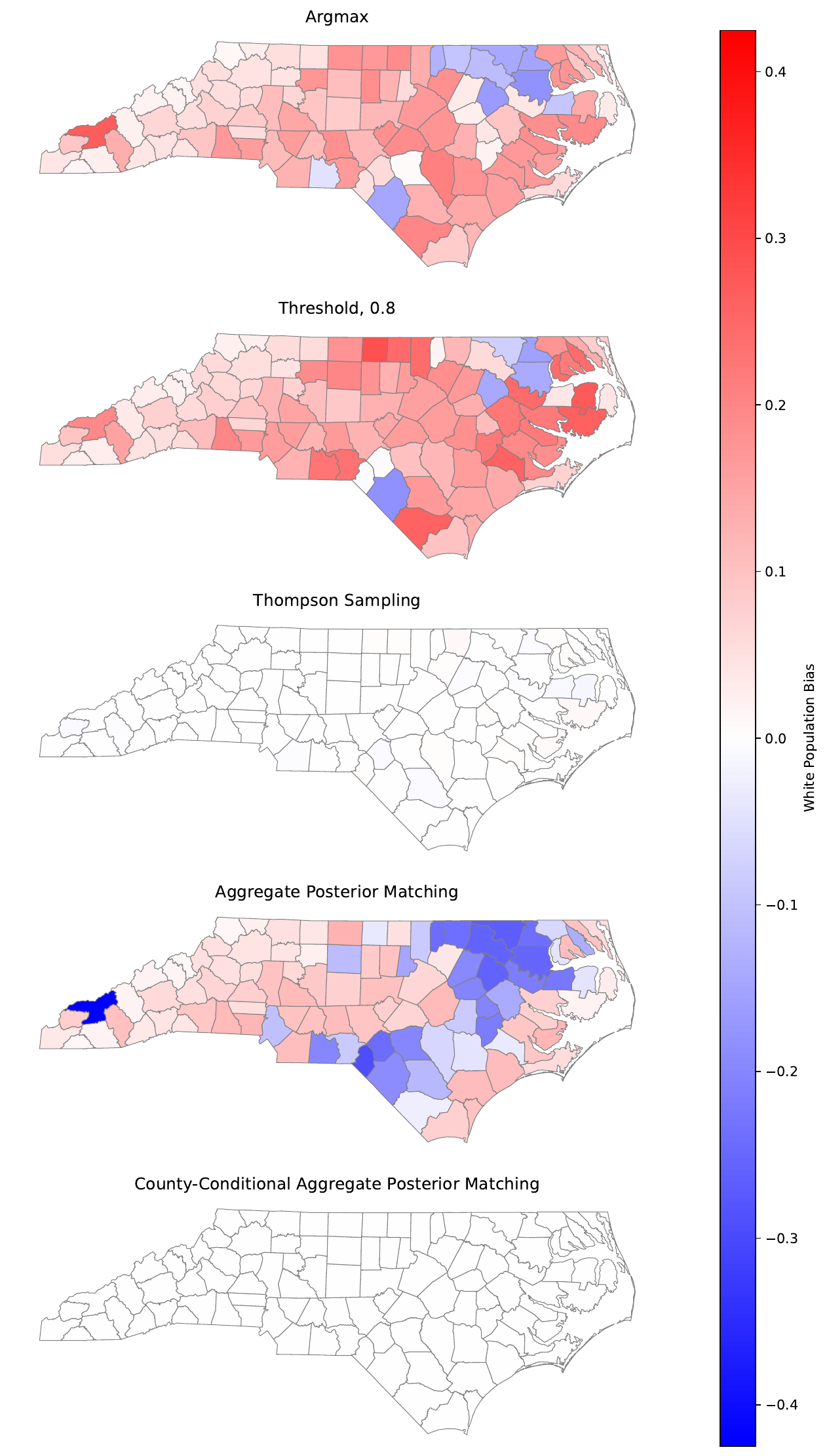}
	\caption{BISG.}
	\label{fig:biasmap-bisg}
    \end{subfigure}
    \begin{subfigure}{.3\textwidth}
        \centering
    \includegraphics[width=.85\linewidth]{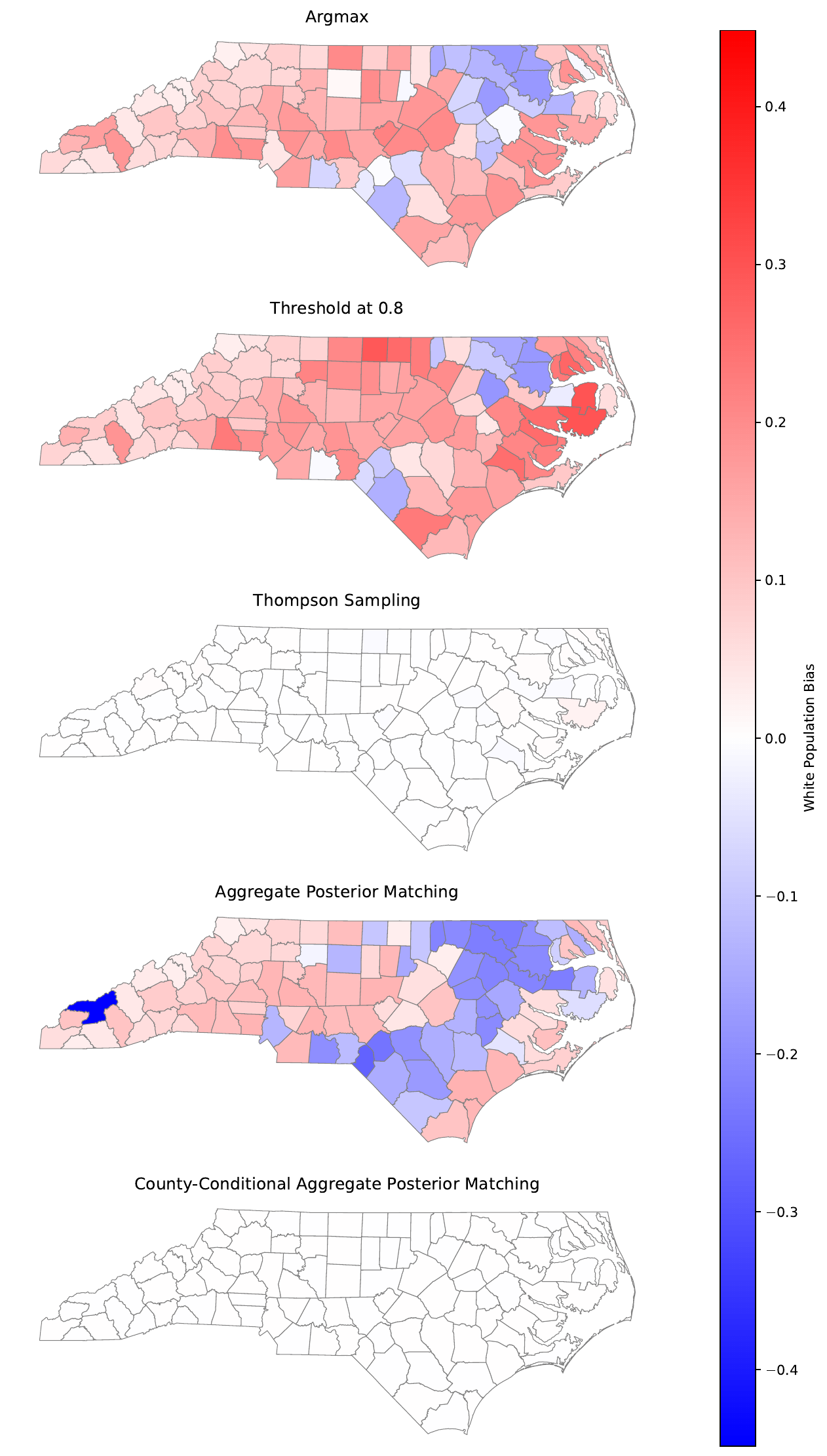}
	\caption{Raking.}
	\label{fig:biasmap-raking}
    \end{subfigure}
	\caption{Replication of \Cref{fig:maps} using the public data and predictions from \citet{greengard2024improved}. 
	}
	\label{fig:maps-appendix}
\end{figure}

\begin{table}[!h]
\begin{subtable}{\textwidth}
\centering
\begin{tabular}{|c|c|c|c|}
    \hline
    & \text{Accuracy with} & \text{Fidelity to} & \text{Fidelity to} \\
    & \text{Ground Truth} & \text{Ground Truth} & \text{Aggregate Posterior} \\
    \hline\hline
    Threshold at 0.8 (Among $53.4\%$ Not Dropped) & \textbf{0.904} & -0.420 & -0.398 \\ 
    Integer Program, $\gamma = 0.9$ & 0.765 & -0.048 & -0.046 \\
    Argmax & 0.774 & -0.210 & -0.213 \\ 
    Aggregate Posterior Matching & 0.751 & -0.061 & \textbf{-0.000} \\ 
    Data-Driven Threshold Matching Heuristic & 0.756 & -0.049 & -0.021 \\  
    Thompson Sampling & 0.693 & -0.061 & -0.001 \\ 
    Top-2 Sampling & 0.713 & -0.037 & -0.053 \\ 
    Ground Truth Marginal Matching & 0.759 & \textbf{-0.000} & -0.060 \\ \hline
\end{tabular}
\caption{BISG.} 
\label{tab:table-bisg} 
\end{subtable}

\begin{subtable}{\textwidth}
\centering
\begin{tabular}{|c|c|c|c|}
    \hline
    & \text{Accuracy with} & \text{Fidelity to} & \text{Fidelity to} \\
    & \text{Ground Truth} & \text{Ground Truth} & \text{Aggregate Posterior} \\
    \hline\hline
    Threshold at 0.8 (Among $45.9\%$ Not Dropped) & \textbf{0.913} & -0.386 & -0.450 \\ 
    Integer Program, $\gamma = 0.9$ & 0.755 & -0.085 & -0.077 \\ 
    Argmax & 0.768 & -0.082 & -0.189 \\ 
    Aggregate Posterior Matching & 0.730 & -0.144 & \textbf{-0.000} \\ 
    Data-Driven Threshold Matching Heuristic & 0.739 & -0.142 & -0.035 \\ 
    Thompson Sampling & 0.669 & -0.144 & -0.001 \\ 
    Top-2 Sampling & 0.698 & -0.104 & -0.079 \\ 
    Ground Truth Marginal Matching & 0.759 & \textbf{-0.000} & -0.144 \\ 
    \hline
\end{tabular}
\caption{Raking.}
\label{tab:table-raking} 
\end{subtable}
\caption{Replicating \Cref{tab:table-NC} with the data and predicted probabilities in \citet{greengard2024improved}.}
\label{tab:table-public}
\end{table}

\subsection{Replication on other states} \label{sec:states}

We replicate our results (\Cref{fig:votercounts-appendix,fig:pareto-appendix,tab:table-appendix} corresponding to \Cref{fig:votercounts,fig:pareto-NC,tab:table-NC}), with additional methods displayed, on similarly processed commercial voter file data from South Carolina ($N=\num{3494505}$) and Florida ($N=\num{13766639}$), which also report self-reported race. For further robustness, we also replicate on data from all three states joined together ($N=\num{23635780}$), as well as all individuals across in the United States who have self-reported race/ethnicity data in the commercial file (i.e., including other states with self-reported race and former residents of NC/SC/FL that have moved elsewhere) ($N=\num{39685870}$).

\begin{figure}[!h]
\centering
\begin{subfigure}{.45\textwidth}
\centering
\includegraphics[width=\linewidth]{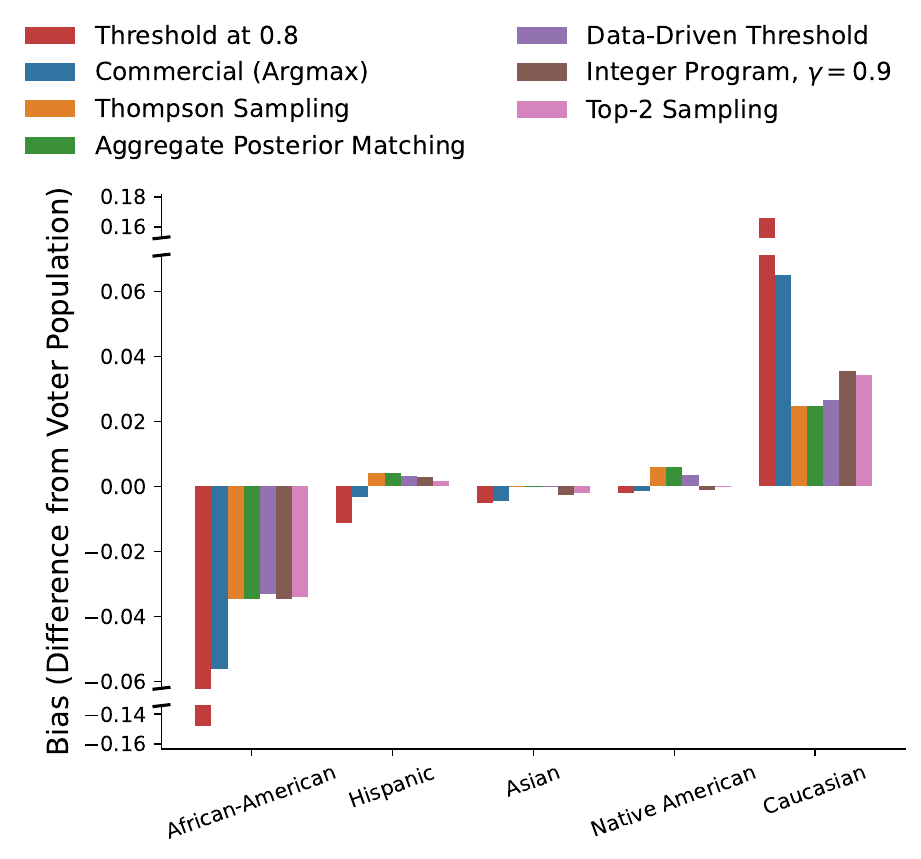}
\caption{South Carolina.}
\label{fig:votercounts-SC}   
\end{subfigure}
\hfill
\begin{subfigure}{.45\textwidth}
\centering
\includegraphics[width=\linewidth]{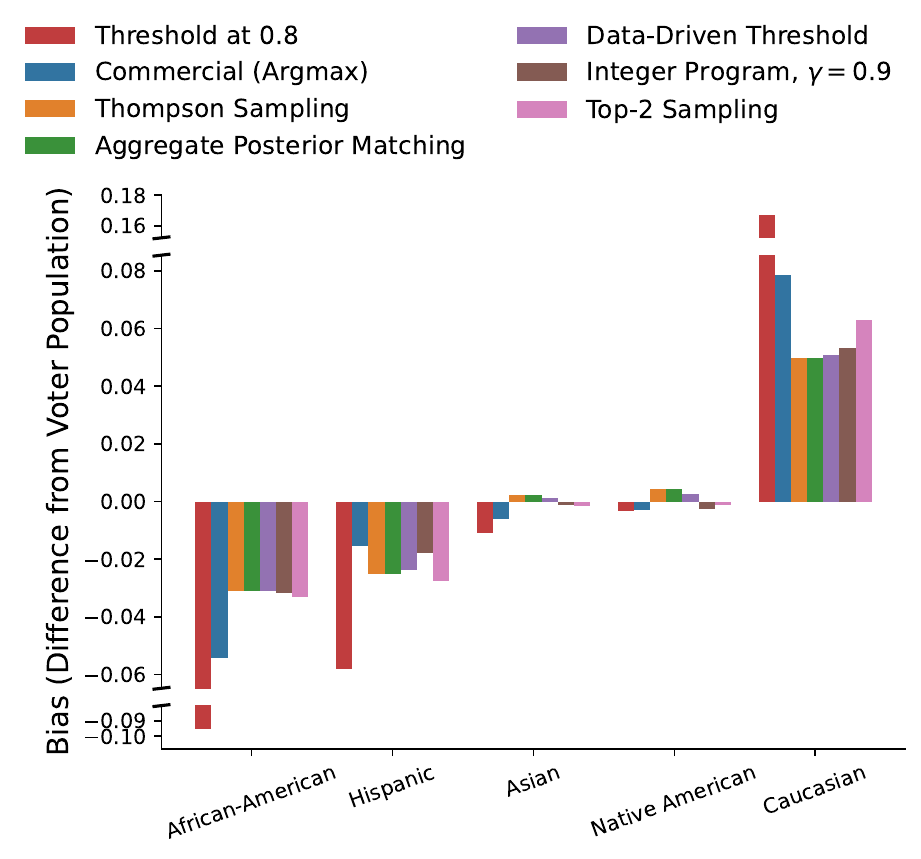}
\caption{Florida.}
\label{fig:votercounts-FL}    
\end{subfigure}
\begin{subfigure}{.45\textwidth}
\centering
\includegraphics[width=\linewidth]{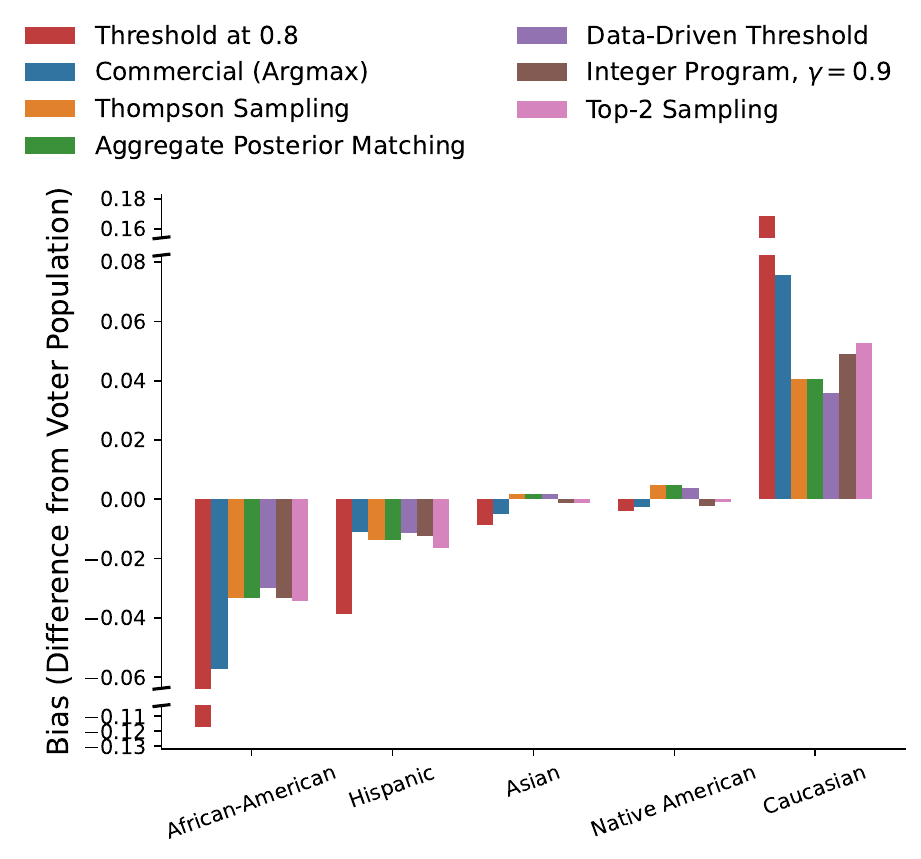}
\caption{North Carolina, South Carolina, and Florida combined.}
\label{fig:votercounts-3states}
\end{subfigure}
\hfill
\begin{subfigure}{.45\textwidth}
\centering
\includegraphics[width=\linewidth]{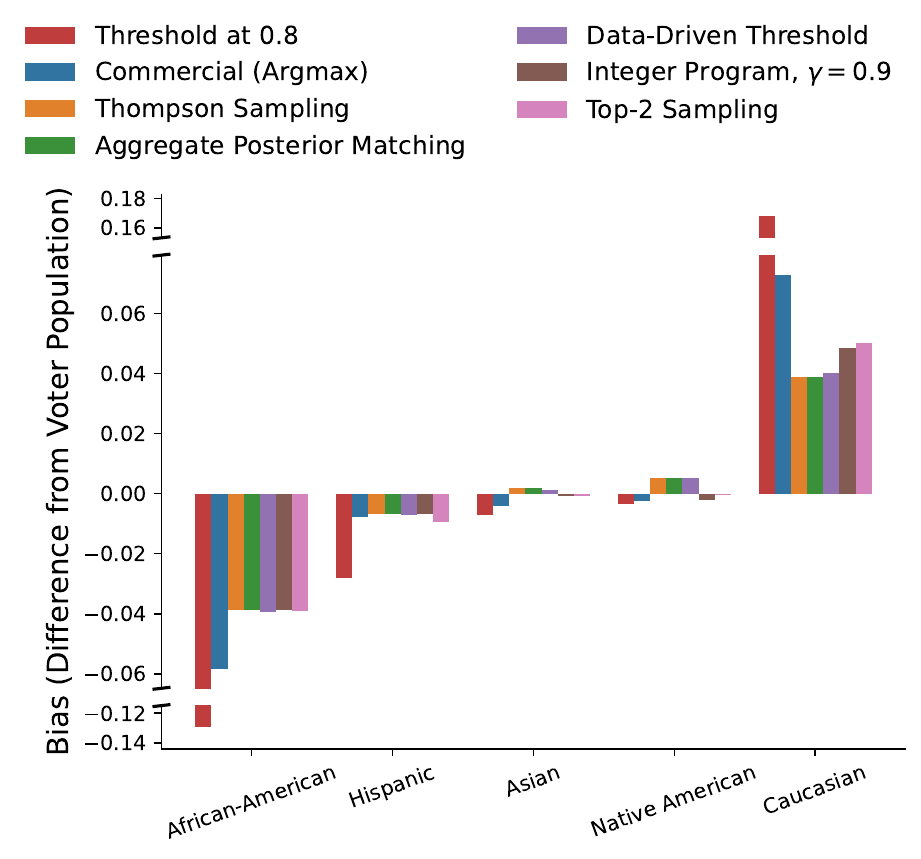}
\caption{All self-reported race data in the commercial voter file across the US.}
\label{fig:votercounts-US}   
\end{subfigure}
\caption{Replication of \Cref{fig:votercounts} with additional methods and data for other states from the commercial voter file.} 
\label{fig:votercounts-appendix}
\end{figure}

\begin{figure}[!h]
\centering
\begin{subfigure}{.45\textwidth}
\centering
\includegraphics[width=\linewidth]{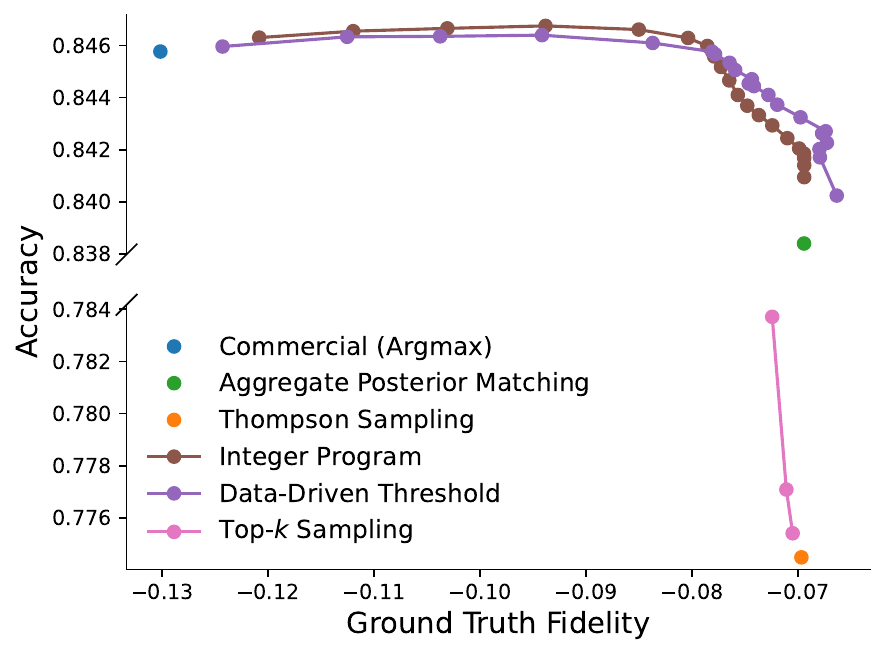}
\caption{South Carolina.}
\label{fig:pareto-SC}   
\end{subfigure}
\hfill
\begin{subfigure}{.45\textwidth}
\centering
\includegraphics[width=\linewidth]{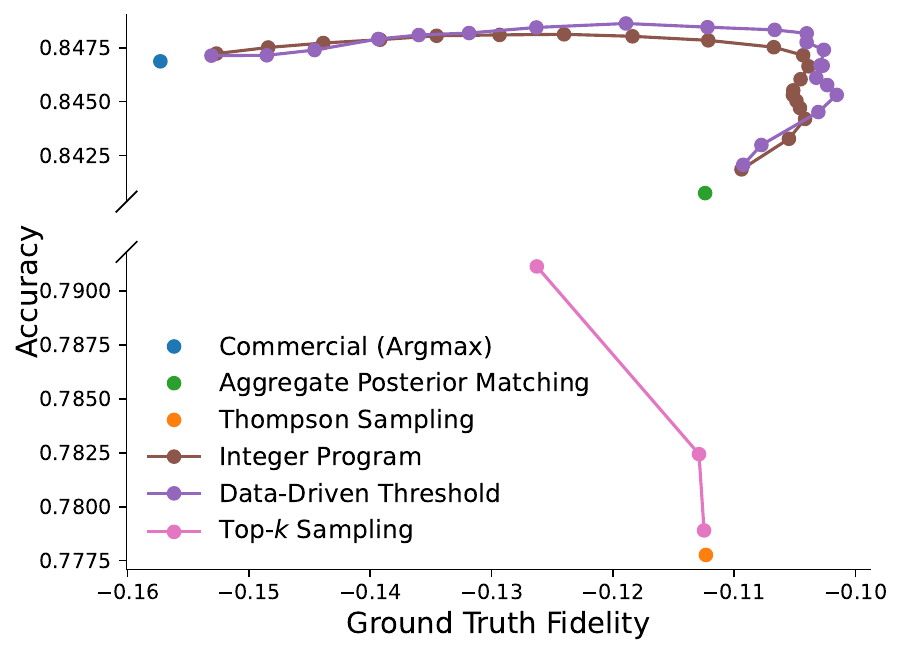}
\caption{Florida.}
\label{fig:pareto-FL}    
\end{subfigure}
\begin{subfigure}{.45\textwidth}
\centering
\includegraphics[width=\linewidth]{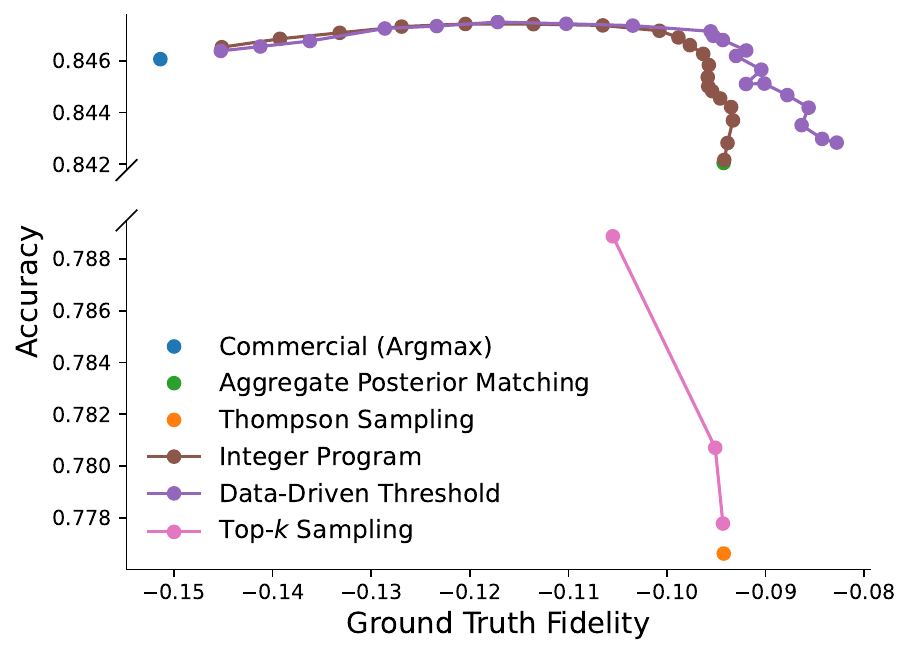}
\caption{North Carolina, South Carolina, and Florida combined.}
\label{fig:pareto-3states}\end{subfigure}
\hfill
\begin{subfigure}{.45\textwidth}
\centering
\includegraphics[width=\linewidth]{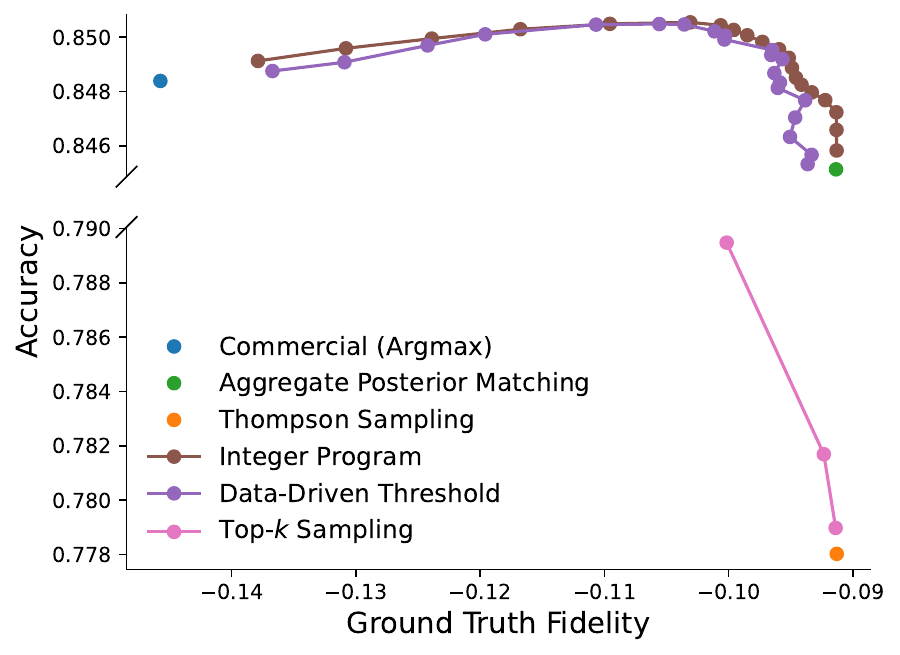}
\caption{All self-reported race data in the commercial voter file across the US.} \label{fig:pareto-US}   
\end{subfigure}
\caption{Replication of \Cref{fig:pareto-NC} with additional methods and data for other states from the commercial voter file.}
\label{fig:pareto-appendix}
\end{figure}

\begin{table}
\begin{subtable}{\textwidth}
\centering
\begin{tabular}{|c|c|c|c|}
    \hline
    & \text{Accuracy with} & \text{Fidelity to} & \text{Fidelity to} \\
    & \text{Ground Truth} & \text{Ground Truth} & \text{Aggregate Posterior} \\
    \hline\hline
    Threshold at 0.8 (Among $66.8\%$ Not Dropped) & 0.926 & -0.332 & -0.283 \\ 
    Integer Program, $\gamma = 0.9$ & 0.845 & -0.076 & -0.021 \\ 
    Commercial (Argmax) & 0.846 & -0.130 & -0.081 \\ 
    Aggregate Posterior Matching & 0.838 & -0.069 & -0.000 \\ 
    Data-Driven Threshold Matching Heuristic & 0.840 & -0.066 & -0.007 \\ 
    Thompson Sampling & 0.774 & -0.070 & -0.000 \\ 
    Top-2 Sampling & 0.784 & -0.072 & -0.020 \\ 
    Ground Truth Marginal Matching & 0.840 & -0.000 & -0.069 \\ 
    \hline
\end{tabular}
\caption{South Carolina.}
\label{tab:table-SC} 
\end{subtable}
\begin{subtable}{\textwidth}
\centering
\begin{tabular}{|c|c|c|c|}
    \hline
    & \text{Accuracy with} & \text{Fidelity to} & \text{Fidelity to} \\
    & \text{Ground Truth} & \text{Ground Truth} & \text{Aggregate Posterior} \\
    \hline\hline
    Threshold at 0.8 (Among $70.8\%$ Not Dropped) & 0.918 & -0.335 & -0.235 \\ 
    Integer Program, $\gamma = 0.9$ & 0.848 & -0.107 & -0.022 \\ 
    Commercial (Argmax) & 0.847 & -0.157 & -0.077 \\ 
    Aggregate Posterior Matching & 0.841 & -0.112 & -0.000 \\ 
    Data-Driven Threshold Matching Heuristic & 0.842 & -0.109 & -0.005 \\ 
    Thompson Sampling & 0.778 & -0.112 & -0.000 \\ 
    Top-2 Sampling & 0.791 & -0.126 & -0.027 \\ 
    Ground Truth Marginal Matching & 0.843 & -0.000 & -0.112 \\ \hline
\end{tabular}
\caption{Florida.}
\label{tab:table-FL} 
\end{subtable}
\begin{subtable}{\textwidth}
\centering
\begin{tabular}{|c|c|c|c|}
    \hline
    & \text{Accuracy with} & \text{Fidelity to} & \text{Fidelity to} \\
    & \text{Ground Truth} & \text{Ground Truth} & \text{Aggregate Posterior} \\
    \hline\hline
    Threshold at 0.8 (Among $69.4\%$ Not Dropped) & 0.921 & -0.337 & -0.256 \\ 
    Integer Program, $\gamma = 0.9$ & 0.847 & -0.098 & -0.020 \\ 
    Commercial (Argmax) & 0.846 & -0.151 & -0.076 \\ 
    Aggregate Posterior Matching & 0.842 & -0.094 & -0.000 \\ 
    Data-Driven Threshold Matching Heuristic & 0.843 & -0.083 & -0.012 \\ 
    Thompson Sampling & 0.777 & -0.094 & -0.000 \\ 
    Top-2 Sampling & 0.789 & -0.105 & -0.024 \\ 
    Ground Truth Marginal Matching & 0.842 & -0.000 & -0.094 \\ \hline
\end{tabular}
\caption{North Carolina, South Carolina, and Florida combined.}
\label{tab:table-3states} 
\end{subtable}
\begin{subtable}{\textwidth}
\centering
\begin{tabular}{|c|c|c|c|}
    \hline
    & \text{Accuracy with} & \text{Fidelity to} & \text{Fidelity to} \\
    & \text{Ground Truth} & \text{Ground Truth} & \text{Aggregate Posterior} \\
    \hline\hline
    Threshold at 0.8 (Among $68.8\%$ Not Dropped) & 0.924 & -0.336 & -0.258 \\ 
    Integer Program, $\gamma = 0.9$ & 0.850 & -0.097 & -0.020 \\ 
    Commercial (Argmax) & 0.848 & -0.146 & -0.068 \\ 
    Aggregate Posterior Matching & 0.845 & -0.091 & -0.000 \\ 
    Data-Driven Threshold Matching Heuristic & 0.845 & -0.094 & -0.003 \\ 
    Thompson Sampling & 0.778 & -0.091 & -0.000 \\ 
    Top-2 Sampling & 0.789 & -0.100 & -0.023 \\ 
    Ground Truth Marginal Matching & 0.845 & -0.000 & -0.091 \\ \hline
\end{tabular}
\caption{All self-reported race data in the commercial voter file across the US.}
\label{tab:table-US} 
\end{subtable}
\caption{Replicating \Cref{tab:table-NC} with additional methods and data for other states from the commercial voter file.}
\label{tab:table-appendix}
\end{table}

\subsection{Voter File Results Without \textit{Uncoded}} \label{sec:dropuncoded}

We now replicate our results with an alternative data processing choice: dropping the $\num{205683}$ entries that the commercial voter file labels as \textit{Uncoded}, as opposed to replacing them with the argmax choice, leaving $N=\num{6168953}$ data points. \Cref{tab:table-dropuncoded} and \Cref{fig:voterbias-dropuncoded} replicate \Cref{tab:table-NC} and \Cref{fig:voterbias}, respectively. Results are qualitatively identical when comparing across decision rules. Note that the undercounting of voters of color is higher (percentage-wise) after dropping the \textit{Uncoded} individuals, as these individuals tended to be (both in self-reported ground truth and imputed argmax) disproportionately voters of color. This follow the smae intuition seen in simpler threshold rules (e.g. probability at 0.8), where a preference for highly confident points leads to increased bias.

\begin{table}[tbh]

\begin{center}
\begin{tabular}{|c|c|c|c|}
\hline
 & \text{Accuracy with} & \text{Fidelity to} & \text{Fidelity to} \\
 & \text{Ground Truth} & \text{Ground Truth Distrib.} & \text{Aggregate Posterior Distrib.} \\
\hline\hline
    Threshold at 0.8 ($30.6\%$ Dropped) & \textbf{0.929} & -0.341 & -0.282 \\ 
    Commercial (Argmax) & 0.852 & -0.157 & -0.098 \\ 
    Thompson Sampling & 0.784 & -0.075 & -0.000 \\ 
    Aggregate Posterior Matching & 0.842 & -0.076 & -0.000 \\ 
    Data-Driven Threshold & 0.844 & -0.076 & -0.007 \\ 
    Integer Program, $\gamma = 0.9$ & 0.853 & -0.093 & -0.034 \\ 
    Top-2 Sampling & 0.795 & -0.080 & -0.024 \\ 
    Ground Truth Marginal Matching & 0.844 & \textbf{-0.000} & -0.075 \\ \hline
\end{tabular}

\caption{Replicating \Cref{tab:table-NC} with \textit{Uncoded} labels removed as opposed to replaced with the argmax continuous scores. 
}
\label{tab:table-dropuncoded} 
\end{center}
\end{table}
    \begin{figure}[!h]
    \centering
    \begin{subfigure}{.3\textwidth}
        \centering
        \includegraphics[width=.95\linewidth]{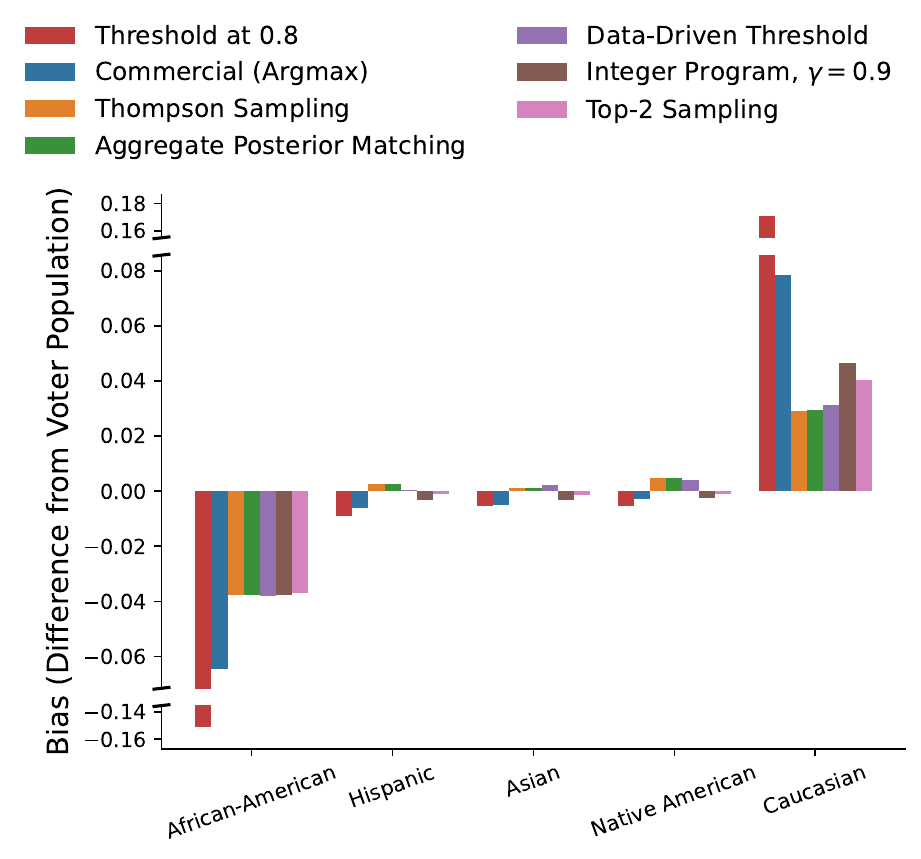}
        \caption{Replication of \Cref{fig:votercounts}}
        \label{fig:votercounts-dropuncoded}
    \end{subfigure}
    \hfill
    \begin{subfigure}{.3\textwidth}
        \centering
        \includegraphics[width=.95\linewidth]{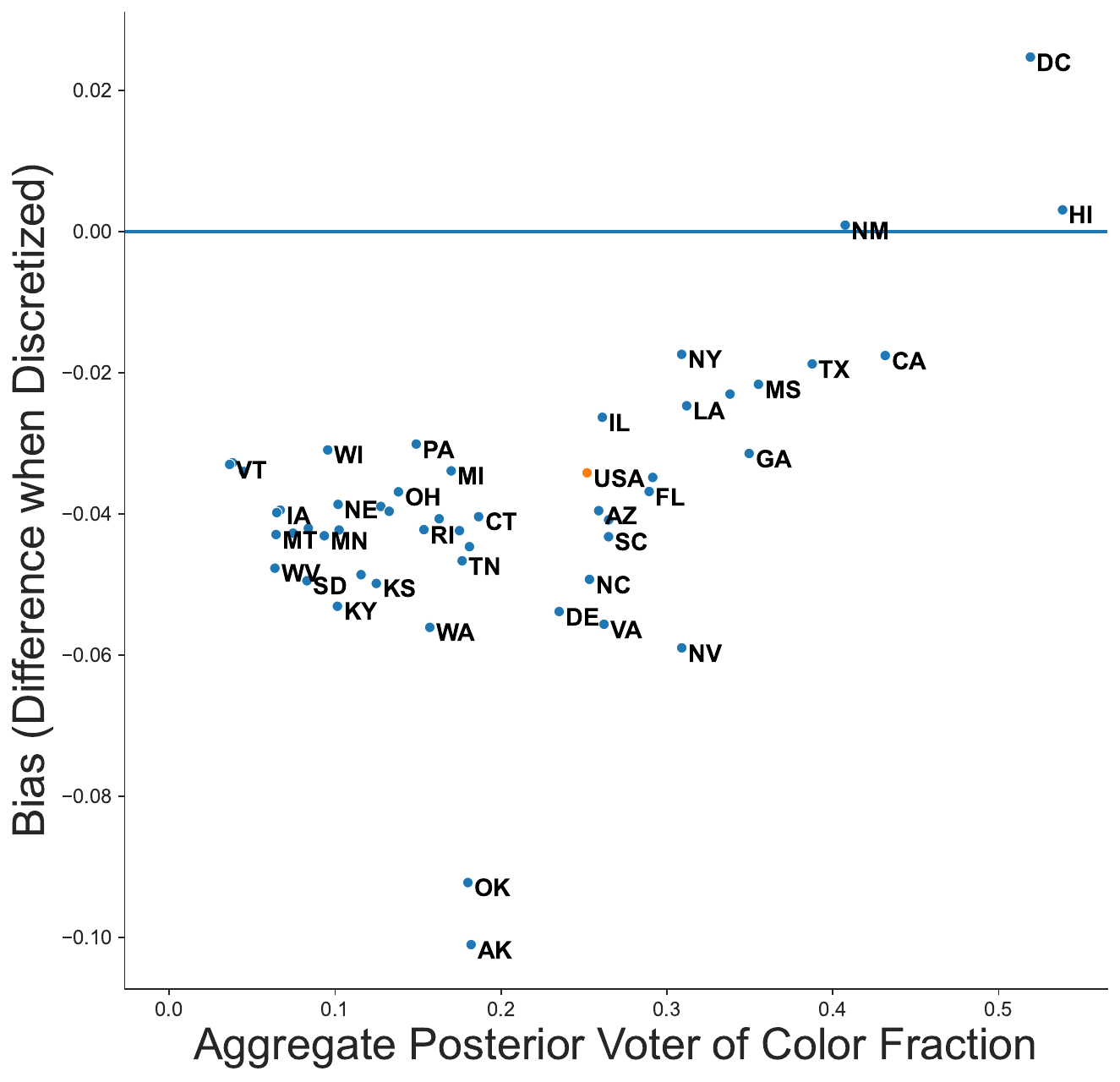}
        \caption{Replication of \Cref{fig:pocrep}}
        \label{fig:pocrep-dropuncoded}
    \end{subfigure}
    \hfill
    \begin{subfigure}{.3\textwidth}
        \centering
        \includegraphics[width=.95\linewidth]{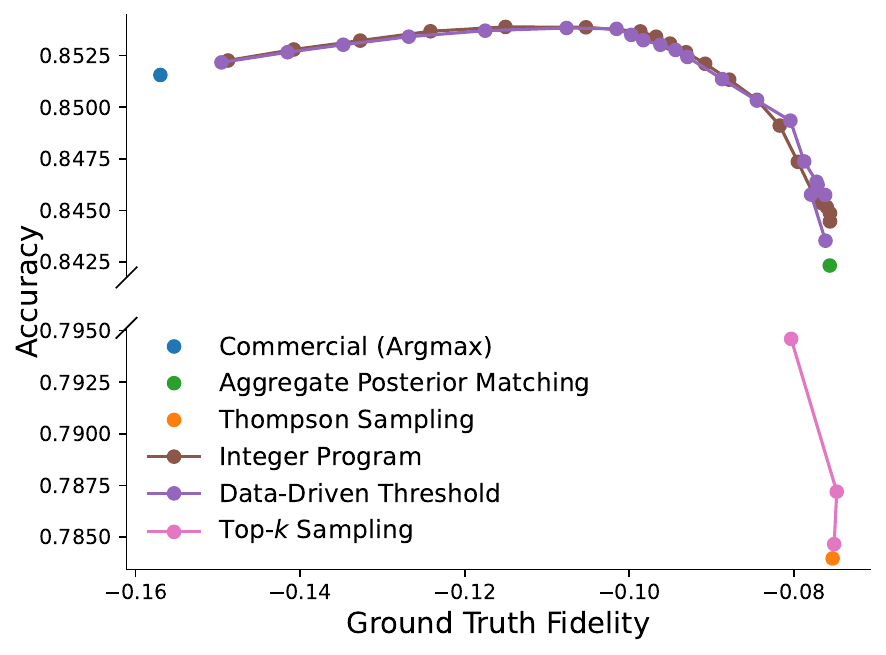}
        \caption{Replication of \Cref{fig:pareto-NC}}
        \label{fig:pareto-dropuncoded}
    \end{subfigure}    
    \caption{Replicating \Cref{fig:voterbias} with the \textit{Uncoded} category removed, $N=\num{6168953}$ registered voters. Note that in \Cref{fig:pocrep-dropuncoded}, many states across the board have lower counts for voters of color when data discretized when the \textit{Uncoded} category is removed, as such data points tend to be less white than average. }
    \label{fig:voterbias-dropuncoded}
    \end{figure}

\subsection{Additional Simplex Visualizations} \label{sec:simplex}

In \Cref{fig:simplex-appendix}, we provide additional visualizations for decision-making rules as a supplement to the figures in \Cref{fig:simplex-main}.  

\begin{figure}[!h]
    \centering
    \begin{subfigure}{.3\textwidth}
        \centering
        \includegraphics[width=.9\linewidth]{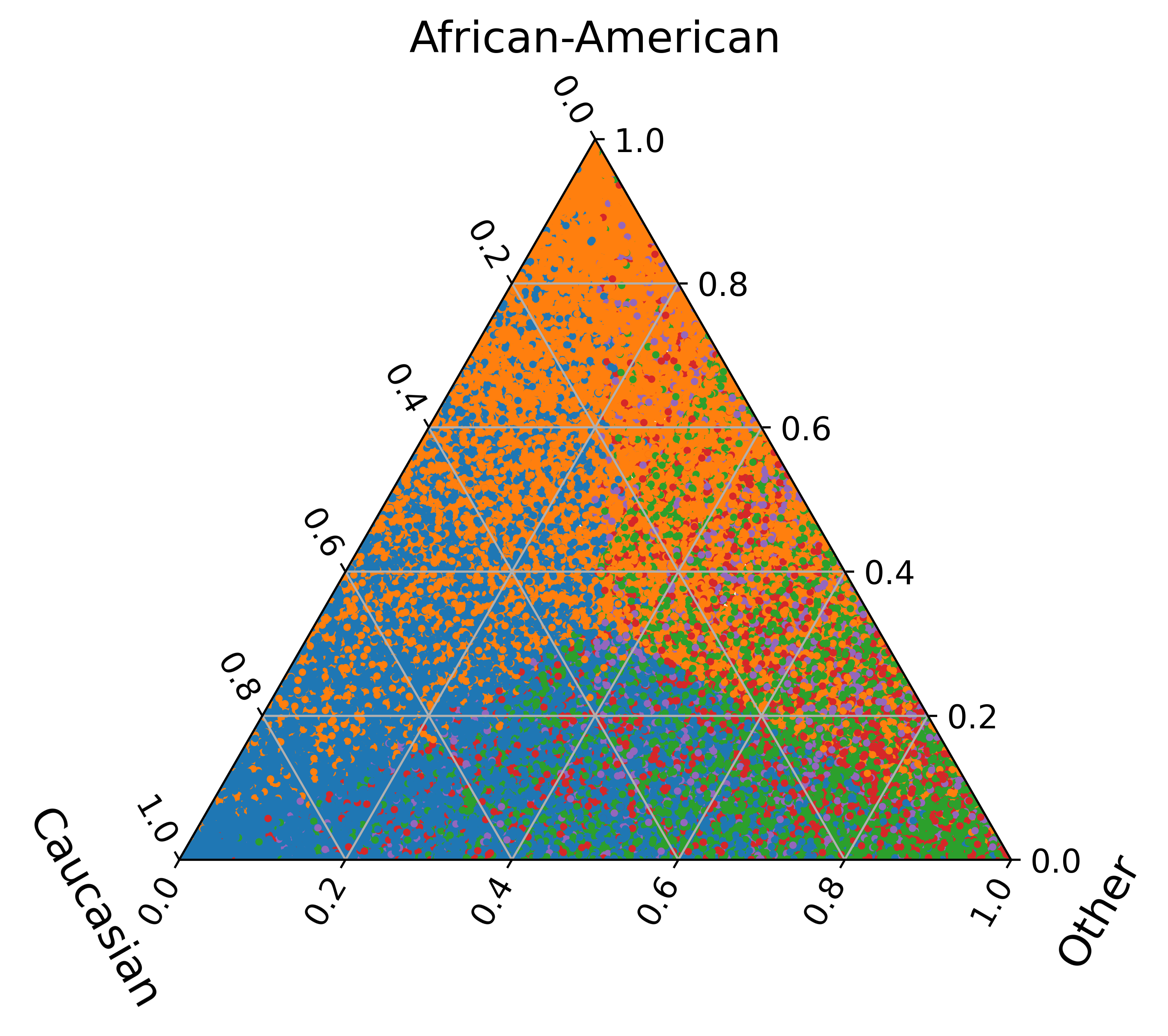}
        \caption{Top-2 Sampling, randomized according to rescaled probabilities.}
        \label{fig:simplex-top}
    \end{subfigure}        
    \hfill
    \begin{subfigure}{.3\textwidth}
        \centering
        \includegraphics[width=.9\linewidth]{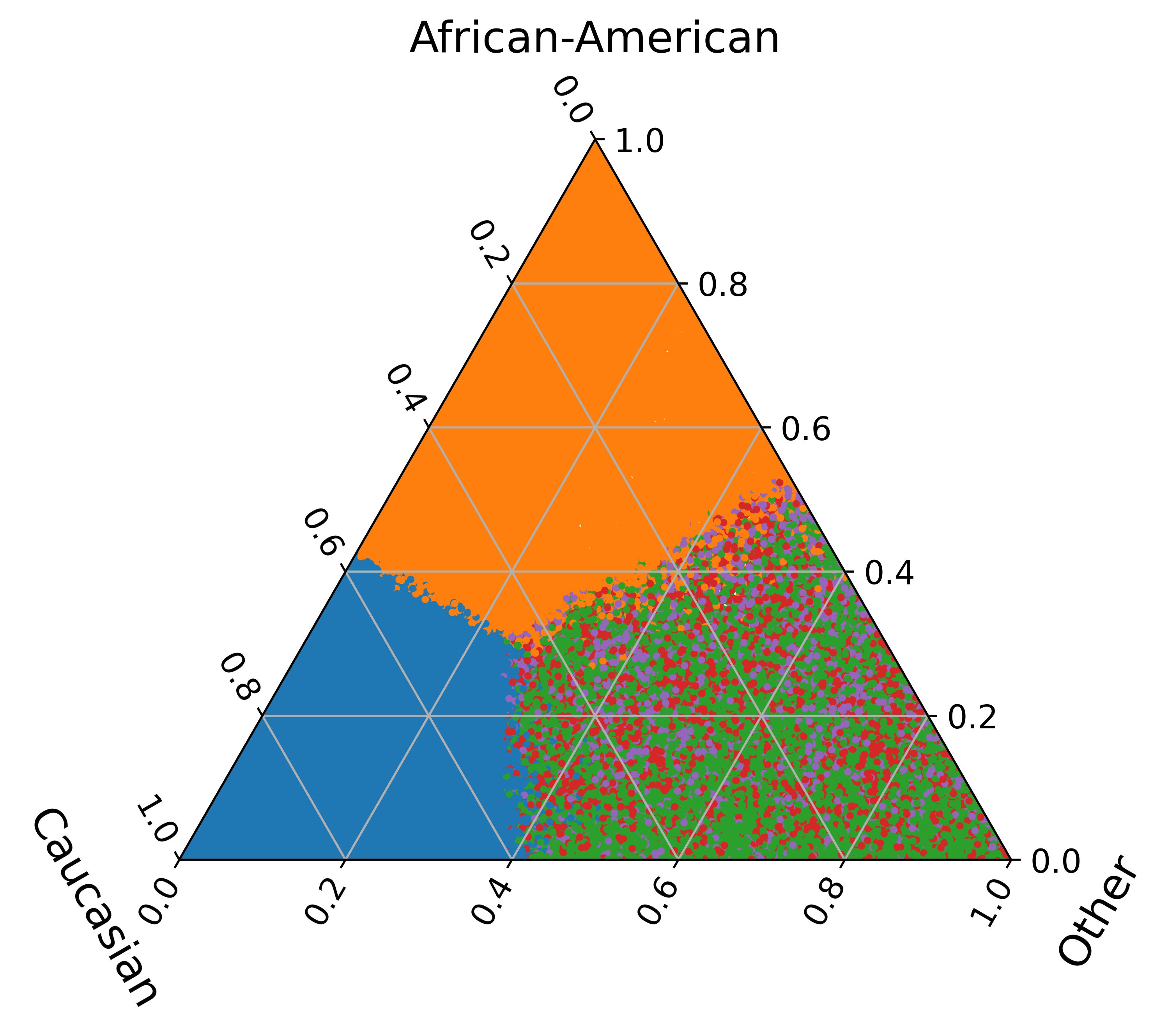}
        \caption{Integer program, with chosen parameter $\gamma=0.9$.}
        \label{fig:simplex-ip}
    \end{subfigure}
    \hfill
    \begin{subfigure}{.3\textwidth}
        \centering
        \includegraphics[width=.9\linewidth]{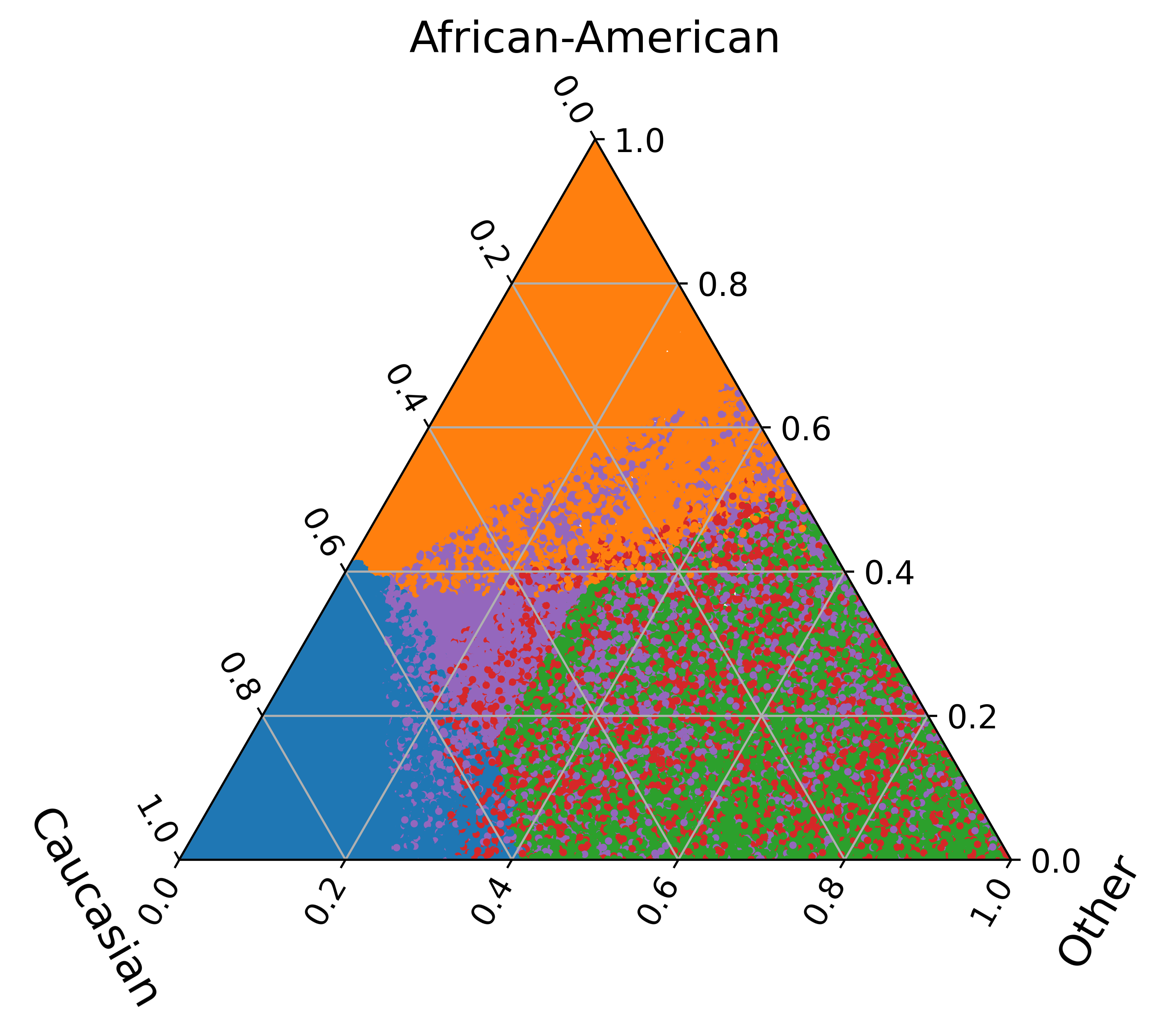}
        \caption{Data-driven threshold, approximating matching from 
       \Cref{fig:simplex-matching}.}
        \label{fig:simplex-svm}
    \end{subfigure}
\caption{Extending \Cref{fig:simplex-main}, classified labels onto a 3-dimensional probability simplex for other methods, with \textit{Hispanic}, \textit{Asian}, and \textit{Native American} probabilities aggregated into \textit{Other}.}
\label{fig:simplex-appendix}
\end{figure}

\end{document}